\def\KL{\mathrm{KL}}
\def\Real{\Rbb}
\def\Rbb{\mathbb{R}}
\def\<{\langle}
\def\>{\rangle}
\def\subto{\text{\rm subject to }}
\newcommand{\tr}[1]{{\mathrm{tr}(#1)}}
\newtheorem{theorem}{Theorem}
\newtheorem{lemma}[theorem]{Lemma}
\newtheorem{proposition}[theorem]{Proposition}
\newtheorem{assumption}{Assumption}
\newtheorem{definition}{Definition}
\newtheorem{example}{Example}
\def\0{\mbox{\bf 0}}
\def\1{\mbox{\bf 1}}
\def\2{\mbox{\bf 2}}
\def\3{\mbox{\bf 3}}
\def\4{\mbox{\bf 4}}
\def\5{\mbox{\bf 5}}
\def\6{\mbox{\bf 6}}
\def\7{\mbox{\bf 7}}
\def\8{\mbox{\bf 8}}
\def\9{\mbox{\bf 9}}
\definecolor{cyan}{cmyk}{1,0,0,0}
\definecolor{lightcyan}{cmyk}{0.5,0,0,0}
\definecolor{pastelcyan}{cmyk}{0.25,0,0,0}
\definecolor{magenta}{cmyk}{0,1,0,0}
\definecolor{yellow}{cmyk}{0,0,1,0}
\definecolor{lightyellow}{cmyk}{0,0,0.5,0}
\definecolor{pastelyellow}{cmyk}{0,0,0.25,0}
\definecolor{black}{cmyk}{0,0,0,1}
\definecolor{darkgray}{cmyk}{0,0,0,0.75}
\definecolor{gray}{cmyk}{0,0,0,0.5}
\definecolor{lightgray}{cmyk}{0,0,0,0.25}
\definecolor{white}{cmyk}{0,0,0,0}
\definecolor{red}{cmyk}{0,1,1,0}
\definecolor{orange}{cmyk}{0,0.5,1,0}
\definecolor{scarlet}{cmyk}{0,1,0.5,0}
\definecolor{brown}{cmyk}{0.5,0.75,1,0}
\definecolor{camel}{cmyk}{0.25,0.375,0.5,0}
\definecolor{cream}{cmyk}{0,0.2,0.3,0}
\definecolor{green}{cmyk}{1,0,1,0}
\definecolor{lightgreen}{cmyk}{0.5,0,0.5,0}
\definecolor{pastelgreen}{cmyk}{0.25,0,0.25,0}
\definecolor{mossgreen}{cmyk}{0.64,0.4,1,0}
\definecolor{yellowgreen}{cmyk}{0.5,0,1,0}
\definecolor{skyblue}{cmyk}{0.4,0.16,0,0}
\definecolor{royal}{cmyk}{1.0,0.5,0,0}
\definecolor{navyblue}{cmyk}{0.9,0.75,0.5,0}
\definecolor{lightnavy}{cmyk}{0.4,0.3,0.2,0}
\definecolor{blue}{cmyk}{1,1,0,0}
\definecolor{lightblue}{cmyk}{0.5,0.5,0,0}
\definecolor{lavender}{cmyk}{0.25,0.25,0,0}
\definecolor{violet}{cmyk}{0.75,1,0.25,0}
\definecolor{purple}{cmyk}{0.5,1,0.5,0}
\definecolor{pink}{cmyk}{0,0.5,0,0}
\definecolor{pastelpink}{cmyk}{0,0.25,0,0}
\def\PD{\mathrm{PD}}
\title{A Bregman Extension of quasi-Newton updates II:\\ Convergence and Robustness Properties}
\author{
  Takafumi Kanamori\\ Nagoya University \\ \tt{kanamori@is.nagoya-u.ac.jp}
  \and
  Atsumi Ohara\\ Osaka University\\ \tt{ohara@sys.es.osaka-u.ac.jp}
 }
\date{}
\begin{document}
\maketitle
\begin{abstract}
 We propose an extension of quasi-Newton methods, and investigate the convergence and the
 robustness properties of the proposed update formulae for the approximate Hessian
 matrix. 
 Fletcher has studied a variational problem which derives the approximate Hessian update
 formula of the quasi-Newton methods. 
 We point out that the variational problem is identical to optimization of the
 Kullback-Leibler divergence, which is a discrepancy measure 
 between two probability distributions. 
 Then, we introduce the Bregman divergence as an extension of the Kullback-Leibler
 divergence, and derive extended quasi-Newton update formulae based on the variational
 problem with the Bregman divergence. 
 The proposed update formulae belong to a class of self-scaling quasi-Newton methods. 
 We study the convergence property of the proposed quasi-Newton method, and 
 moreover, we apply the tools in the robust statistics to analyze the robustness
 property of the Hessian update formulae against the numerical rounding errors included in
 the line search for the step length. 
 As the result, we found that the influence of the inexact line search is bounded only for
 the standard BFGS formula for the Hessian approximation. 
 Numerical studies are conducted to verify the usefulness of the tools borrowed from
 robust statistics. 
\end{abstract}

\section{Introduction}
\label{sec:Introduction}
We consider quasi-Newton methods for the unconstrained optimization problem 
\begin{align}
 \label{eqn:main_opt_problem}
 \text{minimize}\  f(x),\quad x\in\Real^n, 
\end{align}
in which the function $f:\Real^n\rightarrow\Real$ is twice continuously
differentiable on $\Real^n$. 
The quasi-Newton method is known to be one of the most successful methods for
unconstrained function minimization. 
Details are shown in \cite{nocedal99:_numer_optim,luenberger08:_linear_and_nonlin_progr}
and references therein. 

The main purpose of this paper is to present an extended framework of quasi-Newton method,
and to study the robustness property of quasi-Newton update formulae against numerical
errors of line search. %over the proposed framework. 
There are mainly two standard quasi-Newton method; one
is the DFP formula and the other is the BFGS formula. 
Fletcher \cite{fletcher91:_new_resul_for_quasi_newton_formul} has pointed out that the 
standard formulae, DFP and BFGS, are obtained as the optimal solution of 
a variational problem over the set of positive definite matrices. 
Along this line, we extend the quasi-Newton update formula. 
Then, we study the robustness property of the extended quasi-Newton methods, where 
we apply some techniques exploited in the field of robust statistics
\cite{Hampel_etal86}. 
%Then we introduce the statistical
%In order to study the robustness property of Hessian update formula, 
%we apply some techniques exploited in the field of robust statistics. 
%Then we introduce the statistical
%The Broyden family is defined by mixing these two methods. 

We briefly introduce quasi-Newton formulae and its variational result. 
In quasi-Newton method, a sequence $\{x_k\}_{k=0}^{\infty}\subset\Real^n$ is successively
generated in a manner such that $x_{k+1}=x_k-\alpha_k B_k^{-1}\nabla f(x_k)$. 
The coefficient $\alpha_k\in\Real$ is a step-size computed by a line search, 
and $B_k$ is a positive definite matrix approximating the Hessian matrix 
$\nabla^2 f(x_k)$ at the point $x_k$. 
Let $s_k$ and $y_k$ be column vectors defined by
\begin{align*}
 s_k=x_{k+1}-x_k=-\alpha_kB_k^{-1}\nabla f(x_k),
 \qquad y_k=\nabla f(x_{k+1})-\nabla f(x_k). 
\end{align*}
We need a Hessian approximation $B_{k+1}$ for $\nabla^2 f(x_{k+1})$ 
to keep on the computation. 
In the DFP method, $B_{k+1}$ is given by 
\begin{align}
 \label{eqn:DFP-update-formula}
 B_{k+1}
 &=
 B^{DFP}[B_k;s_k,y_k] 
 := B_k-\frac{B_ks_ky_k^\top+y_ks_k^\top B_k}{s_k^\top y_k}
 +s_k^\top B_ks_k \frac{y_k y_k^\top}{(s_k^\top y_k)^2}+\frac{y_ky_k^\top}{s_k^\top y_k}, 
\end{align}
and the BFGS method provides the different formula such that 
\begin{align}
 \label{eqn:BFGS-update-formula}
 B_{k+1}
 &= B^{BFGS}[B_k;s_k,y_k]
 :=B_k-\frac{B_ks_ks_k^\top B_k}{s_k^\top B_ks_k}+\frac{y_ky_k^\top}{s_k^\top y_k}, 
\end{align}
%In \eqref{eqn:DFP-update-formula} and \eqref{eqn:BFGS-update-formula}
When $B_k\in\mathrm{PD}(n)$ and $s_k^\top y_k>0$ hold, 
both $B^{DFP}[B_k;s_k,y_k]$ and $B^{BFGS}[B_k;s_k,y_k]$ are also positive definite
matrices. 
%The update formulae $B^{DFP}[B_k;s_k,y_k]$ and $B^{BFGS}[B_k;s_k,y_k]$ are written as
%$B^{DFP}[B_k]$ or $B^{BFGS}[B_k]$, if there is no confusion.  
In practice, the Cholesky decomposition of $B_k$ will be
successively updated in order to compute the search direction $-B_k^{-1}\nabla f(x_k)$
efficiently. The idea of updating Cholesky factors is pioneered by Gill and Murray
\cite{gill72:_quasi_newton_method_for_uncon_optim}. 
Note that the equality 
\begin{align*}
 B^{DFP}[B_k;s_k,y_k]^{-1}=B^{BFGS}[B_k^{-1};y_k,s_k]
\end{align*}
holds. 
Hence, the update formula for the inverse $H_{k+1}=B_{k+1}^{-1}$ can be directly derived 
from $H_k=B_k^{-1}$ without computing inversion of matrix. 

%\begin{enumerate}
% \item We intend to derive a new quasi-Newton update formula based on 
%       the geometrical structure over $PD(n)$. 
%% \item Geometrical structure and algorithm
%% \item invariance
% \item convergence
% \item robustness
%% \item sparsity   
%\end{enumerate}

%\subsection{A Variational View of quasi-Newton Updates}
%\label{sec:Variational_View_quasi-Newton}

We introduce a variational approach in quasi-Newton methods. 
%Consider the unconstrained optimization problem \eqref{eqn:main_opt_problem}. 
Let $\PD(n)$ be the set of all $n$ by $n$ symmetric positive definite matrices, and 
the function $\psi:\mathrm{PD}(n)\rightarrow\Real$ be a strictly convex function 
over $\PD(n)$ defined by 
\begin{align*}
 \psi(A)=\tr{A}-\log\det{A}. 
\end{align*}
Fletcher \cite{fletcher91:_new_resul_for_quasi_newton_formul} has shown that the DFP
update formula \eqref{eqn:DFP-update-formula} is obtained as the unique solution 
of the constraint optimization problem, 
\begin{align*}
 \min_{B\in\mathrm{PD}(n)}\ \psi(B_k^{1/2}B^{-1}B_k^{1/2})\quad \subto\ Bs_k=y_k, 
\end{align*}
where $A^{1/2}$ for $A\in\PD(n)$ is the matrix satisfying $A^{1/2}\in\PD(n)$
and $(A^{1/2})^2=A$. 
The BFGS formula is also obtained as the optimal solution of 
\begin{align*}
 \min_{B\in\mathrm{PD}(n)}\ \psi(B_k^{-1/2}BB_k^{-1/2})\quad \subto\ Bs_k=y_k, 
\end{align*}
in which $B_k^{-1/2}$ denotes $(B_k^{-1})^{1/2}$ or equivalently $(B_k^{1/2})^{-1}$. 

It will be worthwhile to point out that the function $\psi$ is identical to
Kullback-Leibler(KL) divergence \cite{AmariNagaoka00,kullback51:_infor_and_suffic} 
up to an additive constant. 
Let $N_n(0,P)$ be the $n$ dimensional Gaussian distribution with mean zero and
variance-covariance matrix $P\in\mathrm{PD}(n)$, then the KL-divergence between $N_n(0,P)$
and $N_n(0,Q)$ is defined by 
\begin{align*}
 \mathrm{KL}(P,Q)
 &=\tr{PQ^{-1}}-\log\det(PQ^{-1})-n
\end{align*}
which is equal to $\psi(Q^{-1/2}PQ^{-1/2})-n$. 
The KL-divergence is regarded as a generalization of squared  distance over the space of
probability distributions. 
%, though rigorously the square root of KL-divergence does not satify the definition of
%the distance.  
%Note that the non-negativity $\mathrm{KL}(P,Q)\geq 0$ holds. 
Using the KL-divergence, we can represent the update formulas as the optimal solution of
the following minimization problems, 
\begin{align}
 \text{(DFP)}\qquad&\min_{B\in\mathrm{PD}(n)}\  \mathrm{KL}(B_k,B)\quad \subto\  Bs_k=y_k,
 \label{eqn:DFP}\\
 \text{(BFGS)}\qquad&\min_{B\in\mathrm{PD}(n)}\ \mathrm{KL}(B,B_k)\quad \subto\  Bs_k=y_k. 
 \label{eqn:BFGS}
\end{align} 
The KL-divergence is asymmetric, that is, $\KL(P,Q)\neq \KL(Q,P)$ in general. 
Hence the above problems will provide different solutions. 
%KL-divergence over multinomial Gaussian distributions represents a discrepancy between 
%two positive definite matrices. 

Here is the brief outline of the article. 
In Section \ref{sec:Bregman_Div} we introduce the so-called Bregman divergence 
which is an extension of the KL-divergence. 
In Section \ref{sec:extension_quasi-Newton_updates}, 
an extended quasi-Newton formula is derived based on the Bregman divergence. 
In Section \ref{sec:Convergence_Analysis}, the convergence property of the proposed
quasi-Newton method is studied, and 
Section \ref{sec:Robustness} is devoted to discuss the robustness of the Hessian update
formula. 
Numerical simulations are presented in Section \ref{sec:Numerical_Studies}. 
We conclude with a discussion and outlook in Section \ref{sec:Concluding_Remarks}. 
Some proofs of the theorems are postponed to Appendix. 

Throughout the paper, we use the following notations: 
The set of positive real numbers are denoted as $\Real_+\subset\Real$. 
Let $\det{A}$ be the determinant of square matrix $A$, and 
$\mathrm{GL}(n)$ denotes the set of $n$ by $n$ non-degenerate real matrices. 
%$\mathrm{SL}(n)\subset \mathrm{GL}(n)$ is the set of $n$ by $n$ non-degenerate real
%matrices with determinant $1$, that is, 
%$\mathrm{SL}(n)=\{A\in\mathrm{GL}(n)~|~\det{A}=1\}$. 
The set of all $n$ by $n$ real symmetric matrices is denoted as $\mathrm{Sym}(n)$, and 
let $\mathrm{PD}(n)\subset\mathrm{GL}(n)\cap\mathrm{Sym}(n)$ be the set of $n$ by $n$
symmetric positive definite matrices. 
For two square matrices $A,\,B$, the inner product 
$\<A,B\>$ is defined by $\tr{A B^\top}$, and $\|A\|_F$ is the Frobenius norm defined by
the square root of $\<A,A\>$. 
Throughout the paper we only deal with the inner product of symmetric matrices, and the 
transposition in the trace will be dropped. For a vector $x$, $\|x\|$ denotes the
Euclidean norm. The first and second order derivative of a function
$f:\Real\rightarrow\Real$ are denoted as $f'$ and $f''$, respectively.

\section{Bregman Divergence induced from Potential Functions}
\label{sec:Bregman_Div}
As introduced in Section \ref{sec:Introduction}, 
the update formulae of the DFP and the BFGS methods are 
derived from the optimization problem of KL-divergence. 
In this section we introduce Bregman divergence 
\cite{bregman67:_relax_method_of_findin_common} which is an extension of the
KL-divergence. Especially we focus on the Bregman divergence induced from potential 
function. Then, we present extended quasi-Newton formulae derived from the variational
problem for the Bregman divergence. 

Let $\varphi:\PD(n)\rightarrow\Real$ be a differentiable, strictly convex function that
maps positive definite matrices to real numbers. 
We define {\em Bregman divergence} of the matrix $P$ from the matrix $Q$ as 
\begin{align}
 D(P,Q)=\varphi(P)-\varphi(Q)-\<\nabla\varphi(Q),P-Q\>, 
 \label{eqn:def_bregman_div}
\end{align}
where $\nabla\varphi(Q)$ is the $n$ by $n$ matrix whose $(i,j)$ element is given as
$\frac{\partial\varphi}{\partial Q_{ij}}(Q)$. 
The strict convexity of $\varphi$ guarantees that $D(P,Q)$ is non-negative and equals to
zero if and only if $P=Q$ holds. 
Figure \ref{fig:bregman-div} illustrates the relation between the function 
$\varphi$ and the Bregman divergence. 
Note that $D(P,Q)$ is convex in $P$ but not necessarily
convex in $Q$. 
Bregman divergences have been well studied for nearness problems in the fields of
statistics and machine learning 
\cite{banerjee05:_clust_with_bregm_diver, 
dhillon07:_matrix_nearn_probl_with_bregm_diver,
murata04:_infor_geomet_u_boost_bregm_diver}. 
\begin{figure}
 \begin{center}
\includegraphics[scale=0.5]{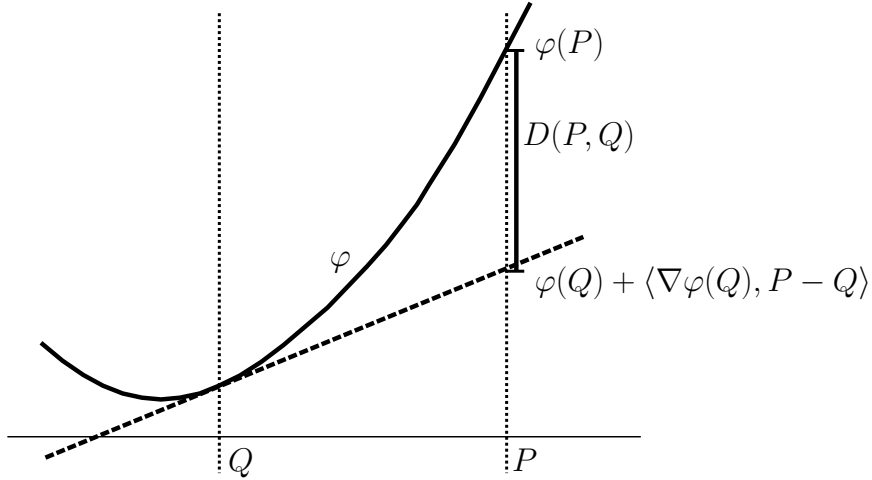}  
  \caption{The Bregman divergence defined by the strictly convex function
  $\varphi:\PD(n)\rightarrow\Real$. 
  Due to the strict convexity of $\varphi$, the function $\varphi(P)$ lies above 
  its tangents $\varphi(Q)+\<\nabla\varphi(Q),P-Q\>$. 
  Hence the non-negativity of the Bregman divergence $D(P,Q)$ is guaranteed. }
  \label{fig:bregman-div}
 \end{center}
\end{figure}

In this paper, we focus on the Bregman divergence induced from potential function
\cite{ohara05:_geomet_posit_defin_matric_and}. 
Let $V:\Real_{+}\rightarrow\Real$ be a strictly convex, decreasing, and third order 
continuously differentiable function. For the derivative $V'$, the inequality $V'<0$ holds
from the assumption. Indeed, the assumption leads to $V'\leq 0$ and 
$V''\geq0$, and if $V'(z_0)=0$ holds for some $z_0\in\Real_+$, then $V'(z)=0$ holds for
 all $z\geq z_0$. 
 Hence $V$ is affine function for $z\geq z_0$. This contradicts the strict convexity of
 $V$. We define the functions $\nu_V:\Real_+\rightarrow\Real$ and
 $\beta_V:\Real_+\rightarrow\Real$ such that  
 \begin{align*}
  \nu_V(z)=-z V'(z),\qquad 
  \beta_V(z)=\frac{z\nu_V'(z)}{\nu_V(z)}. 
 \end{align*}
 The subscript $V$ of $\nu_V$ and $\beta_V$ will be dropped if there is no confusion. 
\begin{definition}[potential function]
 Let $V:\Real_{+}\rightarrow\Real$ be a function which is strictly convex, decreasing, and
 third order continuously differentiable. 
 Suppose that the functions $\nu$ and $\beta$ defined from $V$ satisfy the following conditions:
 \begin{align}
  \label{eqn:nu-condition}
  \nu(z)&>0,\\
  \label{eqn:beta-condition}
  \beta(z)&<\frac{1}{n}
 \end{align}
 for all $z>0$ and 
 \begin{align}
  \label{eqn:nu-limit-condition}
  \lim_{z\rightarrow+0}\frac{z}{\nu(z)^{n-1}}=0. 
 \end{align}
 Then, $V$ is called potential function or potential for short. 
 For $P\in\mathrm{PD}(n)$, the function $V(\det{P})$ is also referred to as potential on
 $\mathrm{PD}(n)$. 
\end{definition}
As shown in \cite{ohara05:_geomet_posit_defin_matric_and}, the function $V(\det{P})$ is
strictly convex in $P\in\mathrm{PD}(n)$ if and only if $V$ satisfies
\eqref{eqn:nu-condition} and  \eqref{eqn:beta-condition}. 
The condition \eqref{eqn:nu-limit-condition} guarantees the existence of Hessian update
formula, which is discussed in Section \ref{sec:extension_quasi-Newton_updates}. 

Given a potential function $V$, the Bregman divergence defined from the potential function
$\varphi(P)=V(\det P)$ in \eqref{eqn:def_bregman_div} is denotes as $D_V(P,Q)$, and  
referred to as {\em $V$-Bregman divergence}. 
The $V$-Bregman divergence has the form of 
\begin{align*}
 D_V(P,Q) = V(\det{P})-V(\det{Q})+\nu(\det{Q})\<Q^{-1},P\>-n\nu(\det{Q}). 
% V(\det{P})+\nu(\det{Q})\<P,Q^{-1}\>-V(\det{Q})-n\nu(\det{Q}). 
\end{align*}
Indeed, substituting 
 \begin{align*}
  (\nabla V(\det{Q}))_{ij}
  = \frac{dV(\det{Q})}{dQ_{ij}}
  = V'(\det{Q})\frac{d\det{Q}}{dQ_{ij}}
  =-\nu(\det{Q})(Q^{-1})_{ij}, 
 \end{align*}
into \eqref{eqn:def_bregman_div}, we obtain the expression of $D_V(P,Q)$. 
%hence 
%\begin{align*}
% D_V(P,Q)
% &=V(\det{P})-V(\det{Q})-\<\nabla V(\det{Q}),P-Q\>\\
%% &=V(\det{P})-V(\det{Q})+\nu(\det{Q})\<Q^{-1},P-Q\>\\
% &=V(\det{P})-V(\det{Q})+\nu(\det{Q})\<Q^{-1},P\>-n\nu(\det{Q})
%\end{align*}
%holds. 
Below we show some examples of $V$-Bregman divergence. 
\begin{example}
 \label{example:KL-div}
For the negative logarithmic function $V(z)=-\log(z)$, we have
$\nu(z)=1$. 
Then $V$-divergence is equal to KL-divergence, 
\begin{align*}
 D_V(P,Q)=\mathrm{KL}(P,Q)=\<P,Q^{-1}\>-\log\det(PQ^{-1})-n. 
\end{align*}
 Note that $\KL(P,Q)=\KL(Q^{-1},P^{-1})$ holds. Hence, $\KL(P,Q)$ is convex in both $P$
 and $Q^{-1}$. 
% Therefore, not only \eqref{eqn:BFGS} but also \eqref{eqn:DFP} is convex optimization
% problem. 
\end{example}

\begin{example}
 \label{example:power-div}
 For the power potential $V(z)=(1-z^\gamma)/\gamma$ with $\gamma<1/n$, 
 we have $\nu(z)=z^\gamma$ and $\beta(z)=\gamma$. Then, we obtain
 \begin{align*}
  D_V(P,Q)=
  (\det{Q})^\gamma
  \bigg\{\<P,Q^{-1}\>+\frac{1-(\det{PQ^{-1}})^\gamma}{\gamma}-n\bigg\}. 
 \end{align*}
 The KL-divergence is recovered by taking the limit of $\gamma\rightarrow0$. 
\end{example}

\begin{example}
 \label{example:bounded-div}
 For $0\leq a<b$, let $V(z)$ be $V(z)=a\log(a z+1)-b\log(z)$. Then
 $V(z)$ is a convex and decreasing function, and we obtain 
 \begin{align*}
  \nu(z)=b-a+\frac{a}{az+1}>0,\qquad 
  \beta(z)=\frac{-a^2z}{(az+1)(a(b-a)z+b)}\leq 0
 \end{align*}
 for $z>0$. 
 The negative-log potential is derived by setting $a=0,\,b=1$. This potential satisfies
 the inequality $0<b-a\leq \nu(z)\leq b$. The bounding condition of $\nu$ will be assumed
 in the convergence analysis of Section \ref{sec:Convergence_Analysis}. 
\end{example}
We apply $V$-Bregman divergences to extend quasi-Newton update formula.

\section{Extended quasi-Newton update formula}
\label{sec:extension_quasi-Newton_updates}
To extend the standard quasi-Newton methods, we consider the optimization problem of the
$V$-Bregman divergence instead of the KL-divergence. 
Let us define the $V$-BFGS formula as the optimal solution of the problem, 
\begin{align}
 \label{eqn:V-BFGS-hessian-update-prob}
 \text{($V$-BFGS)}\qquad 
 \min_{B\in\mathrm{PD}(n)}\ D_V(B,B_k),\quad \text{subject to}\ \ Bs_k=y_k. 
\end{align}
%The optimal solution is denoted as $B_{k+1}$ 
%The updated matrix $B_{k+1}$ is defined as the optimal solution of the above problem. 
%Then $B_{k+1}$ is the $V$-projection of $B_k$ onto the $m$-autoparallel submanifold
%defined by the secant condition. 
Next we define $V$-DFP update formula which is an extension of the standard DFP formula
\eqref{eqn:DFP-update-formula}. Note that KL-divergence satisfies
$\KL(P,Q)=\KL(Q^{-1},P^{-1})$. 

Then, the optimization problem associated with the DFP update formula \eqref{eqn:DFP} 
can be extended to the problem, 
\begin{align}
 \label{eqn:V-DFP-hessian-update-prob}
 \text{($V$-DFP)}\qquad 
 \min_{B\in\mathrm{PD}(n)}\ D_V(B^{-1},B_k^{-1}),\quad \text{subject to}\ \ Bs_k=y_k. 
\end{align}
The problem \eqref{eqn:V-DFP-hessian-update-prob} is convex in $B^{-1}$, since
the objective function $D_V(B^{-1},B_k^{-1})$ is convex in $B^{-1}$ and 
the constraint $s_k=B^{-1}y_k$ is affine in $B^{-1}$. 
%Since generally $D_V(P,Q)$ is not convex in $Q$, we adopt $D_V(B^{-1},B_k^{-1})$ instead
%of $D_V(B_k,B)$. 
%KL-divergence is a special example such that $D_V(P,Q)$ is convex in both $P$ and $Q$. 
%which will derive a variant of the DFP update formula. 
%The constraint $\mathcal{M}=\{B\in\mathrm{PD}(n)~|~Bs_k=y_k\}$ is no longer
%$V$-autoparallel in general. Thus, it is not clear if the $m$-projection onto
%$\mathcal{M}$ is obtained by a simple computation. Hence the objective function
%$D_V(B_k,B)$ is not used in the definition of $V$-DFP update. 
Mainly we consider the $V$-BFGS update formula. 
The argument on the $V$-DFP update is almost the same. 
\begin{theorem}
 \label{theorem:V-BFGS-form}
 Let $B_k\in\mathrm{PD}(n)$, and suppose $s_k^\top y_k>0$. 
 Then the problem \eqref{eqn:V-BFGS-hessian-update-prob} has the unique 
 optimal solution $B_{k+1}\in\mathrm{PD}(n)$ satisfying 
\begin{align}
 B_{k+1} &=
\frac{\nu(\det{B_{k+1}})}{\nu(\det{B_k})}B^{BFGS}[B_k;s_k,y_k]
 +\bigg(1-\frac{\nu(\det{B_{k+1}})}{\nu(\det{B_k})}\bigg) 
 \frac{y_ky_k^\top}{s_k^\top y_k}. 
 \label{eqn:update-formula-V-BFGS}
\end{align}
\end{theorem}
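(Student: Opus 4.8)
The plan is to attack the constrained minimization problem \eqref{eqn:V-BFGS-hessian-update-prob} directly via Lagrange multipliers, exploiting the fact that the objective $D_V(B,B_k)$ is strictly convex in $B$ over $\mathrm{PD}(n)$ and the constraint $Bs_k=y_k$ is affine, so that any stationary point of the Lagrangian is the unique global minimizer (provided it lies in $\mathrm{PD}(n)$). First I would write out $D_V(B,B_k) = V(\det B) - V(\det B_k) + \nu(\det B_k)\<B_k^{-1},B\> - n\nu(\det B_k)$ using the explicit form given before Example~\ref{example:KL-div}, and note that the only $B$-dependent terms are $V(\det B)$ and $\nu(\det B_k)\<B_k^{-1},B\>$. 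Introducing a Lagrange multiplier vector $\lambda\in\Real^n$ for the vector constraint $Bs_k - y_k = 0$ (writing the $n$ scalar constraints with a symmetrized coupling, since $B$ is symmetric), the stationarity condition in $B$ reads
\begin{align*}
 V'(\det B)\,(\det B)\,B^{-1} + \nu(\det B_k)\,B_k^{-1} = \frac{1}{2}(\lambda s_k^\top + s_k\lambda^\top),
\end{align*}
using $\partial\det B/\partial B = (\det B)B^{-1}$. Recognizing $-V'(\det B)(\det B) = \nu(\det B)$, this becomes $\nu(\det B)\,B^{-1} = \nu(\det B_k)\,B_k^{-1} - \frac{1}{2}(\lambda s_k^\top + s_k\lambda^\top)$, i.e. $B^{-1}$ is a rank-two modification of $B_k^{-1}$ scaled by the ratio $\nu(\det B_k)/\nu(\det B)$.

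The next step is to determine $\lambda$ from the constraint. Write $c = \nu(\det B_{k+1})/\nu(\det B_k)$ for brevity (treating $\det B_{k+1}$ as an as-yet-unknown scalar to be pinned down later). Then $B^{-1} = c\,B_k^{-1} - \frac{c}{2\nu(\det B_k)}(\lambda s_k^\top + s_k\lambda^\top)$, and $H := B^{-1}$ must satisfy $Hy_k = s_k$ (the constraint $Bs_k=y_k$ inverted). Substituting the rank-two form into $Hy_k=s_k$ gives two scalar equations (the coefficients of $s_k$ and of $\lambda$), which I would solve for $\lambda$ in terms of $s_k$, $y_k$, $B_k^{-1}y_k$, and the scalars $s_k^\top y_k$, $y_k^\top B_k^{-1}y_k$. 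This is the standard DFP-type algebra: one finds $H = B^{BFGS}[B_k;s_k,y_k]^{-1}$-style expression but carrying the scalar $c$. Concretely, I expect the resulting $H$ to be $c^{-1}$ times a Broyden-class update; inverting via Sherman–Morrison (or equivalently invoking the identity $B^{DFP}[B;s,y]^{-1} = B^{BFGS}[B^{-1};y,s]$ stated in the introduction) should yield
\begin{align*}
 B_{k+1} = c\,B^{BFGS}[B_k;s_k,y_k] + (1-c)\,\frac{y_ky_k^\top}{s_k^\top y_k},
\end{align*}
which is exactly \eqref{eqn:update-formula-V-BFGS}. One checks directly that this $B_{k+1}$ satisfies $B_{k+1}s_k = y_k$ regardless of $c$, since both $B^{BFGS}[B_k;s_k,y_k]s_k = y_k$ and $(y_ky_k^\top/s_k^\top y_k)s_k = y_k$.

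The main obstacle — and the part requiring genuine care rather than routine computation — is the consistency/existence question for the scalar $c$, equivalently for $\det B_{k+1}$. Taking determinants of \eqref{eqn:update-formula-V-BFGS} expresses $\det B_{k+1}$ as a function of $c = \nu(\det B_{k+1})/\nu(\det B_k)$ and of the data; combined with the definition of $\nu$ this produces a fixed-point equation of the form $\det B_{k+1} = g(\nu(\det B_{k+1}))$ for an explicit $g$ built from $\det B_k$, $s_k^\top y_k$, $y_k^\top B_k^{-1} y_k$, and $s_k^\top B_k s_k$. I would show this equation has a unique positive solution: the left side is the identity, increasing; using $\nu(z) = -zV'(z)$ and the condition $\beta(z) < 1/n$ one shows the relevant composite map is a contraction or is monotone in the right direction, and the limit condition \eqref{eqn:nu-limit-condition}, $\lim_{z\to+0} z/\nu(z)^{n-1} = 0$, forces the correct boundary behavior so that a solution exists (this is precisely why that hypothesis was flagged as guaranteeing "existence of the Hessian update formula"). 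Once $c\in(0,?)$ is located, one verifies $B_{k+1}\in\mathrm{PD}(n)$: since $c > 0$, $B^{BFGS}[B_k;s_k,y_k]\in\mathrm{PD}(n)$ (given $B_k\in\mathrm{PD}(n)$, $s_k^\top y_k>0$), and $y_ky_k^\top/s_k^\top y_k$ is positive semidefinite, positive definiteness of the convex combination-like expression follows when $c\le 1$; if $c>1$ a short separate argument using $B_{k+1}s_k=y_k$ and rank considerations handles it. Finally, strict convexity of $D_V(\cdot,B_k)$ on $\mathrm{PD}(n)$ (which holds because $\varphi(B)=V(\det B)$ is strictly convex there by the cited result of Ohara) together with affineness of the constraint set upgrades "unique stationary point" to "unique global minimizer," completing the proof. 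I would relegate the detailed fixed-point analysis to the Appendix as the paper suggests.
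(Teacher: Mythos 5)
Your proposal follows essentially the same route as the paper's proof in Appendix \ref{appendix:VBFGS-formula}: the stationarity condition $-\nu(\det B)B^{-1}+\nu(\det B_k)B_k^{-1}=s_k\lambda^\top+\lambda s_k^\top$ over the affine secant manifold (whose normal vectors have exactly this symmetrized rank-two form), determination of $\det B_{k+1}$ through the scalar equation $C\nu(z)^{n-1}=z$ --- whose unique solvability the paper isolates as Lemma \ref{lemma:sol_existence-nu-equation} and proves exactly as you indicate, from $\beta<1/n$ and the limit condition \eqref{eqn:nu-limit-condition} --- and uniqueness from strict convexity of $D_V(\cdot,B_k)$. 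The only cosmetic differences are that the paper verifies the explicit candidate against the optimality condition rather than deriving it from the multiplier equations, and it settles positive definiteness in the case $c>1$ via the explicit inverse formula \eqref{eqn:self-scaling-inverse} where you gesture at ``rank considerations.''
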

The proof is found in Appendix \ref{appendix:VBFGS-formula}. 

Note that the $V$-BFGS update formula is represented by the affine sum of
$B^{BFGS}[B_k;s_k,y_k]$ and $y_ky_k^\top/s_k^\top y_k$. 
This form is equivalent to the self-scaling quasi-Newton update 
\cite{oren74:_self_scalin_variab_metric_ssvm,nocedal93:_analy_of_self_scalin_quasi_newton_method}
defined as 
\begin{align}
 \label{eqn:self-scaling}
 B_{k+1} &=\theta_k B^{BFGS}[B_k;s_k,y_k]
 +(1-\theta_k) \frac{y_ky_k^\top}{s_k^\top y_k}, 
\end{align}
where $\theta_k$ is a positive real number. 
In the $V$-BFGS update formula, the coefficient 
$\theta_k$ is determined from the function $\nu$. 
The inverse of the matrix \eqref{eqn:self-scaling} is given by 
\begin{align}
\label{eqn:self-scaling-inverse}
 B_{k+1}^{-1} &=
 \frac{1}{\theta_k} (B^{BFGS}[B_k;s_k,y_k])^{-1} +\bigg(1-\frac{1}{\theta_k}\bigg)
 \frac{s_ks_k^\top}{s_k^\top y_k}. 
\end{align}
As the result, for any $\theta_k> 0$, the matrix $B_{k+1}$
in \eqref{eqn:self-scaling} is positive definite. Indeed, for 
$0<\theta_k\leq 1$ the expression \eqref{eqn:self-scaling} guarantees the positive
definiteness of $B_{k+1}$, and for $1<\theta_k$, the expression
\eqref{eqn:self-scaling-inverse} implies $B_{k+1}\in\mathrm{PD}(n)$. 
Therefore $B_{k+1}$ in \eqref{eqn:update-formula-V-BFGS} is also positive definite
matrix, since any potential $V$ satisfies $\nu_V>0$. 

In the self-scaling update formula in \eqref{eqn:self-scaling}, the choice 
\begin{align}
 \theta_k=\frac{s_k^\top y_k}{s_k^\top B_ks_k}
 \label{eqn:popular-self-scaling-parameter}
\end{align}
is often recommended. 
As analyzed in \cite{nocedal93:_analy_of_self_scalin_quasi_newton_method}, however, 
the self-scaling method with inexact line search for the step length 
tends to lead 
the relative inefficiency compared to the standard BFGS method. 
Following Example \ref{example:VBFGS-power-div} below, 
we prove that the self-scaling method with the scaling parameter
\eqref{eqn:popular-self-scaling-parameter} is not derived from 
the $V$-Bregman divergence. 

We present a practical way of computing the Hessian approximation
\eqref{eqn:update-formula-V-BFGS}. 
In Eq~\eqref{eqn:update-formula-V-BFGS}, the optimal solution $B_{k+1}$ appears in both 
sides, that is, we have only the implicit expression of $B_{k+1}$. 
The numerical computation is, however, efficiently performed as well as the standard BFGS
update. To compute the update formula $B_{k+1}$, first we compute $\det B_{k+1}$. 
The determinant of both sides of \eqref{eqn:update-formula-V-BFGS} leads to 
\begin{align}
 \det B_{k+1}=\frac{\det(B^{BFGS}[B_k;s_k,y_k])}{\nu(\det B_{k})^{n-1}}\cdot\nu(\det
 B_{k+1})^{n-1}. 
 \label{eqn:determinant-V-BFGS}
\end{align}
Hence, by solving the nonlinear equation 
\begin{align*}
 z=\frac{\det(B^{BFGS}[B_k;s_k,y_k])}{\nu(\det B_{k})^{n-1}}\cdot\nu(z)^{n-1},\qquad z>0
\end{align*}
we can find $\det B_{k+1}$. 
As shown in the proof of Theorem \ref{theorem:V-BFGS-form}, the function $z/\nu(z)^{n-1}$
is monotone increasing. Hence the Newton method is available to find the root of the above
equation efficiently. 
Once we obtain the value of $\det B_{k+1}$, we can compute the Hessian approximation $B_{k+1}$ by
substituting $\det B_{k+1}$ into Eq~\eqref{eqn:update-formula-V-BFGS}. 
Figure \ref{fig:V-BFGS-update} shows the update algorithm of the $V$-BFGS formula which
exploits the Cholesky decomposition of the approximate Hessian matrix. 
By maintaining the Cholesky decomposition, we can easily compute the the determinant and
the search direction. In the algorithm of Figure \ref{fig:V-BFGS-update}, we require the
Wolfe condition \cite[Section 3.1]{nocedal99:_numer_optim} for the step length
$\alpha_k$. 
As shown in Section \ref{sec:Convergence_Analysis}, the Wolfe condition is useful to
establish the convergence property of the optimization algorithm. 

In the same way as the proof of Theorem \ref{theorem:V-BFGS-form}, 
we obtain the $V$-DFP update formula defined from \eqref{eqn:V-DFP-hessian-update-prob}
such that 
\begin{align}
 B_{k+1}=\frac{\nu((\det{B_k})^{-1})}{\nu((\det{B_{k+1}})^{-1})}B^{DFP}[B_k;s_k,y_k]
 +\bigg(1-\frac{\nu((\det{B_k})^{-1})}{\nu((\det{B_{k+1}})^{-1})}\bigg)
 \frac{y_ky_k^\top}{s_k^\top y_k}. 
 \label{eqn:update-formula-V-DFP}
\end{align}
It is straightforward to unify the $V$-BFGS method and the $V$-DFP method 
in the same way as the standard Broyden family
\cite{broyden67:_quasi_newton_method_and_their}. 
Let $B_{V_1,k+1}^{\mathrm{BFGS}}$ be the Hessian approximation given by the $V$-BFGS
update formula with the potential $V=V_1$, and 
$B_{V_2,k+1}^{\mathrm{DFP}}$ be the Hessian approximation given by the $V$-DFP update 
formula with the potential $V=V_2$. 
Then the update formula of the $(V_1,V_2)$-Broyden family is defined by 
\begin{align}
 B_{k+1}~=~\vartheta\, B_{\mathrm{BFGS},k+1}^{(V_1)} +
 (1-\vartheta)\,B_{\mathrm{DFP},k+1}^{(V_2)},
 \label{eqn:extended-Broyden-family}
\end{align}
for $\vartheta\in[0,1]$. 
The $(V_1,V_2)$-Broyden family is obtained by a convex-full of $B^{BFGS}[B_k;s_k,y_k]$, 
$B^{DFP}[B_k;s_k,y_k]$ and $y_ky_k^\top/s_k^\top y_k$. The standard Broyden family is
recovered by setting $V_1(z)=V_2(z)=-\log z$. 

\begin{figure}[p]
 \label{fig:V-BFGS-update}
 \centering 
 \fbox{
 \begin{minipage}{0.9\linewidth}
\begin{description}
 \item[$V$-BFGS update:] 
 \item[Initialization:] 
            The function $\nu(z)$ denotes $-V'(z)z$. 
            Let $B_0\in\mathrm{PD}(n)$ be a matrix which is an initial
            approximation of the Hessian matrix, and $L_0L_0^\top=B_0$ be 
            the Cholesky decomposition of $B_0$. Let $x_0\in\Real^n$ be an initial point,
            and set $k=0$. 
 \item[Repeat:] If stopping criterion is satisfied, go to Output. 
            \begin{enumerate}
             \item Let $x_{k+1}=x_k-\alpha_k B_k^{-1}\nabla f(x_k)$, 
		   where $\alpha_k\geq 0$ is a step length satisfying the Wolfe condition
		   \cite[Section 3.1]{nocedal99:_numer_optim}. 
		   The Cholesky decomposition $B_k=L_kL_k^\top$ is available to compute
                   $B_k^{-1}\nabla f(x_k)$. 
	     \item Set $s_k=x_{k+1}-x_k$ and $y_k=\nabla f(x_{k+1})-\nabla f(x_k)$. 
             \item Update $L_k$ to $\bar{L}$ which is the Cholesky decomposition 
		   of $B^{BFGS}[B_k;s_k,y_k]$, that is, 
                   \begin{align*}
                    \bar{L}\bar{L}^\top
		    =B^{BFGS}[B_k;s_k,y_k]=B^{BFGS}[L_kL_k^\top;s_k,y_k]. 
                   \end{align*}
                   The Cholesky decomposition with rank-one update is available. 
             \item Compute 
		   \begin{align*}
		    C=\frac{(\det{\bar{L}})^2}{\nu((\det{L_k})^2)^{n-1}}
		   \end{align*}
		   and find the root of the equation 
                   \begin{align*}
                    C\cdot\nu(z)^{n-1} = z,\qquad z>0. 
                   \end{align*}
                   Let the solution be $z^*$. 
             \item Compute the Cholesky decomposition $L_{k+1}$ such that 
                   \begin{align*}
		    L_{k+1}L_{k+1}^\top = 
		    \frac{\nu(z^{*})}{\nu((\det{L_k})^2)}
		    \bar{L}\bar{L}^\top 
                    +\bigg(1-\frac{\nu(z^{*})}{\nu((\det{L_k})^2)}\bigg) 
		    \frac{y_ky_k^\top}{s_k^\top y_k}. 
                   \end{align*}
             \item $k\leftarrow k+1$. 
            \end{enumerate}
 \item[Output:]  Local optimal solution $x_{k}$. 
\end{description}
 \end{minipage}}
 \caption{Pseudo code of $V$-BFGS method. 
 The Cholesky decomposition with rank-one update is useful in the algorithm. }
 \label{fig:Adaboost.alg}
\end{figure}

\begin{example}
 \label{example:VBFGS-power-div}
 We show the $V$-BFGS formula derived from the power potential. 
 Let $V(z)$ be the power potential $V(z)=(1-z^\gamma)/\gamma$ with $\gamma<1/n$. 
 As shown in Example \ref{example:power-div}, we have $\nu(z)=z^\gamma$. 
 Due to the equality 
 \begin{align*}
  \det(B^{BFGS}[B_k;s_k,y_k])=\det(B_k)\frac{s_k^\top y_k}{s_k^\top B_ks_k}
 \end{align*}
 and Eq.~\eqref{eqn:determinant-V-BFGS}, for the power potential we have 
 \begin{align*}
  \frac{\nu(\det{B_{k+1}})}{\nu(\det{B_k})}=
  \left(\frac{s_k^\top y_k}{s_k^\top B_ks_k}\right)^\rho,\qquad
  \rho=\frac{\gamma}{1-(n-1)\gamma}. 
 \end{align*}
 Then the $V$-BFGS update formula is given as 
\begin{align*}
 B_{k+1} =
\left(\frac{s_k^\top y_k}{s_k^\top B_ks_k}\right)^\rho
 B^{BFGS}[B_k;s_k,y_k]
 +\bigg(1-\left(\frac{s_k^\top y_k}{s_k^\top B_ks_k}\right)^\rho
 \bigg)
 \frac{y_ky_k^\top}{s_k^\top y_k}. 
\end{align*}
 For $\gamma$ such that $\gamma<1/n$, we have $-1/(n-1)<\rho<1$. 
 Remember that the standard self-scaling update formula corresponds to the above update
 with $\rho=1$. Therefore, the standard self-scaling update formula is not derived from
 the power potential. Indeed, the power potential with $\rho=1$ or equivalently
 $\gamma=1/n$ is a convex function but not a strictly convex function. 
\end{example}
In terms of the self-scaling update formula, we show the following proposition. 
\begin{proposition}
 There does not exist the potential function such that 
 in Eq.~\eqref{eqn:update-formula-V-BFGS} the equality
 \begin{align}
  \frac{\nu(\det{B_{k+1}})}{\nu(\det{B_{k}})}=\frac{s_k^\top y_k}{s_k^\top B_ks_k}
  \label{eqn:potential-condition-popular-self-scaling}
 \end{align}
 holds for any $B_k\in\PD(n)$ and any $s_k, y_k\in\Real^n$ satisfying $s_k^\top y_k>0$. 
\end{proposition}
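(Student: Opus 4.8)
The plan is to argue by contradiction: assume a potential function $V$ exists for which \eqref{eqn:potential-condition-popular-self-scaling} holds for every admissible triple $(B_k,s_k,y_k)$, and show that $\nu$ is then forced into a power form that violates condition \eqref{eqn:beta-condition}.

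First I would turn \eqref{eqn:potential-condition-popular-self-scaling} into a functional equation for $\nu$. Using the identity $\det(B^{BFGS}[B_k;s_k,y_k]) = (\det B_k)\, s_k^\top y_k / (s_k^\top B_k s_k)$ recalled in Example \ref{example:VBFGS-power-div}, write $t := s_k^\top y_k/(s_k^\top B_k s_k)>0$. Substituting this identity together with the assumed relation $\nu(\det B_{k+1}) = t\,\nu(\det B_k)$ into the determinant identity \eqref{eqn:determinant-V-BFGS} collapses all the $\nu(\det B_k)$ factors and yields $\det B_{k+1} = (\det B_k)\,t^n$. Plugging this back into $\nu(\det B_{k+1}) = t\,\nu(\det B_k)$ gives
\[
  \nu\bigl((\det B_k)\,t^n\bigr) \;=\; t\,\nu(\det B_k).
\]

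Next I would observe that $t$ can be made an arbitrary positive number, independently of $\det B_k$: for instance, take $B_k = I$, let $s_k$ be any unit vector, and set $y_k = c\,s_k$ with $c>0$; then $s_k^\top y_k = c>0$, $\det B_k = 1$, and $t = c$, which ranges over all of $\Real_+$ as $c$ does. Hence the displayed equation with $\det B_k = 1$ reads $\nu(t^n) = t\,\nu(1)$ for all $t>0$, i.e., after the substitution $z = t^n$, $\nu(z) = \nu(1)\,z^{1/n}$ for every $z>0$, with $\nu(1)>0$ by \eqref{eqn:nu-condition}.

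Finally, from $\nu(z) = \nu(1)\,z^{1/n}$ one computes $\nu'(z) = \frac{1}{n}\nu(1)\,z^{1/n-1}$, hence
\[
  \beta(z) \;=\; \frac{z\,\nu'(z)}{\nu(z)} \;=\; \frac{1}{n}
\]
for all $z>0$, contradicting the requirement $\beta(z)<1/n$ in \eqref{eqn:beta-condition}; this completes the argument. The only delicate point is the attainability claim for $t$ (and, if one prefers to keep the full two-variable functional equation, the independence of the pair $(\det B_k,t)$, realized for example by $B_k = d^{1/n}I$); everything else is a short substitution. As a sanity check, $\nu(z)=c\,z^{1/n}$ is exactly the $\nu$ of the power potential with boundary exponent $\gamma = 1/n$ (equivalently $\rho = 1$), which is precisely the case excluded as not strictly convex on $\PD(n)$ in the discussion after Example \ref{example:VBFGS-power-div}.
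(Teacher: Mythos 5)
Your argument is correct and follows essentially the same route as the paper's own proof: combine the determinant identity \eqref{eqn:determinant-V-BFGS} with $\det(B^{BFGS}[B_k;s_k,y_k])=(\det B_k)\,s_k^\top y_k/(s_k^\top B_ks_k)$ to force $\det B_{k+1}=(\det B_k)\,t^n$, deduce $\nu(z)=\nu(1)z^{1/n}$ at $\det B_k=1$, and contradict $\beta<1/n$. Your only addition is the explicit verification that $t$ ranges over all of $\Real_+$ (via $B_k=I$, $y_k=cs_k$), a point the paper leaves implicit.
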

\begin{proof}
 We have two equalities, 
 \begin{align*}
  \det(B^{BFGS}[B_k;s_k,y_k])&=\det(B_k)\frac{s_k^\top y_k}{s_k^\top B_ks_k},\\
  \det{B_{k+1}}&=\frac{\det(B^{BFGS}[B_k;s_k,y_k])}{\nu(\det{B_k})^{n-1}}\nu(\det{B_{k+1}})^{n-1}. 
 \end{align*}
 Hence, we have 
 \begin{align*}
  \left(\frac{\nu(\det{B_{k+1}})}{\nu(\det{B_k})}\right)^{n-1}=
  \frac{\det{B_{k+1}}}{\det{B_k}}\cdot\frac{s_k^\top B_ks_k}{s_k^\top y_k}
 \end{align*}
 Suppose that there exists a potential function satisfying
 \eqref{eqn:potential-condition-popular-self-scaling}. Then we have 
\begin{align*}
 \left(\frac{s_k^\top y_k}{s_k^\top B_ks_k}\right)^{n-1}=
  \frac{\det{B_{k+1}}}{\det{B_k}}\cdot\frac{s_k^\top B_ks_k}{s_k^\top y_k}, 
\end{align*}
and hence the equality 
\begin{align*}
 \det{B}_{k+1}=\det({B}_{k})\cdot \left(\frac{s_k^\top y_k}{s_k^\top B_ks_k}\right)^{n}
\end{align*}
holds. Substituting the above formula into
 \eqref{eqn:potential-condition-popular-self-scaling}, 
 we have 
\begin{align*}
 \nu\left(\det({B}_{k}) \left(\frac{s_k^\top y_k}{s_k^\top B_ks_k}\right)^{n}\right)=
 \nu(\det{B}_k)\frac{s_k^\top y_k}{s_k^\top B_ks_k}. 
\end{align*}
 Let $B_k$ be a positive definite matrix such that $\det{B}_k=1$, and 
 $z$ be $z=\left(\frac{s_k^\top y_k}{s_k^\top B_ks_k}\right)^{n}$. 
 Then we have $\nu(z)=\nu(1)z^{1/n}$ for $z>0$. 
 The corresponding $\beta_V$ is given as $\beta_V(z)=1/n$, and 
 this does not satisfy the definition of the potential function. 
\end{proof}

\section{Convergence Analysis}
\label{sec:Convergence_Analysis}
We consider the convergence property of the $V$-BFGS method. 
Some standard assumptions about the objective function $f$ are stated below. See Section
6.4 of \cite{nocedal99:_numer_optim} for details. 
\begin{assumption}
 \label{assumption:convergence}
 \begin{enumerate}
  \item The objective function $f$ is twice continuously differentiable. 
  \item Let $\nabla^2 f(x)$ be the Hessian matrix of $f$ at $x$. 
	For the starting point $x_0$, the level set 
	$\mathcal{L}=\{x\in\Real^n~|~f(x)\leq f(x_0)\}$ is convex, and there 
	exist positive constants $m$ and $M$ such that
	\begin{align}
	 \label{eqn:assumption:convergence-eigen}
	 m\|z\|^2 \leq z^\top \nabla^2 f(x) z\leq M\|z\|^2
	\end{align}
	holds for all $z\in\Real^n$ and $x\in\mathcal{L}$. 
 \end{enumerate}
\end{assumption}
The following theorem implies that the sequence $\{x_k\}$ generated by the $V$-BFGS update
formula converges to the local minimizer of $f$ if the function $\nu_V$ of a potential $V$
satisfies the bounding condition. 
\begin{theorem}
 \label{theorem:convergence}
 Let $B_0\in\mathrm{PD}(n)$ be an initial matrix and $x_0\in\Real^n$ be a starting point
 which meets Assumption \ref{assumption:convergence}. 
 Suppose that there exist positive constants $L_1, L_2>0$ such that 
 $L_1\leq \nu\leq L_2$. 
 Then the sequence $\{x_k\}$ generated by the $V$-BFGS update converges to the minimizer
 $x^*$ of $f$. 
\end{theorem}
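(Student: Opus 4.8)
The plan is to reduce the convergence statement to the classical global convergence theorem for BFGS-type methods under Assumption~\ref{assumption:convergence} (cf.\ Section~6.4 of \cite{nocedal99:_numer_optim} and the Byrd--Nocedal trace/determinant analysis). The key observation is that by Theorem~\ref{theorem:V-BFGS-form} the $V$-BFGS update is the self-scaling update \eqref{eqn:self-scaling} with scaling parameter $\theta_k=\nu(\det B_{k+1})/\nu(\det B_k)$, and the hypothesis $L_1\le\nu\le L_2$ forces
\[
  \frac{L_1}{L_2}\ \le\ \theta_k\ \le\ \frac{L_2}{L_1}
\]
for every $k$; that is, the self-scaling factors are bounded above and away from zero by absolute constants. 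So the whole argument is: (i) show the iterates stay in $\mathcal{L}$ and the curvature condition $s_k^\top y_k>0$ holds so that the updates are well defined and positive definite; (ii) run the Byrd--Nocedal potential-function argument on $B_k$, tracking the extra self-scaling factor $\theta_k$ and using its boundedness; (iii) conclude $\liminf\|\nabla f(x_k)\|=0$, which together with strong convexity on $\mathcal{L}$ (from \eqref{eqn:assumption:convergence-eigen}) gives $x_k\to x^*$.

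In more detail: since $x_{k+1}$ is chosen by a line search satisfying the Wolfe conditions and $f$ is bounded below on the convex level set $\mathcal{L}$, the standard Zoutendijk argument applies once we know the $B_k$ are positive definite with suitably controlled condition numbers; positive definiteness of $B_{k+1}$ is already established in the text (any $\theta_k>0$ works), and the Wolfe curvature condition gives $s_k^\top y_k>0$. For the condition-number control, define $\psi(B)=\tr B-\log\det B$ as in the introduction and estimate $\psi(B_{k+1})$ in terms of $\psi(B_k)$. Because $B_{k+1}=\theta_k B^{\mathrm{BFGS}}[B_k;s_k,y_k]+(1-\theta_k)y_ky_k^\top/(s_k^\top y_k)$, one has $\det B_{k+1}$ given explicitly by \eqref{eqn:determinant-V-BFGS} and the trace is an affine function of $\theta_k$; using \eqref{eqn:assumption:convergence-eigen} to bound $\|y_k\|^2/(s_k^\top y_k)\le M$ and $s_k^\top y_k/(s_k^\top s_k)\ge m$, together with $\theta_k\in[L_1/L_2,\,L_2/L_1]$, one obtains a recursion of the form $\psi(B_{k+1})\le\psi(B_k)+c_1-c_2\log\cos^2\vartheta_k+(\text{bounded terms})$, where $\vartheta_k$ is the angle between $s_k$ and $-B_k s_k$ (equivalently between the search direction and the gradient). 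Summation over $k$ and nonnegativity of $\psi$ then yield that $\cos\vartheta_k$ is bounded away from $0$ along a subsequence, so that the Zoutendijk condition $\sum\cos^2\vartheta_k\|\nabla f(x_k)\|^2<\infty$ forces $\liminf_k\|\nabla f(x_k)\|=0$. Finally, strong convexity of $f$ on $\mathcal{L}$ implies the minimizer $x^*$ is unique and $\tfrac{m}{2}\|x_k-x^*\|^2\le f(x_k)-f(x^*)\le\tfrac{1}{2m}\|\nabla f(x_k)\|^2$, so $\liminf\|\nabla f(x_k)\|=0$ upgrades to $x_k\to x^*$ (monotonicity of $f(x_k)$ handles the whole sequence rather than a subsequence).

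The main obstacle I anticipate is step (ii): carrying the self-scaling factor $\theta_k$ through the Byrd--Nocedal bound on $\tr B_{k+1}$ and $\det B_{k+1}$ cleanly. In the pure BFGS case $\theta_k\equiv1$, and the telescoping of $\psi(B_k)$ is standard; here the affine combination means $\tr B_{k+1}=\theta_k\,\tr B^{\mathrm{BFGS}}[B_k;s_k,y_k]+(1-\theta_k)\,\|y_k\|^2/(s_k^\top y_k)$, and one must check that the coefficient of $\tr B_k$ that survives, namely $\theta_k$, stays bounded — which it does, but the bookkeeping with the $\log\det$ term (where $\theta_k$ enters to the power $n-1$ via \eqref{eqn:determinant-V-BFGS}) needs care to make sure no term grows with $k$. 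Once the recursion $\psi(B_{k+1})\le\psi(B_k)+C-c\log\cos^2\vartheta_k$ with constants independent of $k$ is in hand, the rest is routine. (This matches the known fact that self-scaling BFGS with scaling factors bounded in a fixed positive interval is globally convergent on strongly convex problems.)
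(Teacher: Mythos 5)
Your overall architecture (trace--determinant potential $\psi$, Zoutendijk, strong convexity to upgrade $\liminf\|\nabla f(x_k)\|=0$ to convergence) is the same as the paper's, but there is a genuine gap at exactly the point you flag as "bookkeeping": running the Byrd--Nocedal recursion on $B_k$ itself, using only the per-step bound $\theta_k\in[L_1/L_2,\,L_2/L_1]$, does not close. Writing $\theta_k=\nu(\det B_{k+1})/\nu(\det B_k)$ and expanding, you get
\begin{align*}
 \tr{B_{k+1}}&=\theta_k\Big(\tr{B_k}-\tfrac{\|B_ks_k\|^2}{s_k^\top B_ks_k}\Big)+\tfrac{\|y_k\|^2}{s_k^\top y_k},
 \qquad
 \log\det B_{k+1}=\log\det B_k+(n-1)\log\theta_k+\log\tfrac{s_k^\top y_k}{s_k^\top B_ks_k},
\end{align*}
so the recursion reads $\psi(B_{k+1})\le\psi(B_k)+(\theta_k-1)\tr{B_k}+(\text{bounded})+\log\cos^2\vartheta_k$. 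Since $L_2/L_1\ge 1$, the factor $\theta_k$ may exceed $1$ on infinitely many steps, and the term $(\theta_k-1)\tr{B_k}$ is proportional to the very quantity you are trying to control; boundedness of $\theta_k$ alone therefore does not give constants independent of $k$, and the claimed recursion $\psi(B_{k+1})\le\psi(B_k)+C-c\log\cos^2\vartheta_k$ does not follow. (The "known fact" you invoke for self-scaling BFGS is typically proved under $\theta_k\le 1$, not merely $\theta_k$ bounded.)

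The missing idea is that the scaling factors are not arbitrary bounded numbers: their cumulative product telescopes, $\prod_{j<k}\theta_j=\nu(\det B_k)/\nu(\det B_0)\in[L_1/L_2,\,L_2/L_1]$. The paper exploits this by applying $\psi$ not to $B_k$ but to the rescaled matrices $\bar B_k=B_k/\nu(\det B_k)$, for which the update becomes an exact BFGS-type update
\begin{align*}
 \bar B_{k+1}=\bar B_k-\frac{\bar B_ks_ks_k^\top\bar B_k}{s_k^\top\bar B_ks_k}
 +\frac{1}{\nu(\det B_{k+1})}\,\frac{y_ky_k^\top}{s_k^\top y_k},
\end{align*}
i.e.\ the coefficient of $\bar B_k$ is exactly $1$ and the hypothesis $L_1\le\nu\le L_2$ enters only through the bounded factor $1/\nu(\det B_{k+1})\in[1/L_2,1/L_1]$ on the rank-one correction. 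The standard telescoping then goes through verbatim, and since $\bar B_k$ is a positive scalar multiple of $B_k$ the angle $\vartheta_k$ is unchanged, so Zoutendijk applies as you intend. If you replace your step (ii) with this normalization (or, equivalently, use the telescoping product rather than the per-step bound on $\theta_k$), the rest of your argument is correct and matches the paper.
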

\begin{lemma}[Eq. 6.12 in \cite{nocedal99:_numer_optim}]
 \label{lemma:averaged_Hessian}
 Let $\bar{G}$ be the averaged Hessian
 \[
  \bar{G}=\int_0^1 \nabla^2 f(x_k+\tau s)d\tau,\quad s=x_{k+1}-x_k\in\Real^n, 
 \]
 then the property $y=\bar{G} s$ follows from Taylor's
 theorem, where $y=\nabla f(x_{k+1})-\nabla f(x_k)$. 
\end{lemma}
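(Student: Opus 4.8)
The plan is to recognize the statement as the fundamental theorem of calculus—equivalently, Taylor's theorem with the integral form of the remainder—applied to the vector-valued gradient map along the segment joining $x_k$ and $x_{k+1}$. First I would introduce the auxiliary curve $g:[0,1]\rightarrow\Real^n$ defined by $g(\tau)=\nabla f(x_k+\tau s)$, where $s=x_{k+1}-x_k$. Since $f$ is twice continuously differentiable (Assumption \ref{assumption:convergence}, part 1), each component of $g$ is continuously differentiable on the compact interval $[0,1]$, so the fundamental theorem of calculus applies componentwise and yields $g(1)-g(0)=\int_0^1 g'(\tau)\,d\tau$.

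Next I would evaluate the two sides. At the endpoints, $g(1)=\nabla f(x_{k+1})$ and $g(0)=\nabla f(x_k)$, so the left-hand side is precisely $y=\nabla f(x_{k+1})-\nabla f(x_k)$. For the integrand, the chain rule gives $g'(\tau)=\nabla^2 f(x_k+\tau s)\,s$, using that $\frac{d}{d\tau}(x_k+\tau s)=s$. Substituting these identities produces $y=\int_0^1 \nabla^2 f(x_k+\tau s)\,s\,d\tau$.

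Finally, because the vector $s$ is constant with respect to the integration variable $\tau$, it can be pulled outside the (entrywise) integral by linearity, giving $y=\big(\int_0^1 \nabla^2 f(x_k+\tau s)\,d\tau\big)s=\bar{G}s$, which is the asserted identity.

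There is no substantive obstacle in this argument; the only points needing care are purely foundational. Specifically, one must check that the matrix-valued integral defining $\bar{G}$ is well-defined and that interchanging the integral with right-multiplication by $s$ is legitimate. Both follow from the continuity of the map $\tau\mapsto\nabla^2 f(x_k+\tau s)$ on $[0,1]$, which is guaranteed by the twice continuous differentiability of $f$: this makes each entry of the Hessian integrand continuous, hence integrable, and reduces the interchange to the elementary linearity of the componentwise Riemann integral.
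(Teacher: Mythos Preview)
Your proof is correct and is exactly the standard argument the paper has in mind: the paper does not actually spell out a proof of this lemma but simply attributes it to Taylor's theorem and cites Nocedal--Wright, and your fundamental-theorem-of-calculus computation along the segment $\tau\mapsto x_k+\tau s$ is precisely that argument.
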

Using Lemma \ref{lemma:averaged_Hessian}, we prove Theorem \ref{theorem:convergence} in a
manner similar to Section 8.4 in \cite{nocedal99:_numer_optim}. 
\begin{proof}[Proof of Theorem \ref{theorem:convergence}]
Let $B_k, k=0,1,2,\ldots$ be the sequence of approximate Hessian matrices generated by the 
$V$-BFGS update formula. 
We define $\bar{B}_{k+1}$ and $\bar{B}_k$ by 
$\bar{B}_{k+1}=\frac{1}{\nu(\det B_{k+1})}B_{k+1}$ and 
$\bar{B}_k=\frac{1}{\nu(\det B_k)}B_k$, respectively. 
Then the update formula shown in Theorem \ref{theorem:V-BFGS-form} is represented as 
\begin{align}
 \label{eqn:proof-v-bggs-update-formula}
 \bar{B}_{k+1}&=\bar{B}_k-\frac{\bar{B}_ks_ks_k^\top \bar{B}_k}{s_k^\top \bar{B}_k s_k}
 + \frac{1}{\nu(\det{B_{k+1}})}\frac{y_ky_k^\top}{s_k^\top y_k}. 
\end{align}
 We compute 
 \begin{align*}
  \psi(\bar{B}_{k+1})=\tr{\bar{B}_{k+1}}-\log\det\bar{B}_{k+1}. 
 \end{align*}
 The inequality \eqref{eqn:assumption:convergence-eigen} yields 
\begin{align}
 \label{eqn:proof-eigen-inequ-1}
 \frac{s_k^\top y_k}{\|s_k\|^2}
 &=\frac{s_k^\top \bar{G} s_k}{\|s_k\|^2}\geq m,\\
 \label{eqn:proof-eigen-inequ-2}
 \frac{\|y_k\|^2}{s_k^\top y_k}
 &=\frac{s_k^\top \bar{G}^2 s_k}{s_k^\top\bar{G}s_k}\leq M. 
\end{align}
We now define
\begin{align*}
 \cos\theta_k=\frac{s_k^\top \bar{B}_k s_k}{\|s_k\| \|\bar{B}_k s_k\|},\qquad
 q_k=\frac{s_k^\top \bar{B}_k s_k}{\|s_k\|^2}. 
\end{align*}
Then the trace of $\bar{B}_{k+1}$ is bounded above. Indeed, the inequality 
\begin{align*}
 \tr{\bar{B}_{k+1}}
 &=
 \tr{\bar{B}_k}-\frac{\|\bar{B}_k s_k\|^2}{s_k^\top \bar{B}_k s_k}+
 \frac{\|y_k\|^2}{\nu(\det{B_{k+1}})s_k^\top y_k}
 \leq 
 \tr{\bar{B}_k}-\frac{q_k}{\cos^2\theta_k}+\frac{M}{\nu(\det{B_{k+1}})}, 
\end{align*}
holds, where \eqref{eqn:proof-eigen-inequ-2} is used. 
Using the formula 
$\det(I+xy^\top+uv^\top)=(1+x^\top y)(1+u^\top v^\top)-(x^\top v)(y^\top u)$ for 
$\bar{B}_{k+1}$, we obtain a lower bound of the determinant $\det(\bar{B}_{k+1})$ such that 
\begin{align*}
 \det(\bar{B}_{k+1}) 
 &= 
 \det(\bar{B}_k)
 \frac{1}{\nu(\det{B_{k+1}})}\frac{\|s_k\|^2}{s_k^\top \bar{B}_k s_k}
 \frac{s_k^\top y_k}{\|s_k\|^2} 
 \geq \det(\bar{B}_k) \frac{m}{q_k\nu(\det{B_{k+1}})}. 
\end{align*}
These inequalities present an upper bound of $\psi(\bar{B}_{k+1})$, 
\begin{align*}
 \psi(\bar{B}_{k+1})
 &\leq
 \psi(\bar{B}_k)
 +\bigg(\frac{M}{\nu(\det{B_{k+1}})}-\log\frac{m}{\nu(\det{B_{k+1}})}-1\bigg)\\
 &\phantom{\leq}
 +\bigg(1-\frac{q_k}{\cos^2\theta_k}+\log\frac{q_k}{\cos^2\theta_k}\bigg)+\log\cos^2\theta_k\\
 &\leq 
 \psi(\bar{B}_k)
 +\bigg(\frac{M}{L_1}-\log\frac{m}{L_2}-1\bigg)
 +\log\cos^2\theta_k. 
\end{align*}
The second inequality is derived from
\begin{align*}
1-\frac{q_k}{\cos^2\theta_k}+\log\frac{q_k}{\cos^2\theta_k}\leq 0. 
\end{align*}
As the result we obtain
\begin{align*}
 0<\psi(\bar{B}_{k+1})\leq  \psi(\bar{B}_0)+c(k+1)+\sum_{j=1}^k\log\cos^2\theta_j, 
\end{align*}
where $c$ is a positive constant such that $c>\frac{M}{L_1}-\log\frac{m}{L_2}-1$. 
Let us then proceed by contradiction and assume that $\cos\theta_j\rightarrow 0$. Then
 there exists $k_1>0$ such that for all $j>k_1$, we have 
 \begin{align*}
 \log \cos^2\theta_j<-2c. 
 \end{align*}
 %where $c$ is the constant defined above. 
Thus the following inequality holds for all $k>k_1$:
 \begin{align*}
  0
  &<\psi(\bar{B}_0)+c(k+1)+\sum_{j=1}^{k_1}\log\cos^2\theta_j+(k-k_1)(-2c) \\
  &=\psi(\bar{B}_0)+\sum_{j=1}^{k_1}\log\cos^2\theta_j+c(2k_1+1)-2ck. 
 \end{align*}
The right-hand-side is negative for large $k$, giving a contradiction. Therefore there
exists a subsequence satisfying $\cos\theta_{j_k}\geq \delta>0$. 
By Zoutendijk's result\footnote{Under some condition, 
$\sum_{j\geq0}\cos^2\theta_j\|\nabla f(x_j)\|^2<\infty$ holds. See Theorem 3.2 in
\cite{nocedal99:_numer_optim}}
with the Wolfe condition, 
this limit implies that $\liminf_{k\rightarrow\infty}\|\nabla f(x_k)\|=0$. 
The convexity of $f$ on $\mathcal{L}$ guarantees that $x_k$ converges to the local optimal
solution. 
\end{proof}
The potential defined in Example \ref{example:bounded-div} meets the condition of Theorem
\ref{theorem:convergence}, while the power potential $V(z)=(1-z^\gamma)/\gamma$ with
$\nu(z)=z^\gamma$ does not satisfy the condition.

\section{Robustness against Inexact Line Search}
\label{sec:Robustness}
The robustness against numerical errors such as the round-off error is an important
feature in numerical computation. 
In this section we study the robustness of quasi-Newton update against numerical
errors involved in the line search. 
Mainly there are two types of quasi-Newton updates: 
one is the update formula for approximate Hessian matrix;  and the other is the update for approximate
{\em inverse} Hessian matrix. In the approximate inverse Hessian update, the matrix 
$H_k=B_k^{-1}$ is directly update to $H_{k+1}=B_{k+1}^{-1}$ under the secant condition 
$H_{k+1}y_k=s_k$. 
We study four kinds of update formulae, that is, $V$-BFGS/$V$-DFP method for 
the Hessian approximation/the inverse Hessian approximation. 
%It is well-known that the duality holds for $B_k$ and $H_k=B_k^{-1}$. 
%In fact, in the DFP-update the equality 
%\begin{align*}
% H_{k+1}=B^{DFP}[B_k;s_k,y_k]^{-1}=B^{BFGS}[H_k;y_k,s_k]
%\end{align*}
%holds, and in the BFGS-update the equality
%\begin{align*}
% H_{k+1}=B^{BFGS}[B_k;s_k,y_k]^{-1}=B^{DFP}[H_k;y_k,s_k]
%\end{align*}
%is satisfied. 

Let us consider the Hessian approximation formula. 
Under the exact line search, the matrix $B_k$ is updated to $B_{k+1}$ which is the 
minimum solution of $D_V(B,B_k)$ or $D_V(B^{-1},B_k^{-1})$ subject to  $Bs_k=y_k$. Let 
\begin{align*}
x_{k+1}=x_k-\alpha_k B_k^{-1}\nabla f(x_k)=x_k+s_k
\end{align*}
be the point computed by the exact line search. When the line search is inexact, 
the step length $\alpha_k$ will be slightly perturbed and then $s_k$ will be changed to 
$(1+\varepsilon)s_k$ where $\varepsilon$ is an infinitesimal. 
The vector $y_k$ will also change to $\widetilde{y}_k$ defined by
\begin{align*}
 \widetilde{y}_k=\nabla f(x_k+(1+\varepsilon)s_k)-\nabla f(x_k)=y_k+\varepsilon\nabla^2 
 f(x_{k+1})s_k+O(\varepsilon^2). 
\end{align*}
Then the constraint for the Hessian update becomes $(1+\varepsilon)Bs_k=\widetilde{y}_k$. 

We study the relation between the perturbation of $s_k$ and the Hessian approximation
$B_{k+1}$ or the inverse Hessian approximation $H_{k+1}$. 
Based on the above argument, we consider the optimization problem defined by 
\begin{align}
 \label{eqn:perturbed-V-BFGS}
 \text{($V$-BFGS-B) }&\quad
 \min_{B\in \mathrm{PD}(n)} D_V(B,B_k)
 \quad \subto\ \ (1+\varepsilon)Bs=y+\varepsilon
 \bar{y},\\
 \label{eqn:perturbed-V-DFP}
 \text{($V$-DFP-B) }&\quad
 \min_{B\in \mathrm{PD}(n)} D_V(B^{-1},B_k^{-1})
 \quad \subto\ \ (1+\varepsilon)Bs=y+\varepsilon
 \bar{y}
\end{align}
for a fixed matrix $B_k\in\mathrm{PD}(n)$ and fixed vectors $s, y, \bar{y}\in\Real^n$,
where the subscript $k$ for the vectors is dropped for simplicity. 
In the same way, the update formula for the inverse Hessian under the inexact line search
is defined as the optimal solution of the following problem, 
\begin{align}
 \label{eqn:perturbed-V-BFGS-H}
 \text{($V$-BFGS-H) }&\quad
 \min_{H\in \mathrm{PD}(n)} D_V(H^{-1},H_k^{-1})\quad 
 \subto\ \ H(y+\varepsilon\bar{y})=(1+\varepsilon)s, \\
 \label{eqn:perturbed-V-DFP-H}
 \text{($V$-DFP-H) }&\quad
 \min_{H\in \mathrm{PD}(n)} D_V(H,H_k)\quad 
 \subto\ \ H(y+\varepsilon\bar{y})=(1+\varepsilon)s, 
\end{align}
for fixed $H_k\in\mathrm{PD}(n),\, s, y, \bar{y}\in\Real^n$. 
The update formula given by $V$-BFGS-H/$V$-DFP-H directly provides the inverse matrix of 
$B_{k+1}$ computed by $V$-BFGS-B/$V$-DFP-B, respectively. 
Theorem \ref{theorem:V-BFGS-form} guarantees that there exists the unique optimal solution 
as long as $s^\top(y+\varepsilon\bar{y})>0$ holds. 
Though Theorem \ref{theorem:V-BFGS-form} deals with only $V$-BFGS-B formula, 
we can prove the existence and the uniqueness of optimal solution for the other problems
in the same manner. 

In order to study the robustness of update formulae, 
we borrow the concepts such that the influence function or the gross error sensitivity
from the study of robust statistics \cite{Hampel_etal86}. 
Below the $V$-BFGS-B update formula is considered as an example. 
Let $B(\varepsilon)$ be the optimal solution of $V$-BFGS-B in 
\eqref{eqn:perturbed-V-BFGS}. Then the {\em influence function} of $B(\varepsilon)$ is 
defined as the derivative of $B(\varepsilon)$ at $\varepsilon=0$, that is, 
\begin{align*}
\dot{B}(0)=\lim_{\varepsilon\rightarrow0}
 \frac{B(\varepsilon)-B(0)}{\varepsilon}.
\end{align*}
Later we prove the differentiability of $B(\varepsilon)$. 
From the definition of the influence function, 
the optimal solution $B(\varepsilon)$ is asymptotically equal to
$B(0)+\varepsilon\dot{B}(0)$. This implies that 
the inexact line search has a large impact on the computation of 
Hessian approximation, when the norm of $\dot{B}(0)$ is large. 
In the sense of the influence function, the preferable potential is the function $V$ 
which provides the influence function $\dot{B}(0)$ with a small norm. 

For fixed vectors $s$ and $y$ such that $s^\top y>0$, the influence function $\dot{B}(0)$ 
depends on the matrix $B_k$ and the vector $\bar{y}$. 
We consider the worst-case evaluation of the influence function in terms of $B_k$ and
$\bar{y}$. The {\em gross error sensitivity} is defined as the largest norm of the influence
function, that is, 
\begin{align*}
 \text{gross error sensitivity}=\sup
 \big\{\|\dot{B}(0)\|_F~|~
 B_k\in\mathcal{B}\subset\mathrm{PD}(n),\,\bar{y}\in
 {\mathcal{Y}}\subset\Real^n\big\}, 
\end{align*}
where $\mathcal{B}\subset\mathrm{PD}(n)$ and $\mathcal{Y}\subset\Real^n$ are 
appropriate subsets. In many case, the gross error sensitivity becomes infinity if
$\mathcal{B}$ or ${\mathcal{Y}}$ is unbounded. 
Our concern is to find the potential function $V$ which leads finite gross error
sensitivity under some reasonable setup. 

The influence function and the gross error sensitivity have been studied in robust
statistics \cite{Hampel_etal86}. 
We use these statistical techniques to analyze the stability of numerical computation. 
In the literature of statistics, the ``statistical model'' $\{B\in\mathrm{PD}(n)~|~Bs_k=y_k\}$ 
or $\{H\in\mathrm{PD}(n)~|~Hy_k=s_k\}$ is fixed, 
and the ``observed data'' $B_k$ or $H_k$ is contaminated such that 
$B_k+\varepsilon \dot{B}(0)+O(\varepsilon^2)$, 
while in the present analysis, the matrix $B_k=H_k^{-1}$ is fixed and 
the model corresponding to the secant condition is perturbed. 

The potential function minimizing the gross error sensitivity will be preferable for
robust computation. 
Below we prove that the standard BFGS update for the Hessian approximation 
is the more robust than the other update formulae. This result meets the empirical 
observations \cite{conn88:_testin_class_of_algor_for,nocedal99:_numer_optim}. 
Moreover, only the standard BFGS update for the Hessian approximation
has finite gross error sensitivity. Theoretical results are summarized in Table 
\ref{tbl:gross_error_sesitivity}. 

\begin{table*}[tb]
\caption{Gross error sensitivity of $V$-BFGS formula and $V$-DFP formula for the Hessian 
 approximation and the inverse Hessian approximation. Only the standard BFGS for the
 Hessian approximation has finite gross error sensitivity.} 
 \label{tbl:gross_error_sesitivity}
 \centering\vspace*{2mm}
\begin{tabular}{c|c|c} 
                    & $V$-BFGS              & {\qquad \,\, $V$-DFP\qquad\,\,  }   \\\hline
% $B$ update         
  Hessian approx.       & finite only for BFGS  & $\infty$  \\\hline
 inverse Hessian approx.& $\infty$              & $\infty$  \\ 
\end{tabular}
\end{table*}

In the following, the gross error sensitivity with $\mathcal{B}=\mathrm{PD}(n)$ and a
bounded subset ${\mathcal{Y}}$ is considered. 
Note that the boundedness of $\mathcal{Y}$ follows the assumption that $\|\nabla^2f\|_F$
is bounded above over $\Real^n$. 
First, we note that the influence function and the gross error sensitivity make sense for
minimization of non-quadratic functions. 
\begin{lemma}
 \label{lemma:quadratic_function_grosserror}
 Suppose that the objective function $f(x)$ is a convex quadratic function. 
 Then, the influence function and the gross error sensitivity are equal to zero. 
\end{lemma}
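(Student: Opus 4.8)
The plan is to observe that for a convex quadratic objective the perturbation vector $\bar y$ appearing in the inexact line search coincides with $y$ itself, so that the perturbed constraints in \eqref{eqn:perturbed-V-BFGS}--\eqref{eqn:perturbed-V-DFP-H} collapse onto the unperturbed ones; the optimal matrix then does not move at all as $\varepsilon$ varies, and both the influence function and the gross error sensitivity vanish trivially.

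First I would write $f(x)=\frac12 x^\top A x + b^\top x + c$ with $A\in\mathrm{PD}(n)$, $b\in\Real^n$, $c\in\Real$, so that $\nabla f(x)=Ax+b$ and $\nabla^2 f(x)=A$ for every $x$. By Lemma \ref{lemma:averaged_Hessian} the averaged Hessian is $\bar G=A$, hence $y=y_k=\nabla f(x_{k+1})-\nabla f(x_k)=As_k=As$. Moreover, since the Hessian is constant, $\nabla f(x_k+(1+\varepsilon)s_k)-\nabla f(x_k)=(1+\varepsilon)As_k=(1+\varepsilon)y_k$ \emph{exactly} (no $O(\varepsilon^2)$ remainder), so the perturbation vector of the line search is $\bar y=\nabla^2 f(x_{k+1})s_k=As=y$. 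The entire content of the lemma is this identity $\bar y=y$.

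Substituting $\bar y=y$ into the constraints then finishes the argument. For the Hessian updates \eqref{eqn:perturbed-V-BFGS} and \eqref{eqn:perturbed-V-DFP} the constraint becomes $(1+\varepsilon)Bs=y+\varepsilon y=(1+\varepsilon)y$, which for $\varepsilon>-1$ is equivalent to $Bs=y$; for the inverse Hessian updates \eqref{eqn:perturbed-V-BFGS-H} and \eqref{eqn:perturbed-V-DFP-H} the constraint $H(y+\varepsilon y)=(1+\varepsilon)s$ is equivalent to $Hy=s$. In all four cases the feasible set is exactly that of the unperturbed problem and is independent of $\varepsilon$, so by Theorem \ref{theorem:V-BFGS-form} and its analogues the unique minimizer satisfies $B(\varepsilon)=B(0)$ (resp.\ $H(\varepsilon)=H(0)$) for every $\varepsilon>-1$. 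Hence $\dot B(0)=\lim_{\varepsilon\to0}(B(\varepsilon)-B(0))/\varepsilon=0$, and since this holds for every admissible $B_k$ the gross error sensitivity $\sup\{\|\dot B(0)\|_F~|~B_k\in\mathcal{B},\,\bar y\in\mathcal{Y}\}$ is also zero. I do not anticipate any real difficulty here; the only points needing care are to emphasize that $\bar y$ is not a free vector in this setting but is pinned down by $\nabla^2 f(x_{k+1})s_k$, and to note that the collapse of the constraint is verbatim the same for $V$-BFGS-B, $V$-DFP-B, $V$-BFGS-H and $V$-DFP-H.
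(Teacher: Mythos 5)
Your proof is correct and follows essentially the same route as the paper: the paper likewise observes that for a quadratic objective the perturbed secant condition becomes $B(1+\varepsilon)s=(1+\varepsilon)y$, so the constraint set is unchanged and $B(\varepsilon)=B(0)$. Your write-up merely makes explicit the identification $\bar y=\nabla^2 f(x_{k+1})s=y$ that the paper leaves implicit.
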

Lemma \ref{lemma:quadratic_function_grosserror} is clear, since for the quadratic
objective function the secant condition $Bs=y$ is changed to
$B(1+\varepsilon)s=(1+\epsilon)y$ under the inexact line search. 
That is, the secant condition is kept unchanged, and thus $B(\varepsilon)=B(0)$ holds. 

We prove that generally the influence function is well-defined. 
\begin{theorem}
 \label{theorem:existence-differentiability}
 Suppose that $s^\top y>0$ holds for vectors $s$ and $y$ in 
 the problems \eqref{eqn:perturbed-V-BFGS}, \eqref{eqn:perturbed-V-DFP}, 
 \eqref{eqn:perturbed-V-BFGS-H} and \eqref{eqn:perturbed-V-DFP-H}. 
 Then, for small $\varepsilon$, the optimal solutions 
 of $V$-BFGS-B, $V$-DFP-B, $V$-BFGS-H and $V$-DFP-H are all uniquely determined. 
 The optimal solutions are second-order continuously differentiable with respect to
 $\varepsilon$ in the vicinity of $\varepsilon=0$.  
\end{theorem}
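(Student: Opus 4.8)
The plan is to collapse the four problems into one template and then run the implicit function theorem on the associated Lagrange (KKT) system. First I would rewrite each constraint in the form $X a(\varepsilon)=b(\varepsilon)$ for a positive-definite unknown $X$, where $a$ and $b$ are affine --- hence $C^{\infty}$ --- in $\varepsilon$: for $V$-BFGS-B take $X=B$, $a(\varepsilon)=(1+\varepsilon)s$, $b(\varepsilon)=y+\varepsilon\bar y$; for $V$-DFP-H take $X=H$, $a(\varepsilon)=y+\varepsilon\bar y$, $b(\varepsilon)=(1+\varepsilon)s$; for $V$-BFGS-H substitute $X=H^{-1}$ and for $V$-DFP-B substitute $X=B^{-1}$, which turns the objective into $D_V(X,X_k)$ verbatim (with $X_k:=H_k^{-1}$, resp. $B_k^{-1}$) and the constraint into the same affine form. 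In every case $a(0)\in\{s,y\}$ is nonzero, since $s^{\top}y>0$ forces both $s\neq0$ and $y\neq0$, and $a(0)^{\top}b(0)=s^{\top}y>0$, so $a(\varepsilon)^{\top}b(\varepsilon)>0$ for small $\varepsilon$; Theorem~\ref{theorem:V-BFGS-form} then supplies, for each such $\varepsilon$, a unique minimizer $X(\varepsilon)\in\mathrm{PD}(n)$, and because matrix inversion is $C^{\infty}$ on $\mathrm{GL}(n)$, any smoothness of $X(\varepsilon)$ transfers to $X(\varepsilon)^{-1}$. Thus it suffices to prove $C^{2}$ dependence on $\varepsilon$ for the template problem
\[
 \min_{X\in\mathrm{PD}(n)}\ D_V(X,X_k)\qquad\text{subject to}\qquad Xa(\varepsilon)=b(\varepsilon).
\]

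Next I would write the first-order optimality conditions. With $\varphi(X)=V(\det X)$ and $\nabla\varphi(X)=-\nu(\det X)X^{-1}$, a multiplier $\mu\in\Real^{n}$ yields the stationarity equation $-\nu(\det X)X^{-1}+\nu(\det X_k)X_k^{-1}=\tfrac12\big(\mu\,a(\varepsilon)^{\top}+a(\varepsilon)\,\mu^{\top}\big)$, which together with $Xa(\varepsilon)=b(\varepsilon)$ defines a map
\[
 F:\mathrm{Sym}(n)\times\Real^{n}\times\Real\longrightarrow\mathrm{Sym}(n)\times\Real^{n},\qquad F\big(X(0),\mu(0),0\big)=0,
\]
with $X(0)$ the minimizer from Theorem~\ref{theorem:V-BFGS-form} and $\mu(0)$ its (unique, since $a(0)\neq0$) multiplier. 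Because $V$ is third-order continuously differentiable, $\nu$ is $C^{2}$ and hence $\nabla\varphi$ is $C^{2}$, so $F$ is $C^{2}$. Applying the implicit function theorem at $(X(0),\mu(0),0)$ produces a $C^{2}$ curve $\varepsilon\mapsto(X(\varepsilon),\mu(\varepsilon))$ solving $F=0$ near $\varepsilon=0$; by uniqueness of the minimizer this $X(\varepsilon)$ is the optimal solution, and undoing the substitutions gives the claim for all four update formulae.

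The one step that needs real work is the hypothesis of the implicit function theorem, namely nonsingularity of $D_{(X,\mu)}F$ at $(X(0),\mu(0),0)$. This operator has the saddle-point block form $\begin{pmatrix}\nabla^{2}\varphi(X(0)) & -\mathcal{A}\\ \mathcal{A}^{\ast} & 0\end{pmatrix}$, where $\mathcal{A}\mu=\tfrac12(\mu a^{\top}+a\mu^{\top})$ with $a=a(0)$, and its adjoint is $\mathcal{A}^{\ast}\Delta=\Delta a$. The block $\nabla^{2}\varphi(X(0))$ is positive definite on $\mathrm{Sym}(n)$, precisely because a potential $V$ makes $\varphi=V(\det\cdot)$ strictly convex on $\mathrm{PD}(n)$ --- the equivalence recalled after the definition of potential function, i.e. conditions \eqref{eqn:nu-condition} and \eqref{eqn:beta-condition} --- and $\mathcal{A}$ has full column rank $n$ since $a\neq0$. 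A positive-definite $(1,1)$-block together with a full-rank constraint block forces the KKT matrix to be invertible: if $(\Delta,\eta)$ is in its kernel, then $\mathcal{A}^{\ast}\Delta=0$ and $\nabla^{2}\varphi(X(0))\Delta=\mathcal{A}\eta$, so $\langle\nabla^{2}\varphi(X(0))\Delta,\Delta\rangle=\eta^{\top}\mathcal{A}^{\ast}\Delta=0$ gives $\Delta=0$, whence $\mathcal{A}\eta=0$ and $\eta=0$. Everything else --- casting the four constraints, checking $a(\varepsilon)^{\top}b(\varepsilon)>0$ for small $\varepsilon$, verifying that the affine constraint remains a submersion (again from $a(\varepsilon)\neq0$), and tracking the $C^{2}$ regularity through $V\in C^{3}$ and through matrix inversion --- is routine. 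I therefore expect the main obstacle to be organizational, handling all four cases uniformly, with the genuine analytic content being the strict convexity of $\varphi$ on $\mathrm{PD}(n)$ that the definition of potential function was designed to guarantee.
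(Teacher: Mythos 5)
Your proof is correct, but it runs the implicit function theorem on a different defining equation than the paper does. The paper first reduces everything to the closed-form self-consistent update of Theorem \ref{theorem:V-BFGS-form} and applies the implicit function theorem to the single matrix equation $F(X,\varepsilon)=O$ built from that formula, checking by direct computation that $\nabla_X\<F(X,\varepsilon),A\>$ at $(B(0),0)$ is a rank-one perturbation of $\tfrac{1}{\nu(\det B(0))}A$ (the paper only asserts this derivative ``does not vanish''; strictly one needs invertibility of the linear map $A\mapsto \nabla_X\<F,A\>$ on $\mathrm{Sym}(n)$, so your route is if anything tighter on this point). You instead apply the implicit function theorem to the full KKT system of the template problem $\min_X D_V(X,X_k)$ subject to $Xa(\varepsilon)=b(\varepsilon)$, and get invertibility of the saddle-point Jacobian from positive definiteness of $\nabla^2\varphi(X(0))$ plus injectivity of $\mu\mapsto\mu a^\top+a\mu^\top$ for $a\neq0$. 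Your reduction of the four problems to the one template (via $X=B$, $X=B^{-1}$, $X=H$, $X=H^{-1}$, with smoothness transferred through matrix inversion) is the same in spirit as the paper's remark that ``the same proof works for the other update formulae,'' but is more explicit and arguably cleaner; your approach also does not need the closed-form solution at all, only the first-order conditions. One small repair: strict convexity of $\varphi$ does not by itself imply that $\nabla^2\varphi$ is positive definite (consider $t\mapsto t^4$ at the origin). What you actually need, and what conditions \eqref{eqn:nu-condition} and \eqref{eqn:beta-condition} do deliver, is the identity
\begin{align*}
 \<\nabla^2\varphi(P)\Delta,\Delta\>
 =\nu(\det P)\Big(\|S\|_F^2-\beta(\det P)\,(\tr{S})^2\Big),
 \qquad S=P^{-1/2}\Delta P^{-1/2},
\end{align*}
which is strictly positive for $\Delta\neq0$ since $(\tr{S})^2\leq n\|S\|_F^2$ and $\beta<1/n$. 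With that one line inserted, your argument is complete.
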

Proof is deferred to Appendix \ref{appendix:proof_theorem_existence_optsol}. 

The gross error sensitivity of each update formula is computed in the following theorems. 
Proofs are deferred to Appendix \ref{appendix:robustness-sensitivity}. 
\begin{theorem}[gross error sensitivity of $V$-BFGS-B]
 \label{theorem:BFGS-B-robustness}
 Suppose $n\geq 3$. 
 Let $s$ and $y$ be fixed vectors such that $s^\top y>0$
 and ${\mathcal{Y}}$ be a bounded subset in $\Real^n$. 
 For small $\varepsilon$, let $B(\varepsilon)$ be the optimal solution of $V$-BFGS-B in 
 \eqref{eqn:perturbed-V-BFGS}. 
 Then, the optimal potential function of the problem 
 \begin{align}
  \label{eqn:min-max-general-V}
  \min_{V}\max_{B_k,\,\bar{y}}\|\dot{B}(0)\|_F\quad 
  \subto\ B_k\in\mathrm{PD}(n),\ \ \bar{y}\in{\mathcal{Y}}
 \end{align}
 is given as $V(z)=-\log(z)$ up to a constant factor. 
 In the above min-max problem, the function $V$ is sought from among all potentials. 
\end{theorem}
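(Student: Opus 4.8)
The plan is to derive an explicit formula for the influence function $\dot{B}(0)$ of the $V$-BFGS-B update, expressing it in terms of the fixed data $s, y$, the matrix $B_k$, the perturbation direction $\bar{y}$, and the single scalar $\mu := \nu(\det B_{k+1})/\nu(\det B_k)$ (the self-scaling coefficient $\theta_k$ in \eqref{eqn:update-formula-V-BFGS}), together with the derivative $\dot\mu(0)$. The key point is that the entire dependence on the potential $V$ is funnelled through this single scalar $\mu$ and its perturbation derivative. Once the influence function is written as $\dot{B}(0) = (\text{term independent of } V) + \dot\mu(0)\cdot(\text{fixed matrix depending only on } s, y, B_k)$, the min-max problem \eqref{eqn:min-max-general-V} reduces to choosing $V$ so that $\dot\mu(0)$ has the best worst-case behaviour over $B_k \in \mathrm{PD}(n)$ and $\bar{y} \in \mathcal{Y}$, and one shows that the choice $\dot\mu(0) \equiv 0$ — i.e. $\mu \equiv 1$, i.e. $\nu \equiv \text{const}$, i.e. $V(z) = -\log z$ — both minimizes and makes the worst-case norm finite.

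First I would set up the perturbed optimality conditions: differentiate the Lagrangian (or the implicit update formula \eqref{eqn:update-formula-V-BFGS} together with the determinant relation \eqref{eqn:determinant-V-BFGS}) with respect to $\varepsilon$ at $\varepsilon = 0$, using Theorem \ref{theorem:existence-differentiability} to justify differentiability. This yields a linear system for $\dot{B}(0)$ together with an equation for $\dot\mu(0)$ obtained by differentiating \eqref{eqn:determinant-V-BFGS}, where the factor $z/\nu(z)^{n-1}$ being strictly monotone (shown in the proof of Theorem \ref{theorem:V-BFGS-form}) guarantees the derivative is well-defined; the scalar $\beta_V(z) < 1/n$ condition controls the coefficient. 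Concretely, differentiating \eqref{eqn:determinant-V-BFGS} gives $\dot\mu(0)$ as a multiple of $(\text{some explicit scalar involving } s, y, B_k, \bar{y})/(1 - n\beta_V(\det B_{k+1}))$, so $\dot\mu(0) = 0$ for all data precisely when $\beta_V \equiv 0$, which characterizes $\nu \equiv \text{const}$ and hence $V = -\log z$.

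Next I would compute the two pieces of $\dot{B}(0)$ separately. The $V$-independent piece comes from perturbing the secant constraint $(1+\varepsilon)Bs = y + \varepsilon\bar{y}$; since the standard BFGS update is the case $\mu \equiv 1$, this piece is exactly $\dot{B}_{\mathrm{BFGS}}(0)$, which one checks is bounded in Frobenius norm uniformly over $B_k \in \mathrm{PD}(n)$ and $\bar{y}$ in a bounded set (this is where $n \geq 3$ enters, presumably to control a rank-deficiency issue in the linearized constraint, and where the boundedness of $\mathcal{Y}$ together with $s^\top y > 0$ keeps the relevant denominators away from zero). The second piece is $\dot\mu(0)$ times the fixed matrix $M := B^{BFGS}[B_k;s,y] - y y^\top/(s^\top y)$; the crucial observation is that $\|M\|_F$ is \emph{unbounded} over $B_k \in \mathrm{PD}(n)$ (it contains the term $-B_k s s^\top B_k/(s^\top B_k s)$ whose norm blows up), so unless $\dot\mu(0) \equiv 0$ the gross error sensitivity is infinite. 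Hence any potential with $\beta_V \not\equiv 0$ has infinite gross error sensitivity, while $V = -\log z$ achieves the finite minimum given by $\sup \|\dot{B}_{\mathrm{BFGS}}(0)\|_F$.

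\textbf{The main obstacle} I anticipate is the explicit linear-algebra computation of $\dot{B}_{\mathrm{BFGS}}(0)$ from the perturbed secant condition and the verification that its Frobenius norm is bounded uniformly over \emph{all} of $\mathrm{PD}(n)$ — this requires carefully tracking which terms in the differentiated BFGS formula could in principle blow up (the $B_k s s^\top B_k/(s^\top B_k s)$ term and its perturbation) and showing the potentially-divergent contributions cancel, leaving only terms controlled by $s, y, \bar{y}$ and the bounded quantities $y y^\top/(s^\top y)$, $B_k s/(s^\top B_k s)$ paired so that scale invariance makes them bounded. The role of the hypothesis $n \geq 3$ should become clear at exactly this step, likely in ensuring the matrix governing the linearized constraint has the right rank so that $\dot{B}(0)$ is uniquely pinned down without a spurious unbounded null-space direction. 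The reduction of the $V$-dependence to the single scalar $\mu$, and the blow-up of $\|M\|_F$, are comparatively routine once the formula \eqref{eqn:update-formula-V-BFGS} is in hand.
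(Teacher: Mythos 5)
Your overall strategy is the same as the paper's: compute $\dot{B}(0)$ explicitly, observe that all dependence on $V$ enters through a single scalar coefficient multiplying the matrix $M=B^{BFGS}[B_k;s,y]-yy^\top/(s^\top y)=B_k-B_kss^\top B_k/(s^\top B_ks)$, show that the $V$-independent remainder is bounded (indeed, when $\bar s=s$ the $B_k$-dependent terms cancel exactly, as you anticipated), and conclude that finiteness of the worst case forces the coefficient to vanish identically, i.e.\ $\beta_V\equiv 0$, i.e.\ $V(z)=-\log z$. This matches the paper's Lemma \ref{lemma:asympto-Beps} followed by the argument in Appendix \ref{appendix:proof-V-BFGS-B}.

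There is, however, one step in your plan that does not go through as written. You argue that since $\|M\|_F$ is unbounded over $B_k\in\mathrm{PD}(n)$, any potential with $\dot\mu(0)\not\equiv 0$ has infinite sensitivity, and you call the blow-up of $\|M\|_F$ ``comparatively routine.'' But the scalar coefficient is (up to the fixed factor $(\bar y-y)^\top s/(s^\top y)$) equal to $\beta(\det B(0))/(1-(n-1)\beta(\det B(0)))$, which itself depends on $B_k$ through $\det B(0)$. A potential need only have $\beta\not\equiv 0$; $\beta$ may vanish or decay along the particular sequence of matrices you use to drive $\|M\|_F\to\infty$, in which case the product need not blow up. To close this you must produce, for a value $d$ with $\beta(d)\neq 0$, a sequence $\{B_i\}\subset\mathrm{PD}(n)$ with $B_is=y$ (so that $B(0)=B_i$), $\det B_i=d$ \emph{fixed}, and $\|B_i\|_F\to\infty$ simultaneously. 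This is exactly the content of the paper's Lemma \ref{lemma:B_i-existence-lemma}: one writes $B(a)=\bar B^{1/2}(I+ap_1p_1^\top+bp_2p_2^\top)\bar B^{1/2}$ with $p_1,p_2$ orthogonal to $\bar B^{1/2}s$ and $b$ chosen so that $(1+a)(1+b)\det\bar B=d$. This construction, not a rank condition on the linearized constraint, is where the hypothesis $n\geq 3$ is actually used (one needs two unit vectors orthogonal to each other and to $\bar B^{1/2}s$). You should also record the degenerate case $(\bar y-y)^\top s=0$ for all $\bar y\in\mathcal{Y}$, where the coefficient vanishes for every potential and $-\log$ is optimal only trivially.
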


\begin{theorem}[gross error sensitivity of $V$-DFP-B]
 \label{theorem:DFP-B-robustness}
 Suppose $n\geq 3$. 
 Let $s$ and $y$ be fixed vectors such that $s^\top y>0$
 and ${\mathcal{Y}}$ be a bounded subset in $\Real^n$. 
 Suppose that there exists an open subset included in ${\mathcal{Y}}$. 
 Let $B(\varepsilon)$ be the optimal solution of $V$-DFP-B in
 \eqref{eqn:perturbed-V-DFP}. 
% Then, $B(\varepsilon)$ is differentiable around $\varepsilon=0$, and 
 Then for any potential $V$, the equality
 \begin{align*}
  \sup\{\|\dot{B}(0)\|_F~|~ B_k\in\mathrm{PD}(n),\, 
  \bar{y}\in{\mathcal{Y}}\}=\infty
 \end{align*}
 holds. 
\end{theorem}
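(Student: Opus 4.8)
The plan is to compute the influence function $\dot{B}(0)$ of $V$-DFP-B explicitly along a carefully chosen unbounded family of reference matrices and to show that already a single entry of $\dot{B}(0)$ diverges. First I would reduce to the explicit update formula. Writing $s_\varepsilon=(1+\varepsilon)s$ and $y_\varepsilon=y+\varepsilon\bar y$, the perturbed constraint $(1+\varepsilon)Bs=y+\varepsilon\bar y$ reads $Bs_\varepsilon=y_\varepsilon$, so $B(\varepsilon)$ is the $V$-DFP update of $B_k$ relative to $(s_\varepsilon,y_\varepsilon)$, and Theorem~\ref{theorem:existence-differentiability} gives that $B(\varepsilon)$ is $C^2$ near $\varepsilon=0$. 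Differentiating Eq.~\eqref{eqn:update-formula-V-DFP} at $\varepsilon=0$ gives
\begin{align*}
 \dot{B}(0)=c(0)\,\frac{d}{d\varepsilon}B^{DFP}[B_k;s_\varepsilon,y_\varepsilon]\Big|_{0}
 +\dot{c}(0)\Big(B^{DFP}[B_k;s,y]-\frac{yy^\top}{s^\top y}\Big)
 +(1-c(0))\,\frac{d}{d\varepsilon}\frac{y_\varepsilon y_\varepsilon^\top}{s_\varepsilon^\top y_\varepsilon}\Big|_{0},
\end{align*}
where $c(\varepsilon)=\nu((\det B_k)^{-1})/\nu((\det B(\varepsilon))^{-1})$. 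Using $B_ks_\varepsilon/(s_\varepsilon^\top y_\varepsilon)=B_ks/(s^\top y+\varepsilon s^\top\bar y)$, one finds $\frac{d}{d\varepsilon}B^{DFP}[B_k;s_\varepsilon,y_\varepsilon]|_{0}=-\tfrac{1}{s^\top y}\big(B_ks(\bar y^\perp)^\top+\bar y^\perp(B_ks)^\top\big)+\tfrac{s^\top B_ks}{(s^\top y)^2}\big(\bar y^\perp y^\top+y(\bar y^\perp)^\top\big)$ plus a remainder not involving $B_k$, where $\bar y^\perp:=\bar y-\tfrac{s^\top\bar y}{s^\top y}y$ satisfies $s^\top\bar y^\perp=0$. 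This is the mechanism of the blow-up: it grows linearly in $\|B_ks\|$ and $s^\top B_ks$, and is nonzero precisely when $\bar y$ is not a multiple of $y$.

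Next I would fix the bad family. Let $B_k^{(\mu)}$ be the positive definite matrix with eigenvalue $\mu$ along $s$ and eigenvalue $\mu^{-1/(n-1)}$ on $s^{\perp}$, so $\det B_k^{(\mu)}=1$, $B_k^{(\mu)}s=\mu s$, $s^\top B_k^{(\mu)}s=\mu\|s\|^2\to\infty$. Since $\mathcal{Y}$ has nonempty interior and $n\ge 3$, I can choose $\bar y\in\mathcal{Y}$ with $\bar y\notin\mathrm{span}\{s,y\}$ and $\bar y^\perp\cdot y\neq 0$ (both non-empty open conditions) together with a unit vector $w\in\mathrm{span}\{s,y\}^{\perp}$ (non-empty exactly because $n\ge3$). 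The subspaces $\mathrm{span}\{s,y\}$ and $\mathrm{span}\{s,y\}^{\perp}$ are $B^{(\mu)}(0)$-invariant, and a short computation then gives $(\bar y^\perp)^\top\dot{B}(0)\,w=c(0)\,\mu\cdot\tfrac{\|s\|^{2}(\bar y^\perp\cdot y)\,\|P_{\{s,y\}^{\perp}}\bar y\|^{2}}{(s^\top y)^{2}}+(\text{lower order})$: the $B_k$-free remainder contributes $O(1)$, and the $\dot{c}(0)$-term contributes only $O(c(0))$ once one solves the scalar identity that determines $\dot{c}(0)$ (whose denominator is $1-n\beta((\det B(0))^{-1})$, bounded away from $0$ since $\beta<1/n$ for any potential). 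Hence $\|\dot{B}(0)\|_{F}\ge|(\bar y^\perp)^\top\dot{B}(0)w|=\Theta(c(0)\mu)$, and it remains to prove $c(0)\mu\to\infty$ for every potential.

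That step is the main obstacle. Here I would use the $V$-DFP counterpart of the determinant identity \eqref{eqn:determinant-V-BFGS}, namely $g\big((\det B(0))^{-1}\big)=g\big((\det B_k)^{-1}\big)\cdot\tfrac{y^\top s}{y^\top B_k^{-1}y}$ with $g(z):=z/\nu(z)^{n-1}$ (obtained, like \eqref{eqn:update-formula-V-DFP}, by applying Theorem~\ref{theorem:V-BFGS-form} to $H=B^{-1}$). For $B_k^{(\mu)}$ one has $\det B_k^{(\mu)}=1$ and $y^\top(B_k^{(\mu)})^{-1}y\sim\mu^{1/(n-1)}\|P_{s^\perp}y\|^{2}$, so the right-hand side is $\Theta(\mu^{-1/(n-1)})\to 0$; since $g$ is increasing (noted after Theorem~\ref{theorem:V-BFGS-form}) and $g(0^{+})=0$ by \eqref{eqn:nu-limit-condition}, $w_0:=(\det B(0))^{-1}\to 0$, and therefore $c(0)=\nu(1)/\nu(w_0)=\Theta\big((w_0\mu^{1/(n-1)})^{-1/(n-1)}\big)$ and $c(0)\mu=\Theta\big(w_0^{-1/(n-1)}\mu^{\,1-1/(n-1)^{2}}\big)\to\infty$ for $n\ge 3$. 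The extra bound $w_0^{-1}=O(\mu^{n/(n-1)})$ needed to make the $\dot{c}(0)$-term genuinely of lower order follows from $\tfrac{d\log g}{d\log z}=1-(n-1)\beta(z)>\tfrac1n$, i.e. again from $\beta<1/n$. Finally, the construction excludes the case where $y$ is proportional to $s$ (then $\bar y^\perp\cdot y\equiv 0$); there I would place the large eigenvalue of $B_k$ along a direction $d$ not proportional to $s$ with $d^\top s\neq 0$, observe that $\mathrm{span}\{s,d\}$ and its complement are $B(0)$-invariant, and run the same argument with $d$ in place of $s$ in the relevant matrix entry.

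In short, the routine part is the differentiation of \eqref{eqn:update-formula-V-DFP} and the choice of the degenerate direction; the delicate part, which I would treat last, is squeezing the uniform estimate $c(0)\mu\to\infty$ (and the subordinate control of the implicit quantity $\dot{c}(0)$) out of the three defining conditions $\nu>0$, $\beta<1/n$, $\lim_{z\to0}z/\nu(z)^{n-1}=0$, valid for \emph{all} potentials at once.
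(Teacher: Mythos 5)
Your proposal is a genuinely different route from the paper's. The paper never tries to make a single family of matrices work for all potentials at once: it splits on whether $\beta_V\equiv 0$. When $\beta\not\equiv 0$ it restricts to $B_k$ with $B_ks=y$ (so $B(0)=B_k$ and the explicit influence-function formula from Lemmas \ref{lemma:asympto-Beps}--\ref{lemma:asympto-Heps} collapses to a two-term expression), fixes $\det B_k=d$ with $\beta(d^{-1})\neq 0$ via Lemma \ref{lemma:B_i-existence-lemma}, and lets $\|B_k\|_F\to\infty$, so the divergence comes from the coefficient $\frac{\beta}{1-(n-1)\beta}\,[B_k-yy^\top/s^\top y]$ with a \emph{fixed} nonzero scalar in front. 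When $\beta\equiv 0$ (standard DFP) it instead scales $B_k=c\bar B$ and extracts a rank-two term $-\frac{c}{s^\top y}Z$ that is linear in $c$. This case split completely avoids the uniform quantitative estimate that is the heart of your argument. You instead use one anisotropic family $B_k^{(\mu)}$ with $\det B_k^{(\mu)}=1$ and prove the blow-up of a single matrix entry by showing $c(0)\mu\to\infty$ for \emph{every} potential, via the $V$-DFP determinant identity $g((\det B(0))^{-1})=g(1)\cdot y^\top s/(y^\top B_k^{-1}y)$ together with $g$ increasing, $g(0^+)=0$, and $\frac{d\log g}{d\log z}=1-(n-1)\beta>1/n$. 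I checked that chain: $y^\top (B_k^{(\mu)})^{-1}y\sim\mu^{1/(n-1)}\|P_{s^\perp}y\|^2$ forces $w_0:=(\det B(0))^{-1}\to 0$, and your exponent bookkeeping $c(0)\mu=\Theta\bigl(w_0^{-1/(n-1)}\mu^{1-1/(n-1)^2}\bigr)\to\infty$ for $n\ge 3$ is correct, as is the bound $w_0^{-1}=O(\mu^{n/(n-1)})$. What your approach buys is a unified, quantitative picture of \emph{how fast} the sensitivity diverges as a function of the potential's tail behaviour; what the paper's approach buys is that each case reduces to an already-proved closed-form influence function, so no uniform-in-$V$ estimate is ever needed.

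Two points remain genuinely unfinished in your write-up and would need to be nailed down. First, the $\dot c(0)$ term: $\dot c(0)$ is determined only through the self-consistent relation involving $\langle\dot B(0),B(0)^{-1}\rangle$ (the quantity the paper calls $\delta$ in Lemma \ref{lemma:asympto-Beps}, whose resolved coefficient is $\beta/(1-(n-1)\beta)$, not $1-n\beta$ as you wrote), and you have not actually solved that relation along your family to verify that $\dot c(0)\cdot(\bar y^\perp)^\top\bigl(B^{DFP}[B_k;s,y]-yy^\top/s^\top y\bigr)w=o(c(0)\mu)$; asserting it is ``$O(c(0))$'' is not yet a proof, since $\dot c(0)$ itself contains $\langle\dot B(0),B(0)^{-1}\rangle$, which a priori could grow with $\mu$. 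The clean fix is to not re-derive the expansion but to plug your family directly into the paper's explicit formula $\dot B(0)=-B(0)\Delta[B_k^{-1};y,\bar y,s,s]B(0)$, where the self-consistency has already been resolved. Second, the degenerate case $y\propto s$ (where your test entry vanishes identically and, moreover, $y^\top B_k^{-1}y\to 0$ so $w_0\to\infty$ rather than $0$) is only sketched; note that the paper's argument handles it with no extra work because its diverging quantity in the $\beta\not\equiv 0$ case is $\|B_k\|_F$ itself, not a $y$-dependent matrix entry.
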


\begin{theorem}[gross error sensitivity of $V$-BFGS-H]
 \label{theorem:BFGS-H-robustness}
 Suppose $n\geq 4$. 
 Let $s$ and $y$ be fixed vectors such that $s^\top y>0$
 and ${\mathcal{Y}}$ be a bounded subset in $\Real^n$. 
 Suppose that there exists an open subset included in ${\mathcal{Y}}$. 
 Let $H(\varepsilon)$ be the optimal solution of $V$-BFGS-H in
 \eqref{eqn:perturbed-V-BFGS-H}. 
% Then, $(\varepsilon)$ is differentiable around $\varepsilon=0$, and 
 Then, for any potential $V$, the equality
 \begin{align*}
  \sup\{\|\dot{H}(0)\|_F~|~ H_k\in\mathrm{PD}(n),\, 
  \bar{y}\in{\mathcal{Y}}\}=\infty
 \end{align*}
 holds. 
\end{theorem}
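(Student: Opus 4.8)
The plan is to reduce the statement to the single fact that the \emph{unperturbed} solution $H(0)$ can have arbitrarily large spectral norm. Recall from the discussion preceding Table~\ref{tbl:gross_error_sesitivity} that substituting $H=B^{-1}$ and writing $B_k:=H_k^{-1}$ carries the $V$-BFGS-H constraint $H(y+\varepsilon\bar y)=(1+\varepsilon)s$ to $(1+\varepsilon)Bs=y+\varepsilon\bar y$ and the objective $D_V(H^{-1},H_k^{-1})$ to $D_V(B,B_k)$; hence $H(\varepsilon)=B(\varepsilon)^{-1}$ where $B(\varepsilon)$ solves the $V$-BFGS-B problem, and in particular $B(0)=H(0)^{-1}$ is the $V$-BFGS update \eqref{eqn:update-formula-V-BFGS} of $B_k$, so that $H(0)y=s$. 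Differentiability of $\varepsilon\mapsto H(\varepsilon)$ at $0$ is guaranteed by Theorem~\ref{theorem:existence-differentiability}. Differentiating the constraint at $\varepsilon=0$ gives $\dot H(0)\,y=s-H(0)\bar y$, and since the Frobenius norm dominates the operator norm,
\[
 \|\dot H(0)\|_F\ \ge\ \frac{\|\dot H(0)\,y\|}{\|y\|}\ =\ \frac{\|s-H(0)\bar y\|}{\|y\|}.
\]

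First I would prove $\sup_{H_k\in\mathrm{PD}(n)}\|H(0)\|_{\mathrm{op}}=\infty$ along a family for which the blow-up occurs in a \emph{fixed} unit direction $v_0$. Granting this, fix $t>0$ so that the open ball of radius $2t$ about $\bar y_0$ lies in the prescribed open subset of $\mathcal Y$, and set $\bar y^{(j)}=\bar y_0+\sigma_j\,t\,H(0)^{(j)}v_0/\|H(0)^{(j)}v_0\|$ with $\sigma_j=\pm1$ chosen so that $|\langle\bar y^{(j)},H(0)^{(j)}v_0\rangle|=|\langle\bar y_0,H(0)^{(j)}v_0\rangle|+t\,\|H(0)^{(j)}v_0\|$. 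Then $\bar y^{(j)}\in\mathcal Y$, and since $H(0)^{(j)}$ is symmetric and $\|v_0\|=1$ (Cauchy--Schwarz twice), $\|H(0)^{(j)}\bar y^{(j)}\|\ge|\langle\bar y^{(j)},H(0)^{(j)}v_0\rangle|\ge t\,\|H(0)^{(j)}v_0\|\ge t\,\bigl(v_0^\top H(0)^{(j)}v_0\bigr)\to\infty$, so the displayed bound gives $\|\dot H(0)^{(j)}\|_F\ge(\|H(0)^{(j)}\bar y^{(j)}\|-\|s\|)/\|y\|\to\infty$, which is the claim.

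It remains to construct the blow-up family for $H(0)$, and this is where the real work lies. Choose a unit $v_0$ orthogonal to both $s$ and $y$ (possible since $n\ge4>2$), complete $\{v_0,s/\|s\|\}$ to an orthonormal basis $\{v_0,s/\|s\|,w_2,\dots,w_{n-1}\}$, and take $B_k^{(j)}:=H_k^{(j)-1}$ diagonal in this basis with entries $(1/j,\,j,\,1,\dots,1)$, so $\det B_k^{(j)}=1$ and $s^\top B_k^{(j)}s=j\|s\|^2$. A direct computation gives $B^{BFGS}[B_k^{(j)};s,y]=\tfrac1j v_0v_0^\top+\sum_{i=2}^{n-1}w_iw_i^\top+\tfrac{yy^\top}{s^\top y}$, whence the self-scaling form of \eqref{eqn:update-formula-V-BFGS} yields $B(0)^{(j)}=\theta^{(j)}\bigl(\tfrac1j v_0v_0^\top+\sum_i w_iw_i^\top\bigr)+\tfrac{yy^\top}{s^\top y}$ with $\theta^{(j)}=\nu(\det B(0)^{(j)})/\nu(1)$; using $v_0\perp y$ this gives $v_0^\top B(0)^{(j)}v_0=\theta^{(j)}/j$ and so $v_0^\top H(0)^{(j)}v_0\ge j/\theta^{(j)}$ (Cauchy--Schwarz). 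Thus it suffices to show $\theta^{(j)}/j\to0$, which is the crux. By the determinant identity \eqref{eqn:determinant-V-BFGS} and $\det B^{BFGS}[B_k^{(j)};s,y]=\det B_k^{(j)}\cdot\tfrac{s^\top y}{s^\top B_k^{(j)}s}=\tfrac{s^\top y}{j\|s\|^2}$, the number $d_j:=\det B(0)^{(j)}$ satisfies $d_j/\nu(d_j)^{n-1}=C_j$ with $C_j\to0$; since $z\mapsto z/\nu(z)^{n-1}$ is increasing (as shown in the proof of Theorem~\ref{theorem:V-BFGS-form}) and tends to $0$ as $z\to+0$ by \eqref{eqn:nu-limit-condition}, we get $d_j\to0$. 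The decisive elementary point, and the place where $n\ge3$ (a fortiori $n\ge4$) enters, is that $d_j\to0$ together with $d_j/\nu(d_j)^{n-1}\to0$ forces $d_j/\nu(d_j)^{n-2}\to0$: otherwise $\nu(d_j)\le(d_j/\delta)^{1/(n-2)}\to0$ along a subsequence, making $d_j/\nu(d_j)^{n-1}\ge\delta/\nu(d_j)\to\infty$, a contradiction. Since $j$ is, up to a fixed constant, equal to $\nu(d_j)^{n-1}/d_j$, the quantity $\theta^{(j)}/j=\nu(d_j)/(j\,\nu(1))$ is a constant multiple of $d_j/\nu(d_j)^{n-2}\to0$, completing the argument. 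The main obstacle throughout is precisely this control of the implicitly defined self-scaling factor $\theta^{(j)}$, which must be shown to grow strictly slower than $j$ for \emph{every} admissible potential $V$; the remaining steps are elementary linear algebra and the differentiated secant constraint.
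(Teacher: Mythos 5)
Your proof is correct, but it follows a genuinely different route from the paper's. The paper computes the influence function explicitly via the expansion in Lemmas \ref{lemma:asympto-Beps} and \ref{lemma:asympto-Heps}, splits into the cases $\beta\not\equiv 0$ and $\beta\equiv 0$, restricts to matrices $H_k$ already satisfying the secant condition $H_ky=s$ (so that $H(0)=H_k$), and produces the blow-up either through the coefficient $\beta/(1-(n-1)\beta)$ multiplying $H_k$ (using Lemma \ref{lemma:B_i-existence-lemma} with $k=1$, which is where $n\geq 4$ enters) or, for $\beta\equiv0$, through the term $H_k\bar{y}s^\top+s\bar{y}^\top H_k$. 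You instead bypass the influence-function formula entirely: differentiating the secant constraint gives the identity $\dot H(0)y=s-H(0)\bar y$, hence $\|\dot H(0)\|_F\geq(\|H(0)\bar y\|-\|s\|)/\|y\|$, and the whole burden shifts to showing that the \emph{unperturbed} update $H(0)$ can be made arbitrarily large in a fixed direction. Your construction with $B_k^{(j)}=\mathrm{diag}(1/j,j,1,\dots,1)$ and the analysis of the implicit scaling factor via \eqref{eqn:determinant-V-BFGS} — in particular the elementary but decisive step that $d_j\to0$ and $d_j/\nu(d_j)^{n-1}\to0$ force $d_j/\nu(d_j)^{n-2}\to0$ for $n\geq3$ — is sound, and I checked the intermediate computations ($B^{BFGS}[B_k^{(j)};s,y]$, the determinant, the Cauchy--Schwarz steps, and the admissibility of $\bar y^{(j)}\in\mathcal{Y}$). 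What each approach buys: the paper's explicit expansion is reused across all four robustness theorems and is what makes the min-max optimality statement of Theorem \ref{theorem:BFGS-B-robustness} possible, whereas your argument is self-contained for this one theorem, treats all potentials uniformly without a case split, only requires $n\geq3$ rather than $n\geq4$, and isolates a conceptually different mechanism for the instability — the self-scaling factor $\nu(\det B_{k+1})/\nu(\det B_k)$ collapsing relative to $s^\top B_ks$ — rather than the $\beta$-dependent term of the influence function.
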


\begin{theorem}[gross error sensitivity of $V$-DFP-H]
 \label{theorem:DFP-H-robustness}
 Suppose $n\geq 3$. 
 Let $s$ and $y$ be fixed vectors such that $s^\top y>0$
 and ${\mathcal{Y}}$ be a bounded subset in $\Real^n$. 
 Let $H(\varepsilon)$ be the optimal solution of $V$-DFP-H in
 \eqref{eqn:perturbed-V-DFP-H}. 
% Then, $(\varepsilon)$ is differentiable around $\varepsilon=0$, and 
 Then, for any potential $V$, the equality
 \begin{align*}
  \sup\{\|\dot{H}(0)\|_F~|~ H_k\in\mathrm{PD}(n),\, 
  \bar{y}\in{\mathcal{Y}}\}=\infty
 \end{align*}
 holds. 
\end{theorem}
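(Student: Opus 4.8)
The plan is to turn $V$-DFP-H into a $V$-BFGS problem, read off a closed form from Theorem~\ref{theorem:V-BFGS-form}, and then expose the blow-up of the influence function by testing it against one cleverly chosen vector. First, note that \eqref{eqn:perturbed-V-DFP-H} is literally an instance of the $V$-BFGS problem \eqref{eqn:V-BFGS-hessian-update-prob}: minimizing $D_V(H,H_k)$ subject to $H\tilde s=\tilde y$ with $\tilde s=y+\varepsilon\bar y$ and $\tilde y=(1+\varepsilon)s$. Since $s^\top y>0$ we have $\tilde s^\top\tilde y>0$ for small $\varepsilon$, so Theorem~\ref{theorem:V-BFGS-form} yields a unique $H(\varepsilon)$, which is $C^2$ near $\varepsilon=0$ by Theorem~\ref{theorem:existence-differentiability}, of the self-scaling form
\[
 H(\varepsilon)=\theta(\varepsilon)\,\Big(H_k-\frac{H_k\tilde s\,\tilde s^\top H_k}{\tilde s^\top H_k\tilde s}\Big)+\frac{\tilde y\,\tilde y^\top}{\tilde s^\top\tilde y},\qquad \theta(\varepsilon)=\frac{\nu(\det H(\varepsilon))}{\nu(\det H_k)}.
\]
Writing $P(\varepsilon)=H_k\tilde s\,\tilde s^\top H_k/(\tilde s^\top H_k\tilde s)$ and $R(\varepsilon)=\tilde y\,\tilde y^\top/(\tilde s^\top\tilde y)$, differentiation at $\varepsilon=0$ gives $\dot H(0)=\dot\theta(0)\,(H_k-P(0))-\theta(0)\,\dot P(0)+\dot R(0)$.

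The crucial step is to evaluate $\dot H(0)$ on the fixed vector $y$. Since $\tilde s(0)=y$, we have $P(0)y=H_ky$, hence $(H_k-P(0))y=0$, so the entire $\dot\theta(0)$ contribution --- the only $V$-dependent, and a priori possibly unbounded or cancelling, term --- drops out, leaving $\dot H(0)\,y=-\theta(0)\,\dot P(0)\,y+\dot R(0)\,y$ uniformly in the potential. A direct differentiation (quotient rule, $\dot{\tilde s}(0)=\bar y$) gives $\dot P(0)\,y=H_k\bar y-\frac{y^\top H_k\bar y}{y^\top H_k y}\,H_ky$, while $\dot R(0)\,y=\big(1-\bar y^\top s/y^\top s\big)s$ is a fixed bounded vector.

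Finally I would drive $\|\dot H(0)\|_F$ to infinity along a one-parameter family of determinant-one matrices $H_k$. Pick $\bar y\in\mathcal{Y}$ with a nonzero component orthogonal to $y$ (available whenever $\mathcal{Y}\not\subseteq\mathrm{span}\{y\}$, e.g.\ when $\mathcal{Y}$ has nonempty interior), a unit vector $v\perp y$ with $v^\top\bar y\neq0$, extend $\{y/\|y\|,v\}$ to an orthonormal basis $\{y/\|y\|,v,e_3,\dots,e_n\}$ of $\Real^n$ (this uses $n\geq3$), and set
\[
 H_k^{(t)}=\frac{yy^\top}{\|y\|^2}+t\,vv^\top+\frac1t\,e_3e_3^\top+\sum_{j=4}^{n}e_je_j^\top\in\mathrm{PD}(n),\qquad t>0.
\]
Then $\det H_k^{(t)}=1$ and $y^\top H_k^{(t)}y=\|y\|^2$ are $t$-independent, so by \eqref{eqn:determinant-V-BFGS} together with the monotonicity of $z\mapsto z/\nu(z)^{n-1}$ the value $\det H(0)$, hence $\theta(0)=\theta_0>0$, is a fixed constant; meanwhile $H_k^{(t)}y=y$ and $y^\top H_k^{(t)}\bar y=y^\top\bar y$ are also $t$-independent, whereas $H_k^{(t)}\bar y$ carries the term $t(v^\top\bar y)v$, so $\dot P(0)\,y=t(v^\top\bar y)v+O(1)$ and $\dot H(0)\,y=-\theta_0\,t(v^\top\bar y)v+O(1)$. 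Since $v^\top\bar y\neq0$ and $\theta_0>0$, $\|\dot H(0)\|_F\geq\|\dot H(0)\,y\|/\|y\|\to\infty$ as $t\to\infty$, so the gross error sensitivity is infinite. The step I expect to be the main obstacle is controlling the $V$-dependent term $\dot\theta(0)(H_k-P(0))$ for a general potential; testing against $y\in\ker(H_k-P(0))$ removes it completely, and the only remaining care is to keep $\theta(0)$ bounded away from $0$, which the fixed-determinant construction guarantees.
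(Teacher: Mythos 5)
Your proof is correct, and it takes a genuinely different route from the paper's. The paper first derives the full influence function $\dot H(0)=\Delta[H_k;y,\bar y,s,s]$ from Lemma \ref{lemma:asympto-Beps}, restricts attention to $H_k$ satisfying $H_ky=s$ (so that $H(0)=H_k$), and then argues by a case split on whether $\beta_V$ vanishes identically: for $\beta\neq0$ it invokes the sequence of Lemma \ref{lemma:B_i-existence-lemma} to blow up the $\beta$-weighted term, and for $\beta=0$ it uses the explicit family $\bar H_0^{1/2}(I+i\,p_1p_1^\top)\bar H_0^{1/2}$. You avoid the case split entirely: by writing $H(\varepsilon)=\theta(\varepsilon)(H_k-P(\varepsilon))+R(\varepsilon)$ and testing $\dot H(0)$ against the single vector $y\in\ker(H_k-P(0))$, the entire potential-dependent contribution $\dot\theta(0)(H_k-P(0))$ is annihilated identically, for every $V$, and what remains is the unbounded term $-\theta_0\,H_k\bar y+O(1)$; your determinant-one family (which uses $n\geq3$ exactly where the hypothesis requires it, and keeps $\theta(0)$ pinned via \eqref{eqn:determinant-V-BFGS} and the monotonicity of $z/\nu(z)^{n-1}$) then finishes the job. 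Your computations of $\dot P(0)y$ and $\dot R(0)y$ check out. What the paper's heavier machinery buys is reusability: the explicit $\beta$-dependence of the influence function in Lemma \ref{lemma:asympto-Beps} is what makes the \emph{positive} result of Theorem \ref{theorem:BFGS-B-robustness} (finiteness precisely when $\beta=0$) possible, whereas your annihilation trick only yields lower bounds. What your argument buys is uniformity in $V$, independence from Lemma \ref{lemma:B_i-existence-lemma}, and a cleaner treatment at $n=3$ (the paper's $\beta\neq0$ case defers to the BFGS-H argument, which as written controls $H_k\bar y$ only for $n\geq4$). One small caveat, shared with the paper: you both implicitly need some $\bar y\in\mathcal{Y}$ with $\bar y\notin\mathrm{span}\{y\}$, an assumption the theorem statement omits for $V$-DFP-H; you at least flag it explicitly.
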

%The above results imply that only the standard BFGS method for approximate Hessian matrix
%is robust against inexact line search. 
%There are the other approaches to quantifying the robustness in numerical computation, for
%example bounded deterioration principle \cite{dennis71:_towar_unified_conver_theor_for}. 

It is well-known that there is the dual relation between the BFGS formula and the DFP formula. 
Indeed, the $V$-DFP update for the inverse Hessian approximation is derived from the
$V$-BFGS update formula for the Hessian approximation by replacing $B_k,s_k,y_k$ with 
$H_k,y_k,s_k$. For the robustness against inexact line search, however, the dual relation
is violated as shown in Table \ref{tbl:gross_error_sesitivity}. 
In this problem, we focus on the perturbation of the vector $s_k$ rather than that of $y_k$. 
This is the reason why the dual relation is violated. 
Powell has shown a critical difference between BFGS and DFP for quadratic convex objective
functions \cite{powell86:_how_bad_are_bfgs_and} by considering the behaviour of
eigenvalues of approximate Hessian matrix. In the present paper, we exploited the gross
error sensitivity which is meaningful for non-quadratic objective functions as shown 
in Lemma \ref{lemma:quadratic_function_grosserror}. Our approach also provides a critical
difference between BFGS and DFP methods. 

In Section \ref{sec:extension_quasi-Newton_updates}, 
we introduced the $(V_1,V_2)$-Broyden family defined by
\eqref{eqn:extended-Broyden-family}.  
It is straightforward to prove that only the standard BFGS has finite gross error 
sensitivity among the $(V_1,V_2)$-Broyden family with a fixed mixing parameter
$\vartheta\in[0,1]$.

\section{Numerical Studies}
\label{sec:Numerical_Studies}
We demonstrate numerical experiments on robustness of quasi-Newton update formulae such as
$V$-BFGS-B, $V$-DFP-B, $V$-BFGS-H, and $V$-DFP-H proposed in Section
\ref{sec:Robustness}. Especially, the update formula derived from power potential 
in Example \ref{example:power-div} is examined. 

In the first numerical study, we consider numerical stability of update formulae. 
Let $B(\varepsilon)$ be the optimal 
solution of $V$-BFGS-B \eqref{eqn:perturbed-V-BFGS} or $V$-DFP-B
\eqref{eqn:perturbed-V-DFP}, and $H(\varepsilon)$ be the optimal solution of $V$-BFGS-H
\eqref{eqn:perturbed-V-BFGS-H} or $V$-DFP-H \eqref{eqn:perturbed-V-DFP-H}. 
For each update formula, we numerically compute the approximate influence function  
$\|(B(\varepsilon)-B(0))/\varepsilon\|_F$ and $\|(H(\varepsilon)-H(0))/\varepsilon\|_F$ 
with small $\varepsilon$, where the power potential $V(z)=(1-z^\gamma)/z$ is used to
derive the approximate Hessian matrix. 
Remember that $V$-BFGS and $V$-DFP are respectively reduced to the standard BFGS and DFP
when $\gamma$ is equal to zero. 

In what follows, we show the setup of numerical studies. 
Let $\mathrm{diag}(a_1,\ldots,a_n)$ be the $n$ by $n$ diagonal matrix with diagonal
elements $a_1,\ldots,a_n$. 
For $V$-BFGS-B and $V$-DFP-B, 
the matrix $B_k$ is set to one of the following three matrices:  
\begin{align*}
 B_k=\mathrm{diag}(1,\ldots,n)/(n!)^{1/n},\quad 
 B_k=\mathrm{diag}(1,\ldots,n),\quad \text{or}\quad 
 B_k=I+n^3\cdot pp^\top, 
\end{align*}
where in the last one $I$ is the identity matrix and $p$ is a column unit vector defined below. 
The dimension of the matrix $B_k$ is set to $n=10, 100, 500$ or $1000$. 
The first matrix $\mathrm{diag}(1,\ldots,n)/(n!)^{1/n}$ has the determinant one, and the
other two matrices have a large determinant. 
Below we show the procedure for generating the vectors $s$ and $y$ and the contaminated
vectors $(1+\varepsilon)s$ and $y+\varepsilon\bar{y}$ for 
$V$-BFGS-B and $V$-DFP-B. 
In the numerical studies for $V$-BFGS-H and $V$-DFP-H, 
the matrix $B_k$ is replaced with the approximate inverse Hessian $H_k$. 
\begin{enumerate}
 \item In the case that $B_k$ is $\mathrm{diag}(1,\ldots,n)/(n!)^{1/n}$ or
       $\mathrm{diag}(1,\ldots,n)$, 
       the vectors $s$ and $y$ are both generated according to the multivariate normal
       distribution with mean zero and variance-covariance matrix $10\times I$. 
       If the inner product $s^\top y$ is non-positive, the sign of $y$ is flipped. 
       The intensity of noise involved in the line search
       is determined by $\varepsilon$, which is 
       generated according to the uniform distribution on the interval 
       $[-0.2,0.2]$. Then, the vector $\bar{y}$ is also generated according to the multivariate 
       standard normal distribution. 
       If the inequality $(1+\varepsilon)s^\top(y+\varepsilon\bar{y})>0$ does not hold, 
       again $\varepsilon$ and $\bar{y}$ are generated until the vectors enjoy the
       positivity condition. 
 \item In the case that $B_k$ is supposed to have the expression $I+n^3\cdot pp^\top$, 
       first the vector $s$ is generated according to the multivariate normal distribution 
       with mean zero and variance-covariance matrix $10\times I$, and 
       $y$ is defined such that $y=s$. 
       The vector $p$ is a unit vector which is orthogonal to $y$, that is, $p$ 
       is a vector satisfying $p^\top y=0$ and $\|p\|=1$, and 
       let $B_k$ be $B_k=I+n^3\cdot pp^\top$. 
       Then the vector $\bar{y}$ is defined as $\bar{y}=p$. 
       The construction of these vectors is used in the proof of 
       Theorem \ref{theorem:BFGS-H-robustness} and 
       Theorem \ref{theorem:DFP-H-robustness}. 
\end{enumerate}

Hessian or inverse Hessian update formula is applied to $B_k$ or $H_k$ with the randomly
generated secant condition. 
The updated matrix $B(0)$ and $B(\varepsilon)$ are respectively computed under 
the constraint  $Bs=y$ and $B(1+\varepsilon)s=y+\varepsilon\bar{y}$ by using 
$V$-BFGS-B and $V$-DFP-B update formula. 
In the same way, 
$V$-BFGS-H and $V$-DFP-H are respectively applied to compute $H(0)$ with the constraint
$Hy=s$ and $H(\varepsilon)$ with the perturbed secant condition
$H(y+\varepsilon\bar{y})=(1+\varepsilon)s$. 
The influence function of each update formula is approximated by
$\|(B(\varepsilon)-B(0))/\varepsilon\|_F$ or 
$\|(H(\varepsilon)-H(0))/\varepsilon\|_F$. 

Table \ref{tbl:gross_error_sensitivity} shows the average of the approximate influence
function over $20$ runs for each setup. 
When $B_k$ or $H_k$ is equal to the diagonal matrix $\mathrm{diag}(1,\ldots,n)/(n!)^{1/n}$, 
we see that the power $\gamma$ of the power potential 
does not significantly affect the influence function in both $V$-BFGS and $V$-DFP. 
For the other setups, overall the BFGS method for Hessian matrix, i.e. $V$-BFGS-B with
$\gamma=0$, has smaller influence function than the other update formulae. 
The $V$-DFP-H for inverse Hessian update also has relatively small influence function 
when $H_k$ is proportional to $\mathrm{diag}(1,\ldots,n)$. 
For $H_k=I+n^3pp^\top$, however, we find that $V$-DFP-H is sensitive against noise
involved in the line search. 

These numerical results meet the theoretical analysis as shown below: 
\begin{enumerate}
 \item Theorem
       \ref{theorem:BFGS-B-robustness} implies that the standard BFGS method is  
       robust against inexact line search. 
 \item As shown in Example \ref{example:VBFGS-power-div}, $V$-BFGS-B update with power
       potential is close to the standard BFGS update for large $n$ and moderate $\det(B_k)$. 
       That is, the mixing parameter $(s_k^\top y_k/s_k^\top B_k s_k)^\rho$ 
       in Example \ref{example:VBFGS-power-div} will be close to one if $n$ is large and 
       $s_k^\top y_k/s_k^\top B_k s_k$ does not depend on the dimension $n$ that much. 
       When $B_k$ has a large determinant which grows with the dimension $n$, 
       the number of $s_k^\top y_k/s_k^\top B_k s_k$ will severely depend on the dimension $n$. 
       Hence, the mixing parameter $(s_k^\top y_k/s_k^\top B_k s_k)^\rho$ will not close to one
       even for large $n$. Hence, in such case the influence function is affected by the
       choice of the power $\gamma$. 
       The same argument on the relation between influence function and the power $\gamma$ 
       will hold for the inverse Hessian update, that is, $V$-BFGS-H and $V$-DFP-H. 
 \item For $B_k=I+n^3pp^T$ the result on $V$-BFGS-B and $V$-DFP-B is numerically the same. 
       Under this setup, we can theoretically confirm that the influence functions of both update
       formula are identical to each other. On the other hand, 
       some calculation yields that the influence functions of $V$-BFGS-H and $V$-DFP-H
       are not the same. 
\end{enumerate}
%We found that $V$-DFP updates including the standard DFP have large influence function 
%compared to $V$-BFGS updates. 
%This result also meets the theoretical analysis in Section \ref{sec:Robustness}. 
%As shown in Theorem \ref{theorem:DFP-H-robustness}, $V$-DFP-H is also sensitive to the
%noise involved in line search for step-length. 

The standard BFGS update formula achieves the min-max optimality of the gross error
sensitivity. That is, BFGS method may not be necessarily optimal for each setup. 
In numerical studies, however, BFGS method uniformly provides fairly stable update formula
compared to the other methods. 

\begin{table*}[tb]
% \small 
 \footnotesize
\caption{Approximate influence function for $V$-BFGS update and $V$-DFP update is shown. 
 The power potential $V(z)=(1-z^\gamma)/\gamma$ is used for $V$-extended quasi-Newton
 methods, where $\gamma=0$ corresponds to BFGS or DFP method. }
 \label{tbl:gross_error_sensitivity}
\begin{center} 
\hspace*{-13mm}
\begin{tabular}{l|ccc|ccc|ccc}
 \multicolumn{10}{c}{$V$-BFGS-B}\\\hline
 \multicolumn{1}{c|}{$B_k$}
 &\multicolumn{3}{c|}{$\mathrm{diag}(1,\ldots,n)/(n!)^{1/n}$}
 &\multicolumn{3}{c|}{$\mathrm{diag}(1,\ldots,n)$}
 &\multicolumn{3}{c}{$I+n^3pp^\top$}\\ 
% &\multicolumn{3}{c}{random matrix $W^\top W$}\\ 
 \multicolumn{1}{c|}{$\gamma$}
         & $-2$  &  $-1$ &  $0$  &  $-2$ &  $-1$ &  $0$  &  $-2$ & $-1$  &  $0$  \\\hline
$n=10$   &  9.5e+00 & 9.5e+00 & 9.5e+00 & 1.5e+01 & 9.7e+00 & 9.5e+00 & 2.0e+02 & 1.0e+02 & 5.0e+01\\	   
$n=100$  &  2.7e+01 & 2.7e+01 & 2.7e+01 & 2.3e+02 & 2.8e+01 & 2.7e+01 & 1.1e+04 & 1.0e+04 & 8.7e+03\\	   
$n=500$  &  9.3e+01 & 9.3e+01 & 9.3e+01 & 2.8e+03 & 9.6e+01 & 9.3e+01 & 2.6e+05 & 2.5e+05 & 2.4e+05\\	   
$n=1000$ &  1.0e+02 & 1.0e+02 & 1.0e+02 & 7.4e+03 & 1.1e+02 & 1.0e+02 & 1.0e+06 & 9.9e+05 & 9.7e+05\\\hline
\end{tabular}								 
\vspace*{3mm}

\hspace*{-13mm}
\begin{tabular}{l|ccc|ccc|ccc}
 \multicolumn{10}{c}{$V$-DFP-B}\\\hline
 \multicolumn{1}{c|}{$B_k$}
 &\multicolumn{3}{c|}{$\mathrm{diag}(1,\ldots,n)/(n!)^{1/n}$}
 &\multicolumn{3}{c|}{$\mathrm{diag}(1,\ldots,n)$}
 &\multicolumn{3}{c}{$I+n^3pp^\top$}\\ 
% &\multicolumn{3}{c}{random matrix $W^\top W$}\\ 
 \multicolumn{1}{c|}{$\gamma$}
        & $-2$  &  $-1$&  $0$   &  $-2$  &  $-1$ & $0$   &  $-2$ &  $-1$ &  $0$  \\\hline
$n=10$  &  1.3e+02 & 1.3e+02 & 1.3e+02 & 2.9e+03 & 6.5e+02 & 1.5e+02 & 2.0e+02 & 1.0e+02 & 5.0e+01\\	   
$n=100$ &  1.7e+03 & 1.7e+03 & 1.7e+03 & 2.5e+06 & 6.5e+04 & 1.7e+03 & 1.1e+04 & 1.0e+04 & 8.7e+03\\	   
$n=500$ &  4.6e+04 & 4.6e+04 & 4.6e+04 & 1.6e+09 & 8.7e+06 & 4.7e+04 & 2.6e+05 & 2.5e+05 & 2.4e+05\\	   
$n=1000$&  3.0e+04 & 3.0e+04 & 3.0e+04 & 4.1e+09 & 1.1e+07 & 3.0e+04 & 1.0e+06 & 9.9e+05 & 9.7e+05\\\hline
\end{tabular}
\vspace*{3mm}

\hspace*{-13mm}
\begin{tabular}{l|ccc|ccc|ccc}
 \multicolumn{10}{c}{$V$-BFGS-H}\\\hline
 \multicolumn{1}{c|}{$H_k$}
 &\multicolumn{3}{c|}{$\mathrm{diag}(1,\ldots,n)/(n!)^{1/n}$}
 &\multicolumn{3}{c|}{$\mathrm{diag}(1,\ldots,n)$}
 &\multicolumn{3}{c}{$I+n^3pp^\top$}\\ 
% &\multicolumn{3}{c}{random matrix $W^\top W$}\\ 
 \multicolumn{1}{c|}{$\gamma$}
        & $-2$  &  $-1$&  $0$   &  $-2$  &  $-1$ & $0$   &  $-2$ &  $-1$ &  $0$  \\\hline
$n=10$  & 2.1e+02 & 2.1e+02 & 2.1e+02 & 4.8e+03 & 1.1e+03 & 2.4e+02 &2.2e+02 & 1.1e+02 & 5.6e+01\\	   
$n=100$ & 1.1e+03 & 1.1e+03 & 1.1e+03 & 1.6e+06 & 4.1e+04 & 1.1e+03 &2.0e+04 & 1.7e+04 & 1.5e+04\\	   
$n=500$ & 8.2e+04 & 8.2e+04 & 8.2e+04 & 2.8e+09 & 1.5e+07 & 8.3e+04 &8.7e+05 & 8.4e+05 & 8.1e+05\\	   
$n=1000$& 2.6e+04 & 2.6e+04 & 2.6e+04 & 3.6e+09 & 9.8e+06 & 2.7e+04 &4.7e+06 & 4.6e+06 & 4.5e+06\\\hline
 \end{tabular}
\vspace*{3mm}

\hspace*{-13mm}
\begin{tabular}{l|ccc|ccc|ccc}
 \multicolumn{10}{c}{$V$-DFP-H}\\\hline
 \multicolumn{1}{c|}{$H_k$}
 &\multicolumn{3}{c|}{$\mathrm{diag}(1,\ldots,n)/(n!)^{1/n}$}
 &\multicolumn{3}{c|}{$\mathrm{diag}(1,\ldots,n)$}
 &\multicolumn{3}{c}{$I+n^3pp^\top$}\\ 
% &\multicolumn{3}{c}{random matrix $W^\top W$}\\ 
 \multicolumn{1}{c|}{$\gamma$}
        & $-2$  &  $-1$&  $0$   &  $-2$  &  $-1$ & $0$   &  $-2$ &  $-1$ &  $0$  \\\hline
$n=10$  & 1.0e+01 & 1.0e+01 & 1.0e+01 & 1.7e+01 & 1.1e+01 & 1.0e+01 &2.5e+02 & 1.3e+02 & 6.4e+01\\	   
$n=100$ & 2.1e+01 & 2.1e+01 & 2.1e+01 & 4.5e+02 & 2.5e+01 & 2.1e+01 &4.1e+06 & 3.6e+06 & 3.1e+06\\	   
$n=500$ & 9.9e+01 & 9.9e+01 & 9.9e+01 & 9.5e+03 & 1.2e+02 & 9.9e+01 &1.4e+09 & 1.4e+09 & 1.3e+09\\	   
$n=1000$& 1.2e+02 & 1.2e+02 & 1.2e+02 & 3.6e+04 & 1.7e+02 & 1.2e+02 &1.2e+10 & 1.2e+10 & 1.2e+10\\\hline
 \end{tabular}
\end{center}
\end{table*}

Next, we apply the standard BFGS-B and DFP-B to solve the following two optimization problems: 
the quadratic convex problem
\begin{align*}
 \text{(Problem 1)}\qquad 
 \min_{x\in\Real^n}\ f(x)=\frac{1}{2}x^\top A x-e^\top x,
\end{align*}
where $e=(1,\ldots,1)^\top\in\Real^n$ and 
\begin{align*}
 A=\begin{pmatrix}
    2  & -1  &        &        & \\
    -1 & 2   & -1     &        & \\
       & -1  & 2      & \ddots & \\
       &     & \ddots & \ddots & -1\\
       &     &        & -1     & 2
   \end{pmatrix}\ \in\ \Real^{n\times n}, 
\end{align*}
and the boundary value problem \cite{fletcher95:_optim_posit_defin_updat_form} 
\begin{align*}
 \text{(Problem 2)}\qquad 
 \min_{x\in\Real^n}\ f(x)=\frac{1}{2}x^\top A x-e^\top
 x-\frac{1}{(n+1)^2}\sum_{i=1}^{n}(2x_i+\cos x_i),
\end{align*}
where the vector $e$ and the matrix $A$ are the same as problem 1. 
The objective function in problem 2 is non-linear and non-convex. 
%\begin{align*}
% A=\begin{pmatrix}
%    2  & -1  &        &        & \\
%    -1 & 2   & -1     &        & \\
%       & -1  & 2      & \ddots & \\
%       &     & \ddots & \ddots & -1\\
%       &     &        & -1     & 2
%   \end{pmatrix}\ \in\ \Real^{n\times n}. 
%\end{align*}
The initial point $x_0$ is randomly 
generated by $n$-dimensional normal distribution with mean 
zero and variance-covariance matrix $10\times I$. 
The termination criterion 
\begin{align*}
 \|\nabla f(x_k)\|\leq n\times 10^{-5}\quad \text{or}\quad 
 k\geq 50000, 
\end{align*}
is employed, which is the same criterion used by Yamashita 
\cite{yamashita08:_spars_quasi_newton_updat_with}. 
Although the second criterion above implies that the method fails to obtain a solution, 
all trials did not reach the maximum number of iterations. 
In each problem, the step-length $\alpha_k$ is computed by the matlab
command ``{\tt fminbnd}'' with the option $\text{{\tt TolX}}=10^{-12}$ 
which denotes the termination tolerance on $x$. 
In the same way as the numerical studies on robustness of update formulae, 
the vector $s_k=x_{k+1}-x_{k}$ is
randomly perturbed such that $\widetilde{s}_k=(1+\varepsilon)s_k$, where $\varepsilon$ is 
a random variable according to the uniform distribution on the interval $[-h,h]$. 
The number of $h$ varies from $0$ to $0.3$. 
Accordingly, the vector $y_k=\nabla f(x_{k+1})-\nabla f(x_{k})$ is also 
changed to $\widetilde{y}_k=\nabla f(x_{k}+\widetilde{s}_k)-\nabla f(x_{k})$. 
As the result, for each iteration the secant condition with inexact line search is 
given as $B\widetilde{s}_k=\widetilde{y}_k$. 

The average number of iterations over $20$ runs for BFGS and DFP is shown in Table
\ref{tbl:num_iteration}. Compared to DFP method,  BFGS method requires fewer number of
iterations  to reach the optimal solution. Moreover, in BFGS update the number of
iterations is stable 
against the number of $h$. This result implies that BFGS is robust against random noise 
involved in inexact line search. 
On the other hand, the behaviour of DFP method is sensitive to contaminated step-length. 
Indeed, the number of iterations in DFP method rises drastically with the intensity of the
noise. For the quadratic convex objective function, the inexact line search does not
affect the secant condition. Hence the numerical result will impliy that the goodness of
the descent direction $B_k^{-1}\nabla f(x_k)$ in DFP will be easily degraded by inexact
line search. 
These numerical properties in quasi-Newton methods have been empirically
well-known \cite{conn88:_testin_class_of_algor_for,nocedal99:_numer_optim}. 
Powell \cite{powell86:_how_bad_are_bfgs_and} has theoretically studied the progression of
eigenvalues in approximate Hessian matrices in order to illustrate the difference between
BFGS and DFP. 

Through the numerical stduies in this section, we found that the theoretical framework
exploiting robust statistics can be a useful tool to investigate the property of
quasi-Newton methods.

\begin{table*}[tb]
 \small 
\caption{Number of iterations by BFGS and DFP under inexact line search. 
 The number of $h$ denotes intensity of noise involved in the line search.}
 \label{tbl:num_iteration}
\begin{center}
\begin{tabular}{llcccccc}
&   &  \multicolumn{2}{c}{$n=100$}
    &  \multicolumn{2}{c}{$n=500$}
    &  \multicolumn{2}{c}{$n=1000$} \\ 
           &  $h$ &  BFGS &   DFP  & BFGS  & DFP    & BFGS  & DFP \\ \hline
Problem 1  &  0.0 &  100.4 & 110.6 & 434.6 & 577.8  & 682.1 & 1788.5 \\ 
           &  0.1 &  102.9 & 166.2 & 430.6 &1165.2  & 680.9 & 2628.9 \\
           &  0.2 &  104.5 & 198.6 & 443.6 &1361.8  & 685.1 & 3099.2 \\
           &  0.3 &  106.0 & 223.0 & 444.2 &1501.6  & 687.6 & 3365.9 \\\hline
Problem 2  &  0.0 &  100.9 & 111.6 & 428.5 &585.7   & 661.5 & 2489.8 \\
           &  0.1 &  102.8 & 153.5 & 443.5 &1237.4  & 672.4 & 2762.1 \\
           &  0.2 &  104.4 & 177.7 & 438.3 &1419.6  & 682.7 & 3301.2 \\
           &  0.3 &  106.1 & 199.4 & 454.0 &1592.8  & 694.0 & 3730.8 \\\hline
\end{tabular}
\end{center}
\end{table*}

\section{Concluding Remarks}
\label{sec:Concluding_Remarks}
Along the line of the research stared by Fletcher
\cite{fletcher91:_new_resul_for_quasi_newton_formul}, we considered the quasi-Newton
update formula based on the Bregman divergence induced from potential functions. 
The proposed update formulae for the Hessian approximation belong to the class of
self-scaling quasi-Newton method. 
We studied the convergence property. Then, we applied the tools in the robust statistics
to analyze the robustness of the Hessian update formulae. 
As the result, we found that the influence of the inexact line search is bounded 
only for the standard BFGS formula for the Hessian approximation. 
Numerical studies support the usefulness of the theoretical framework borrowed from the
robust statistics. %\note{something about numerical results}

It will be an interesting future work to investigate the practical advantage of the 
self-scaling quasi-Newton methods derived from the $V$-Bregman divergence. 
Nocedal and Yuan proved that the self-scaling quasi-Newton method with 
the popular scaling parameter \eqref{eqn:popular-self-scaling-parameter} has some
drawbacks \cite{nocedal93:_analy_of_self_scalin_quasi_newton_method}. 
In our framework, the self-scaling quasi-Newton method with the scaling parameter
\eqref{eqn:popular-self-scaling-parameter} is out of the formulae 
derived from $V$-Bregman divergence. 
More precisely, the function $V(z)=n(1-z^{1/n})$, which is not potential, 
formally leads the popular self-scaling quasi-Newton formula. 
For the corresponding Bregman divergence $D_V(P,Q)$, 
the equality $D_V(P,cP)=0$ holds for any $P\in\PD(n)$ and any $c>0$. 
This property implies that the scale of the Hessian approximation is not fixed. 
We think that this property may lead some inefficiency of the self-scaling quasi-Newton
method with \eqref{eqn:popular-self-scaling-parameter}.  
The self-scaling quasi-Newton method associated with $V$-Bregman divergence may performs
well in practice. 

Another research direction is to consider the choice of the potential function $V$. 
Under the criterion of the gross error sensitivity, we found that the negative logarithmic
function $V(z)=-\log z$ is the optimal choice. The other criterion may lead other optimal 
potentials. 
Investigating the relation between the criterion for the update formula and the optimal
potential will be beneficial for the design of numerical algorithms.

%As an interesting future work, super-linear convergence 
%We are interested in proving the super-linear convergence of the 
%$V$-BFGS and the $V$-DFP methods. 
%\begin{itemize}
% \item super-linear convergence 
% \item the choice of potential $V$
% \item self-scaling induced from the Bregman divergence.
%\end{itemize}

\section{Acknowledgements}
The authors are grateful to Dr.~Nobuo Yamashita of Kyoto university for helpful comments. 
%the anonymous reviewers for their helpful comments. 
T.~Kanamori was partially supported by Grant-in-Aid for Young Scientists
(20700251).

\appendix 

\section{Proof of Theorems \ref{theorem:V-BFGS-form}}
\label{appendix:VBFGS-formula}
We prove the following lemma which is useful to show the existence of the optimal
solution. 
\begin{lemma}
 \label{lemma:sol_existence-nu-equation}
 Let $V$ be a potential and $\nu=\nu_V$. 
 For any $C>0$ the equation
 \begin{align}
  \label{eqn:lemma-equation-any-C}
 C\nu(z)^{n-1}=z,\quad z>0
 \end{align}
has the unique solution. 
\end{lemma}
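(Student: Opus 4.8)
The plan is to reduce the lemma to elementary properties of the single‑variable function $g(z) := z/\nu(z)^{n-1}$ on $(0,\infty)$, since equation \eqref{eqn:lemma-equation-any-C} is precisely $g(z) = C$. I would show that $g$ is continuous, strictly increasing, and surjective onto $(0,\infty)$; existence and uniqueness of the solution then follow from the intermediate value theorem together with strict monotonicity.

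First I would note that $g$ is well defined and $C^1$: because $V$ is third-order continuously differentiable, $\nu(z) = -zV'(z)$ is $C^2$, and it is strictly positive by \eqref{eqn:nu-condition}, so $g$ is differentiable with no singularities on $(0,\infty)$. The key computation is the logarithmic derivative: from $\log g(z) = \log z - (n-1)\log\nu(z)$ one obtains
\begin{align*}
 \frac{g'(z)}{g(z)} = \frac{1}{z} - (n-1)\frac{\nu'(z)}{\nu(z)} = \frac{1}{z}\bigl(1 - (n-1)\beta(z)\bigr),
\end{align*}
recalling $\beta(z) = z\nu'(z)/\nu(z)$. The potential condition \eqref{eqn:beta-condition}, $\beta(z) < 1/n$, gives $1 - (n-1)\beta(z) > 1 - (n-1)/n = 1/n$, hence
\begin{align*}
 \frac{g'(z)}{g(z)} \geq \frac{1}{nz} > 0 \qquad \text{for all } z > 0.
\end{align*}
This establishes that $g$ is strictly increasing, and therefore that \eqref{eqn:lemma-equation-any-C} has at most one solution.

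For surjectivity I would invoke the two limiting conditions. The behaviour at the origin, $\lim_{z\to +0} g(z) = 0$, is exactly condition \eqref{eqn:nu-limit-condition}. For $z \to \infty$, a pointwise use of $\beta < 1/n$ is not by itself enough, so I would integrate the differential inequality $(\log g)'(z) \geq 1/(nz)$ from $1$ to $z > 1$, obtaining $\log g(z) \geq \log g(1) + \tfrac1n\log z$, i.e.\ $g(z) \geq g(1)\,z^{1/n} \to \infty$. Thus $g$ is a continuous bijection from $(0,\infty)$ onto $(0,\infty)$, so for each $C > 0$ there is exactly one $z > 0$ with $g(z) = C$. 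The proof presents no real obstacle; the only step requiring slight care is this last limit, where one must pass from the pointwise bound on $\beta$ to the integrated estimate $g(z) \geq g(1)\,z^{1/n}$.
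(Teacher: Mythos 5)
Your proof is correct and is essentially the paper's own argument: the paper works with $\zeta(z)=\log z-(n-1)\log\nu(z)$, which is exactly $\log g(z)$, derives the same bound $\zeta'(z)>1/(nz)$ from $\beta<1/n$, uses condition \eqref{eqn:nu-limit-condition} for the limit at the origin, and integrates the same differential inequality from $1$ to $z$ to get divergence at infinity. No substantive difference.
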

\begin{proof}
 We define the function $\zeta(z)$ by $\zeta(z)=\log z-(n-1)\log \nu(z)$, 
 % \begin{align*}
 %  %  \zeta(z)=\frac{1}{n-1}\log z-\frac{1}{n-1}\log C,   
 %  \zeta(z)=\log z-(n-1)\log \nu(z)
 % \end{align*}
 then, the \eqref{eqn:lemma-equation-any-C} is equivalent to the equation
 \begin{align}
  \label{eqn:expression2-equation-any-C}
  %  \log\nu(z)=\zeta(z),\quad z>0. 
  \log C=\zeta(z),\quad z>0. 
 \end{align}
 Since the potential function satisfies $\lim_{z\rightarrow+0}z/\nu(z)^{n-1}=0$ from the
 definition, we have $\lim_{z\rightarrow+0}\zeta(z)=-\infty$. 
 In terms of the derivative of $\zeta(z)$, we have the following inequality
% the derivative of $\zeta(z)$ satisfies 
 \begin{align*}
  \frac{d}{dz}\zeta(z)=\frac{1}{z}-(n-1)\frac{\beta(z)}{z}>\frac{1}{zn}>0. 
%  \frac{d}{dz}\log\nu(z)<\frac{1}{nz}<\frac{1}{(n-1)z}=\frac{d}{dz}\zeta(z), 
 \end{align*}
 Thus, $\zeta(z)$ is an increasing function on $\Real_+$. 
 Moreover we have
%\begin{align*}
% \frac{d}{dz}(\zeta(z)-\log\nu(z)) > \frac{1}{(n-1)z}-\frac{1}{nz}=\frac{1}{n(n-1)z}. 
%\end{align*}
% Then we obtain
\begin{align*}
 \zeta(z)\geq \zeta(1)+\int_1^z\frac{1}{zn}dz=
 \zeta(1)+\frac{\log z}{n}. 
% \zeta(z)-\log\nu(z)
% >\zeta(1)-\log\nu(1)+\int_1^z\frac{1}{n(n-1)z'}dz'
% =\zeta(1)-\log\nu(1)+\frac{1}{n(n-1)}\log z. 
\end{align*}
 The above inequality implies that $\lim_{z\rightarrow\infty}\zeta(z)=\infty$. 
%Thus, there exists $M>0$ that for any $z\geq M$ the inequality $\zeta(z)>\log\nu(z)$ holds. 
%On the other hand, the condition \eqref{eqn:nu-limit-condition} implies that
%the equality $\zeta(z)<\log\nu(z)$ holds for small $z>0$. 
Since $\zeta(z)$ is continuous, the equation \eqref{eqn:expression2-equation-any-C} has
the unique solution. 
\end{proof}

\begin{proof}
[Proof of Theorem \ref{theorem:V-BFGS-form}]
First, we show the existence of the matrix $B_{k+1}$ 
satisfying \eqref{eqn:update-formula-V-BFGS}. Lemma \ref{lemma:sol_existence-nu-equation} now shows 
that there exists a solution $z^*>0$ for the equation 
\begin{align*}
 \frac{\det(B^{BFGS}[B_k;s_k,y_k])}{\nu(\det{B_k})^{n-1}}\cdot\nu(z)^{n-1}=z,\quad z>0. 
\end{align*}
By using the solution $z^*$, we define the matrix $\bar{B}$ such that
\begin{align*}
 \bar{B}=\frac{\nu(z^*)}{\nu(\det{B_k})}B^{BFGS}[B_k;s_k,y_k]
 +\big(1-\frac{\nu(z^*)}{\nu(\det{B_k})}\big)\frac{y_ky_k^\top}{s_k^\top y_k}, 
\end{align*}
then the determinant of $\bar{B}$ satisfies
\begin{align*}
 \det{\bar{B}}=\frac{\det(B^{BFGS}[B_k])}{\nu(\det{B_k})^{n-1}}\cdot \nu(z^*)^{n-1}=z^*, 
\end{align*}
in which the first equality comes from the formula 
$\det(A+vu^\top)= \det(A)(1+u^\top A^{-1}v)$ 
and the second one follows the definition of $z^*$. 
Hence there exists $B_{k+1}\in\mathrm{PD}(n)$ satisfying \eqref{eqn:update-formula-V-BFGS}. 

Next, we show that the matrix $B_{k+1}$ in \eqref{eqn:update-formula-V-BFGS} 
satisfies the optimality condition of \eqref{eqn:V-BFGS-hessian-update-prob}. 
According to G\"{u}ler, et al. \cite{guler09:_dualit_in_quasi_newton_method},
the normal vector for the affine subspace 
\[
\mathcal{M}=\{B\in\mathrm{PD}(n)~|~Bs_k=y_k\}
\]
is characterized by the form of 
\begin{align}
 \label{eqn:normal-vector-secant-cond}
 s_k\lambda^\top+\lambda s_k^\top\in\mathrm{Sym}(n),\qquad \lambda\in\Real^n. 
\end{align}
In fact for $B_1, B_2\in\mathcal{M}$ we have
\begin{align*}
  \<s_k\lambda^\top+\lambda s_k^\top,\,B_1-B_2\>
  &=
  \lambda^\top B_1 s_k+ s_k^\top B_1 \lambda
  -\lambda^\top B_2 s_k- s_k^\top B_2 \lambda\\
  &=
  \lambda^\top y_k+ y_k^\top \lambda
  -\lambda^\top y_k-y_k^\top \lambda\\
  &=0, 
\end{align*}
and thus $s_k\lambda^\top+\lambda s_k^\top$ is a normal vector of $\mathcal{M}$. 
G\"{u}ler, et al.~\cite{guler09:_dualit_in_quasi_newton_method} have shown that 
the normal vector is restricted to the form of \eqref{eqn:normal-vector-secant-cond}. 

Suppose $B'\in\mathrm{PD}(n)$ be an optimal solution of 
\eqref{eqn:V-BFGS-hessian-update-prob}, then $B'$ satisfies the optimality condition that
 there exists a vector $\lambda\in\Real^n$ such that 
\begin{align*}
 &\phantom{\Longleftrightarrow}
 \nabla_B D_V(B,B_k)\big|_{B=B'}=s_k\lambda^\top + \lambda s_k^\top
 \nonumber\\
 &\Longleftrightarrow\ 
 -\nu(\det(B'))(B')^{-1}+\nu(\det(B_k))B_k^{-1} 
 =s_k\lambda^\top + \lambda s_k^\top, 
% \label{eqn:optimality-cond}
\end{align*}
where $\nabla_B D_V(B,B_k)$ denotes the gradient of $D_V(B,B_k)$ with respect to the
variable $B$. Also, the optimal solution $B'$ should satisfy the constraint $B's_k=y_k$. 
On the other hand, the matrix $B_{k+1}$ defined by
 \eqref{eqn:update-formula-V-BFGS} satisfies
\begin{align*}
 B_{k+1}^{-1} 
 &=
 \frac{\nu(\det{B_k})}{\nu(\det{B_{k+1}})}(B^{BFGS}[B_k;s_k,y_k])^{-1}
 +\bigg(1-\frac{\nu(\det{B_k})}{\nu(\det{B_{k+1}})}\bigg)
 \frac{s_ks_k^\top}{s_k^\top y_k}\\
&=
 \frac{\nu(\det{B_k})}{\nu(\det{B_{k+1}})}B^{DFP}[B_k^{-1};y_k,s_k]
 +\bigg(1-\frac{\nu(\det{B_k})}{\nu(\det{B_{k+1}})}\bigg)
 \frac{s_ks_k^\top}{s_k^\top y_k}\\
\Longleftrightarrow &
 \left\{
 \begin{array}{l}
  \displaystyle  %%%%%%%%
   -\nu(\det{B_{k+1}})B_{k+1}^{-1}+
   \nu(\det{B_k})B_k^{-1}
   =s_k\lambda^\top+\lambda s_k^\top,\vspace*{2mm}\\
  \displaystyle  %%%%%%%%
   \lambda
   =\frac{\nu(\det{B_k})}{s_k^\top y_k}B_k^{-1}y_k
   -\frac{\nu(\det{B_{k+1}})}{2s_k^\top y_k}s_k
   -\frac{\nu(\det{B_k})y_k^\top B_k^{-1}y_k}{2(s_k^\top y_k)^2}s_k. 
 \end{array}
 \right.
\end{align*}
The conditions $s_k^\top y_k>0$ and $B_k\in\mathrm{PD}(n)$ guarantees 
the existence of the above vector $\lambda$. In addition, the direct computation yields
that the constraint $B_{k+1}s_k=y_k$ is satisfied. 
Hence, $B_{k+1}$ satisfies the optimality condition. 
Since \eqref{eqn:V-BFGS-hessian-update-prob} is a strictly convex problem, 
$B_{k+1}$ is the unique optimal solution. 
%the matrix $B_{k+1}$ satisfying the optimality condition implies that 
\end{proof}

\section{Proofs of Theorems \ref{theorem:existence-differentiability}}
\label{appendix:proof_theorem_existence_optsol}
We show that the optimal solution of $V$-BFGS-B is second order continuously 
differentiable. The same proof works for the other update formulae. 
\begin{proof}
We consider the problem \eqref{eqn:perturbed-V-BFGS}. Since the inequality 
$s^\top (y+\varepsilon\bar{y})>0$ holds for infinitesimal $\varepsilon$, 
Theorem \ref{theorem:V-BFGS-form} guarantees that there exists the unique optimal solution
 $B(\varepsilon)$ around $\varepsilon=0$. 
Let the function 
$F:\Real^{n\times n}\times \Real\rightarrow\Real^{n\times n}$ be 
\begin{align*}
 F(X,\varepsilon)
 &=\frac{1}{\nu(\det{X})}X-
\frac{1}{\nu(\det{B_k})}
 B^{BFGS}[B_k;(1+\varepsilon)s,y+\varepsilon\bar{y}]\\
 &\phantom{=}-\bigg(\frac{1}{\nu(\det{X})}
 -\frac{1}{\nu(\det{B_k})}\bigg)
 \frac{(y+\varepsilon\bar{y})(y+\varepsilon\bar{y})^\top}
 {(1+\varepsilon)s^\top(y+\varepsilon\bar{y})}, 
\end{align*}
for $X\in\Real^{n\times n}$ and $\varepsilon\in\Real$. 
For infinitesimal $\varepsilon$, the equality $F(B(\varepsilon),\varepsilon)=O$ holds, 
where $O$ is the null matrix. 
 We apply the implicit function theorem to prove the differentiability of
 $B(\varepsilon)$. 
 Since the potential function is third order continuously differentiable, 
 clearly $F(X,\varepsilon)$ is second order continuously differentiable in a vicinity of
 $(X,\varepsilon)=(B(0),0)$. For any symmetric matrix $A\in\mathrm{Sym}(n)$, the equality 
\begin{align*}
\nabla_X\<F(X,\,\varepsilon),\,A\>\big|_{X=B(0),\varepsilon=0}
=\frac{1}{\nu(\det{B(0)})}A-\frac{1}{\nu(\det{B(0)})^2}
 \bigg\<B(0)-\frac{yy^\top}{s^\top y},\,A\bigg\> B(0)^{-1}
\end{align*}
holds, where $\nabla_X$ denotes the gradient with respect to the variable $X$. This
 implies that the gradient of $F(X,\varepsilon)$ does not vanish at 
 $(X,\varepsilon)=(B(0),0)$. Hence, the implicit function theorem for $F(X,\varepsilon)$
 guarantees that $B(\varepsilon)$ is a second order continuously differentiable function
 with respect to  $\varepsilon$ in a vicinity of $\varepsilon=0$. 
\end{proof}

\section{Computations of Gross Error Sensitivity}
\label{appendix:robustness-sensitivity}
First, a universal formula for the computation of influence function is proved, and
some useful lemmas are prepared. Then, the gross error sensitivity for each update formula
is computed in Section 
\ref{appendix:proof-V-BFGS-B}, \ref{appendix:proof-V-DFP-B}, \ref{appendix:proof-V-BFGS-H}
and \ref{appendix:proof-V-DFP-H}. 
\begin{lemma}
\label{lemma:asympto-Beps}
Let $s,\bar{s},y$ and $\bar{y}$ be column vectors in $\Real^n$ such that $s^\top y>0$, and 
$B_k$ be a positive definite matrix. For an infinitesimal $\varepsilon$ let
 $B(\varepsilon)$ be the optimal solution of 
\begin{align}
 \label{lemma:appendix-opt-sol-perturbed-BFGS}
 \min_{B\in\mathrm{PD}(n)} D_V(B,B_k)\quad\subto 
 B(s+\varepsilon \bar{s})=y+\varepsilon \bar{y}, 
\end{align}
 and let $\Delta[B_k;s,\bar{s},y,\bar{y}]$ be the influence function
 $\dot{B}(0)$. Then we have 
 \begin{align}
   &\phantom{=}\dot{B}(0)\nonumber\\
  &=\Delta[B_k;s,\bar{s},y,\bar{y}]\nonumber\\
  %  \frac{d}{d\varepsilon}B^{(\varepsilon)}\bigg|_{\varepsilon=0}
  &=
 \bigg\{
 \frac{s^\top\bar{y}-\bar{s}^\top y}{s^\top y}
 +\frac{\nu(\det{B(0)})}{\nu(\det{B_k})}
 \bigg( \frac{2\bar{s}^\top B_ks\cdot s^\top B_k(B(0))^{-1}B_ks}
  {(s^\top B_ks)^2} 
 -\frac{2\bar{s}^\top B_k(B(0))^{-1}B_ks}{s^\top B_ks} \bigg)
 \bigg\}\nonumber\\
 % \frac{(\bar{y}-y)^\top s}{s^\top y}
 &\phantom{=} \times \frac{\beta(\det{B(0)})}{1-(n-1)\beta(\det{B(0)})}
  \bigg[B(0)-\frac{yy^\top}{s^\top y}\bigg]
 +\frac{y\bar{y}^\top+\bar{y}y^\top}{s^\top y}
 -\frac{s^\top\bar{y}+\bar{s}^\top y}{(s^\top y)^2}yy^\top\nonumber\\
 &\phantom{=} 
 +\frac{\nu(\det{B(0)})}{\nu(\det{B_k})}
 \bigg[
  \frac{2\bar{s}^\top B_ks}{(s^\top B_ks)^2} B_kss^\top B_k
  -\frac{B_k(s\bar{s}^\top+\bar{s}s^\top)B_k}{s^\top B_ks}
 \bigg]. 
  \label{eqn:Delta-expansion}
 \end{align}
\end{lemma}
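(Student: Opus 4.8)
The minimization \eqref{lemma:appendix-opt-sol-perturbed-BFGS} is the $V$-BFGS-B problem \eqref{eqn:V-BFGS-hessian-update-prob} with $s_k,y_k$ replaced by $s'(\varepsilon):=s+\varepsilon\bar{s}$ and $y'(\varepsilon):=y+\varepsilon\bar{y}$, so Theorem \ref{theorem:V-BFGS-form} applies for small $\varepsilon$ (where $s'(\varepsilon)^\top y'(\varepsilon)>0$): the optimal $B(\varepsilon)$ is the unique positive definite matrix with
\begin{align*}
 B(\varepsilon)=t(\varepsilon)\Bigl(B_k-\frac{B_ks'(\varepsilon)s'(\varepsilon)^\top B_k}{s'(\varepsilon)^\top B_ks'(\varepsilon)}\Bigr)
 +\frac{y'(\varepsilon)y'(\varepsilon)^\top}{s'(\varepsilon)^\top y'(\varepsilon)},
 \qquad t(\varepsilon):=\frac{\nu(\det{B(\varepsilon)})}{\nu(\det{B_k})},
\end{align*}
where I used $B^{BFGS}[B_k;s',y']=B_k-B_ks's'^\top B_k/(s'^\top B_ks')+y'y'^\top/(s'^\top y')$ so that the outer rank-one term $y'y'^\top/(s'^\top y')$ is unscaled. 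The proof of Theorem \ref{theorem:existence-differentiability} adapts directly to this perturbation and shows $B(\varepsilon)$ is twice continuously differentiable near $\varepsilon=0$. At $\varepsilon=0$ the secant constraint $B(0)s=y$ gives $B(0)^{-1}y=s$, hence $\<B(0)^{-1},yy^\top/(s^\top y)\>=y^\top B(0)^{-1}y/(s^\top y)=1$; moreover $B_k-B_kss^\top B_k/(s^\top B_ks)=t(0)^{-1}\bigl(B(0)-yy^\top/(s^\top y)\bigr)$.

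I would then differentiate the displayed identity at $\varepsilon=0$. Writing $\mu:=\dot t(0)/t(0)$, the term $\dot t(0)\bigl(B_k-B_kss^\top B_k/(s^\top B_ks)\bigr)$ becomes $\mu\bigl(B(0)-yy^\top/(s^\top y)\bigr)$, and the quotient rule applied to $y'y'^\top/(s'^\top y')$ and to $-B_ks's'^\top B_k/(s'^\top B_ks')$ yields
\begin{align*}
 \dot{B}(0)&=\mu\Bigl(B(0)-\frac{yy^\top}{s^\top y}\Bigr)+R,\\
 R&:=\frac{y\bar{y}^\top+\bar{y}y^\top}{s^\top y}-\frac{(s^\top\bar{y}+\bar{s}^\top y)\,yy^\top}{(s^\top y)^2}
 +\frac{\nu(\det{B(0)})}{\nu(\det{B_k})}\Bigl(\frac{2(\bar{s}^\top B_ks)\,B_kss^\top B_k}{(s^\top B_ks)^2}-\frac{B_k(s\bar{s}^\top+\bar{s}s^\top)B_k}{s^\top B_ks}\Bigr),
\end{align*}
so that $R$ is exactly the $\mu$-independent part of \eqref{eqn:Delta-expansion}. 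It therefore remains only to identify the scalar $\mu$.

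By the definition $\beta(z)=z\nu'(z)/\nu(z)$ and Jacobi's formula $\tfrac{d}{d\varepsilon}\det{B(\varepsilon)}\big|_{0}=\det{B(0)}\,\<B(0)^{-1},\dot{B}(0)\>$,
\begin{align*}
 \mu=\frac{\dot t(0)}{t(0)}=\frac{\nu'(\det{B(0)})}{\nu(\det{B(0)})}\,\frac{d}{d\varepsilon}\det{B(\varepsilon)}\Big|_{0}
 =\beta(\det{B(0)})\,\<B(0)^{-1},\dot{B}(0)\>.
\end{align*}
Substituting $\dot{B}(0)=\mu(B(0)-yy^\top/(s^\top y))+R$ and using $\<B(0)^{-1},B(0)\>=n$ together with $\<B(0)^{-1},yy^\top/(s^\top y)\>=1$ turns this into the scalar equation $\mu=\beta(\det{B(0)})\bigl[(n-1)\mu+\<B(0)^{-1},R\>\bigr]$, whence
\begin{align*}
 \mu=\frac{\beta(\det{B(0)})}{1-(n-1)\beta(\det{B(0)})}\,\<B(0)^{-1},R\>,
\end{align*}
the denominator being nonzero since a potential satisfies $\beta<1/n$ by \eqref{eqn:beta-condition}. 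Finally I would evaluate $\<B(0)^{-1},R\>$ term by term using $B(0)^{-1}y=s$ and the symmetry of $B_k$ and $B(0)^{-1}$ (so that $\tr{B(0)^{-1}B_k\bar{s}s^\top B_k}=\bar{s}^\top B_kB(0)^{-1}B_ks$, etc.), which collapses it to
\begin{align*}
 \<B(0)^{-1},R\>=\frac{s^\top\bar{y}-\bar{s}^\top y}{s^\top y}
 +\frac{\nu(\det{B(0)})}{\nu(\det{B_k})}\Bigl(\frac{2(\bar{s}^\top B_ks)(s^\top B_kB(0)^{-1}B_ks)}{(s^\top B_ks)^2}-\frac{2\bar{s}^\top B_kB(0)^{-1}B_ks}{s^\top B_ks}\Bigr),
\end{align*}
which is precisely the braced factor in \eqref{eqn:Delta-expansion}; plugging $\mu$ back into $\dot{B}(0)=\mu(B(0)-yy^\top/(s^\top y))+R$ gives \eqref{eqn:Delta-expansion}.

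The quotient-rule differentiations and the trace evaluations above are routine but lengthy. The one genuinely non-mechanical point, which I expect to be the crux, is that $\dot t(0)$ cannot be read off directly: $\dot{B}(0)$ depends affinely on $\mu$ while $\mu$ depends on $\dot{B}(0)$ through $\det$, and this loop closes to a single scalar linear equation whose coefficient $1-(n-1)\beta(\det{B(0)})$ comes from $\<B(0)^{-1},B(0)-yy^\top/(s^\top y)\>=n-1$ and is invertible exactly because of the potential condition \eqref{eqn:beta-condition}. A secondary care point is exploiting the self-scaling form of the $V$-BFGS update, in which the rank-one term $y'y'^\top/(s'^\top y')$ is unscaled; this is what makes the $\dot t(0)$-contribution collapse cleanly into a multiple of $B(0)-yy^\top/(s^\top y)$ rather than producing extra $B_k$-dependent debris.
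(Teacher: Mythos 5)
Your proposal is correct and follows essentially the same route as the paper's proof: both apply the self-scaling fixed-point identity of Theorem \ref{theorem:V-BFGS-form} to the perturbed secant condition, expand to first order in $\varepsilon$, and close the resulting scalar self-consistency loop for the determinant-derivative term (your $\mu$ is the paper's $\delta$ up to the factor $\beta(\det B(0))/\det B(0)$), with the denominator $1-(n-1)\beta(\det B(0))$ arising in both from $\<B(0)^{-1},B(0)-yy^\top/(s^\top y)\>=n-1$. The only difference is presentational: the paper writes out the asymptotic expansions of $\nu(\det B(\varepsilon))$, $B^{BFGS}$, and the rank-one term explicitly before matching first-order coefficients, whereas you differentiate the identity directly.
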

The matrix $\Delta[B_k;s,\bar{s},y,\bar{y}]$ is well-defined, since the inequalities
$\nu>0$ and $1-(n-1)\beta>0$ hold for any potential function. 
Note that $\Delta[B_k;s,s,y,y]=O$ holds. This is another proof of Lemma 
\ref{lemma:quadratic_function_grosserror}. 
\begin{proof}[Proof of Lemma \ref{lemma:asympto-Beps}]
 In the same way as the proof of Theorem \ref{theorem:V-BFGS-form} and Theorem
 \ref{theorem:existence-differentiability}, we can prove the existence and the
 differentiability of $B(\varepsilon)$. Since $B(\varepsilon)$ is second order
 continuously differentiable around $\varepsilon=0$, the equality 
\begin{align*}
 B(\varepsilon)=B(0)+\varepsilon\Delta+O(\varepsilon^2), 
\end{align*}
 holds, where $\Delta\in\mathrm{Sym}(n)$. Then we have
\begin{align*}
 \det(B(\varepsilon))
 &=\det(B(0)+\varepsilon\Delta+O(\varepsilon^2))\\
% &=  
% \det(B(0))\det(I+\varepsilon\Delta (B(0))^{-1}+O(\varepsilon^2))\\
% &=
% \det(B(0))(1+\varepsilon\<\Delta (B(0))^{-1}\>+O(\varepsilon^2))\\
 &=\det(B(0))+\varepsilon\det(B(0))\<\Delta, B(0)^{-1}\>+O(\varepsilon^2)
\end{align*}
and thus we obtain
\begin{align*}
 \nu(\det{B(\varepsilon)})
 &=
 \nu(\det{B(0)})+\varepsilon\nu'(\det{B(0)})
 \det(B(0))\<\Delta, B(0)^{-1}\>+O(\varepsilon^2). 
\end{align*}
For simplicity let $\delta$ be
\begin{align}
 \label{eqn:delta-Delta}
 \delta=\det(B(0))\<\Delta, B(0)^{-1}\>
\end{align}
then the equality
\begin{align}
\label{eqn:nu-det-asympt}
 \nu(\det{B(\varepsilon)})
 &=
 \nu(\det{B(0)})+\varepsilon\cdot
 \delta\cdot\nu'(\det{B(0)})+O(\varepsilon^2)
\end{align}
holds. 
By some calculation, we see that the asymptotic expansion of 
$B^{BFGS}[B_k;s+\varepsilon \bar{s},\,y+\varepsilon\bar{y}]$ 
and $(y+\varepsilon\bar{y})(y+\varepsilon\bar{y})^\top/
(s+\varepsilon s)^\top(y+\varepsilon\bar{y})$
are respectively given by 
\begin{align}
 &\phantom{=}B^{BFGS}[B_k;s+\varepsilon \bar{s},y+\varepsilon\bar{y}]
 \nonumber\\
 &=
 B^{BFGS}[B_k;s,y]  \nonumber\\
 &\phantom{=}+ \varepsilon\bigg(
 \frac{y\bar{y}^\top+\bar{y}y^\top}{s^\top y}
 -\frac{s^\top\bar{y}+\bar{s}^\top y}
 {(s^\top y)^2} yy^\top 
 -\frac{B_k(s\bar{s}^\top+\bar{s}s^\top)B_k}{s^\top B_ks}
 +\frac{2\bar{s}^\top B_ks}{(s^\top B_ks)^2}B_kss^\top B_k\bigg)
 \nonumber\\
 &\phantom{=}+O(\varepsilon^2)
 \label{eqn:BFGS-asympt}
\end{align}
and 
\begin{align}
 \frac{(y+\varepsilon\bar{y})(y+\varepsilon\bar{y})^\top}
 {(s+\varepsilon\bar{s})^\top(y+\varepsilon\bar{y})}
 &=
 \frac{yy^\top}{s^\top y}+\varepsilon
 \left(
 \frac{y\bar{y}^\top+\bar{y}y^\top}{s^\top y}
 -\frac{s^\top\bar{y}+\bar{s}^\top y}{(s^\top y)^2} yy^\top
 \right)
 +O(\varepsilon^2). 
 \label{eqn:yy-asympt} 
\end{align}
Substituting \eqref{eqn:nu-det-asympt}, \eqref{eqn:BFGS-asympt} and \eqref{eqn:yy-asympt}
into the equality 
\begin{align*}
 B(\varepsilon)
 % +\frac{\nu(\det(B_\varepsilon))-1}{\nu(\det(B_\varepsilon))}\frac{yy^\top}{\<y,s\>}
 % +\frac{\nu(\det(B_\varepsilon))-1}{\nu(\det(B_\varepsilon))}\frac{yy^\top}{\<y,s\>}
 =
\frac{\nu(\det{B(\varepsilon)})}{\nu(\det{B_k})}
 B^{BFGS}[B_k;s+\varepsilon \bar{s},y+\varepsilon\bar{y}]
 +
 \bigg(1-\frac{\nu(\det{B(\varepsilon)})}{\nu(\det{B_k})}\bigg)
 \frac{(y+\varepsilon\bar{y})(y+\varepsilon\bar{y})^\top}
 {(s+\varepsilon \bar{s})^\top(y+\varepsilon\bar{y})}, 
\end{align*}
we obtain
\begin{align*}
 &\phantom{=}B(\varepsilon)\\
 &=
 B(0) +\varepsilon\cdot
 \bigg\{
 \delta\cdot\frac{\nu'(\det{B(0)})}{\nu(\det{B_k})}
 \big( B^{BFGS}[B_k;s,y]-\frac{yy^\top}{s^\top y}\big)
 +\frac{y\bar{y}^\top+\bar{y}y^\top}{s^\top y}
 -\frac{s^\top\bar{y}+\bar{s}^\top y}{(s^\top y)^2}yy^\top\\
 &\phantom{=}
 -\frac{\nu(\det{B(0)})}{\nu(\det{B_k})}
 \frac{B_k(s\bar{s}^\top+\bar{s}s^\top)B_k}{s^\top B_ks}
 +\frac{\nu(\det{B(0)})}{\nu(\det{B_k})}
 \frac{2\bar{s}^\top B_ks}{(s^\top B_ks)^2} B_kss^\top B_k
 \bigg\}
 +O(\varepsilon^2), 
\end{align*}
and thus $\Delta$ is represented as
\begin{align*}
 \Delta
 &= 
 \delta\cdot\frac{\nu'(\det{B(0)})}{\nu(\det{B_k})}
 \bigg[ B^{BFGS}[B_k;s,y]-\frac{yy^\top}{s^\top y}\bigg]
 +\frac{y\bar{y}^\top+\bar{y}y^\top}{s^\top y}
 -\frac{s^\top\bar{y}+\bar{s}^\top y}{(s^\top y)^2}yy^\top\\
 &\phantom{=}
 -\frac{\nu(\det{B(0)})}{\nu(\det{B_k})}
 \frac{B_k(s\bar{s}^\top+\bar{s}s^\top)B_k}{s^\top B_ks}
 +\frac{\nu(\det{B(0)})}{\nu(\det{B_k})}
 \frac{2\bar{s}^\top B_ks}{(s^\top B_ks)^2} B_kss^\top B_k\\
 &= 
 \delta\cdot
 \frac{\nu'(\det{B(0)})}{\nu(\det{B(0)})}
 \big[B(0)-\frac{yy^\top}{s^\top y}\big]
 +\frac{y\bar{y}^\top+\bar{y}y^\top}{s^\top y}
 -\frac{s^\top\bar{y}+\bar{s}^\top y}{(s^\top y)^2}yy^\top\\
 &\phantom{=}
 -\frac{\nu(\det{B(0)})}{\nu(\det{B_k})}
 \frac{B_k(s\bar{s}^\top+\bar{s}s^\top)B_k}{s^\top B_ks}
 +\frac{\nu(\det{B(0)})}{\nu(\det{B_k})}
 \frac{2\bar{s}^\top B_ks}{(s^\top B_ks)^2} B_kss^\top B_k
\end{align*}
in which we use the equality
\begin{align*}
 \frac{\nu(\det{B(0)})}{\nu(\det{B_k})}
 \bigg[ B^{BFGS}[B_k;s,y]-\frac{yy^\top}{s^\top y}\bigg]
 = B(0)-\frac{yy^\top}{s^\top y}. 
\end{align*}
%which is the direct conclusion from the definition of $B_0$. 
Substituting the above $\Delta$ into \eqref{eqn:delta-Delta}, we have
\begin{align*}
 \delta
 &=
 \frac{\det{B(0)}}{1-\beta(\det{B(0)})(n-1)}
 \bigg\{
 \frac{s^\top\bar{y}-\bar{s}^\top y}{s^\top y}\\
 &\phantom{=}\qquad
 +\frac{\nu(\det{B(0)})}{\nu(\det{B_k})}
 \bigg( \frac{2\bar{s}^\top B_ks\cdot s^\top B_k(B(0))^{-1}B_ks}{(s^\top B_ks)^2} 
 -\frac{2\bar{s}^\top B_k(B(0))^{-1}B_ks}{s^\top B_ks} \bigg)
 \bigg\}. 
\end{align*}
 As the result, we obtain
\begin{align*}
 &\phantom{=}\frac{B(\varepsilon)-B(0)}{\varepsilon}\\
 &=
 \bigg\{
 \frac{s^\top\bar{y}-\bar{s}^\top y}{s^\top y}
 +\frac{\nu(\det{B(0)})}{\nu(\det{B_k})}
 \bigg( \frac{2\bar{s}^\top B_ks\cdot s^\top B_k(B(0))^{-1}B_ks}
 {(s^\top B_ks)^2} 
 -\frac{2\bar{s}^\top B_k(B(0))^{-1}B_ks}{s^\top B_ks} \bigg)
 \bigg\}\nonumber\\
 &\ \times \frac{\beta(\det{B(0)})}{1-(n-1)\beta(\det{B(0)})}
\bigg[B(0)-\frac{yy^\top}{s^\top y}\bigg]
 +\frac{y\bar{y}^\top+\bar{y}y^\top}{s^\top y}
 -\frac{s^\top\bar{y}+\bar{s}^\top y}{(s^\top y)^2}yy^\top\nonumber\\
 &\phantom{=} 
 +\frac{\nu(\det{B(0)})}{\nu(\det{B_k})}
 \bigg[
 \frac{2\bar{s}^\top B_ks}{(s^\top B_ks)^2} B_kss^\top B_k
 -\frac{B_k(s\bar{s}^\top+\bar{s}s^\top)B_k}{s^\top B_ks}
 \bigg]+O(\varepsilon). 
\end{align*}
Letting $\varepsilon$ tend to zero, we obtain the influence function 
 $\dot{B}(0)=\Delta[B_k;s,\bar{s},y,\bar{y}]$.
\end{proof}

\begin{lemma}
\label{lemma:asympto-Heps}
Let $s,\bar{s},y$ and $\bar{y}$ be a set of column vectors in $\Real^n$ such
that $s^\top y>0$ and $B_k$ be a matrix in $\mathrm{PD}(n)$.  
For an infinitesimal $\varepsilon$ let $B(\varepsilon)$ be the optimal solution of 
\begin{align*}
 \min_{B\in\mathrm{PD}(n)} D_V(B^{-1},B_k^{-1})\quad\subto 
 B(s+\varepsilon \bar{s})=y+\varepsilon \bar{y}
\end{align*}
and let $\Gamma[B_k;s,\bar{s},y,\bar{y}]$ be $\dot{B}(0)$ then we have 
\begin{align}
 \label{eqn:asympto-gamma}
\Gamma[B_k;s,\bar{s},y,\bar{y}]
 = -B(0) \Delta[B_k^{-1};y,\bar{y},s,\bar{s}] B(0), 
\end{align}
where $\Delta$ is the function defined in Lemma \ref{lemma:asympto-Beps}. 
\end{lemma}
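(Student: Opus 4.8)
The plan is to reduce the $V$-DFP-type problem that defines $B(\varepsilon)$ to a $V$-BFGS-type problem for the inverse matrix, apply Lemma \ref{lemma:asympto-Beps} to the latter, and then differentiate the matrix-inversion map.

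First I would introduce $H=B^{-1}$ and $H_k=B_k^{-1}$. Under this change of variables the objective $D_V(B^{-1},B_k^{-1})$ becomes $D_V(H,H_k)$, and left-multiplying the constraint $B(s+\varepsilon\bar s)=y+\varepsilon\bar y$ by $B^{-1}=H$ shows it is equivalent to $H(y+\varepsilon\bar y)=s+\varepsilon\bar s$. Hence $B(\varepsilon)$ is determined by
\begin{align*}
 \min_{H\in\mathrm{PD}(n)} D_V(H,H_k)\quad\subto\ H(y+\varepsilon\bar y)=s+\varepsilon\bar s,
\end{align*}
with optimal solution $H(\varepsilon)=B(\varepsilon)^{-1}$. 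This is exactly the situation of Lemma \ref{lemma:asympto-Beps} with the substitution $(B_k,s,\bar s,y,\bar y)\mapsto(H_k,y,\bar y,s,\bar s)=(B_k^{-1},y,\bar y,s,\bar s)$; the required positivity hypothesis there reads $y^\top s>0$, which holds because $s^\top y>0$ by assumption. Theorem \ref{theorem:existence-differentiability} (whose proof applies verbatim to this problem) guarantees that $H(\varepsilon)\in\mathrm{PD}(n)$ is second-order continuously differentiable near $\varepsilon=0$, and Lemma \ref{lemma:asympto-Beps} then gives $\dot H(0)=\Delta[B_k^{-1};y,\bar y,s,\bar s]$.

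Next I would differentiate the identity $B(\varepsilon)=H(\varepsilon)^{-1}$. Using $\frac{d}{d\varepsilon}H(\varepsilon)^{-1}=-H(\varepsilon)^{-1}\dot H(\varepsilon)H(\varepsilon)^{-1}$ and evaluating at $\varepsilon=0$, together with $H(0)^{-1}=B(0)$, yields
\begin{align*}
 \Gamma[B_k;s,\bar s,y,\bar y]=\dot B(0)=-B(0)\,\dot H(0)\,B(0)=-B(0)\,\Delta[B_k^{-1};y,\bar y,s,\bar s]\,B(0),
\end{align*}
which is precisely \eqref{eqn:asympto-gamma}. As a consistency check one sees $\Gamma[B_k;s,s,y,y]=-B(0)\,\Delta[B_k^{-1};y,y,s,s]\,B(0)=O$, in agreement with Lemma \ref{lemma:quadratic_function_grosserror}.

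There is no heavy computation in this argument; the only points requiring care are the one-line verification that the secant constraint transforms as claimed under $B\mapsto B^{-1}$, and making sure the argument order fed into $\Delta$ respects the convention of Lemma \ref{lemma:asympto-Beps} (the first vector slot being the right-hand side of the constraint that is solved for and the second its perturbation). The only genuine technical ingredient is the legitimacy of differentiating the inverse map, which is immediate once $H(\varepsilon)$ is known to be $C^1$ and positive definite near $\varepsilon=0$; this is supplied by Theorem \ref{theorem:existence-differentiability}.
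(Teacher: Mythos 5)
Your proposal is correct and follows essentially the same route as the paper's proof: pass to $H=B^{-1}$, observe that $H(\varepsilon)$ solves the $D_V(H,B_k^{-1})$ problem with constraint $H(y+\varepsilon\bar y)=s+\varepsilon\bar s$ so that Lemma \ref{lemma:asympto-Beps} gives $\dot H(0)=\Delta[B_k^{-1};y,\bar y,s,\bar s]$, and then differentiate the inversion $B(\varepsilon)=H(\varepsilon)^{-1}$. Your version just makes explicit a few details (the equivalence of the constraints, the positivity hypothesis, and the differentiability supplied by Theorem \ref{theorem:existence-differentiability}) that the paper leaves implicit.
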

\begin{proof}
 Let $H(\varepsilon)$ be the optimal solution of 
  \begin{align*}
  \min_{H\in\mathrm{PD}(n)} D_V(H,B_k^{-1})\quad\subto 
   H(y+\varepsilon \bar{y})=s+\varepsilon \bar{s}
 \end{align*}
 then, clearly $B(\varepsilon)=H(\varepsilon)^{-1}$ holds. 
 Thus we have
\begin{align*}
\Gamma[B_k;s,\bar{s},y,\bar{y}]
 &=
 \dot{B}(0)
 = -H(0)^{-1}\dot{H}(0)H(0)^{-1}
 = -B(0)\Delta[B_k^{-1};y,\bar{y},s,\bar{s}]B(0), 
\end{align*}
where $\dot{H}(0)=\Delta[B_k^{-1};y,\bar{y},s,\bar{s}]$ is applied. 
\end{proof}
We show another lemma which is useful to prove that the gross error sensitivity 
diverges to infinity. 
\begin{lemma}
 \label{lemma:B_i-existence-lemma}
 Suppose $n\geq k+3$ for non-negative integers $n$ and $k$. 
 For any set of vectors $s,y,y_1\ldots,y_k\in\Real^n$ 
 such that $s^\top y>0$ and any positive real number $d$, there 
 exists a sequence $\{B_i\}_{i=1}^\infty\subset\mathrm{PD}(n)$ satisfying the following
 three conditions: 
 \begin{enumerate}
  \item The equalities $B_iy=s$ 
	and $B_iy_m=B_jy_m$ hold for all $i,j\geq 1$ and $m=1,\ldots,k$. 
  \item $\det(B_i)=d$ for all $i\geq 1$. 
  \item $\lim_{i\rightarrow\infty}\|B_i\|_F=\infty$. 
 \end{enumerate}
\end{lemma}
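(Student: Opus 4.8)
Here is a plan for proving Lemma~\ref{lemma:B_i-existence-lemma}.

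The plan is to exploit the fact that the secant-type relations only constrain the behaviour of $B_i$ on the span of the prescribed vectors, and that this span has large codimension, so the ``blow-up'' of $B_i$ can be hidden in a subspace on which nothing is prescribed. Set $W=\mathrm{span}\{y,y_1,\dots,y_k\}$ and $r=\dim W$. Since $s^\top y>0$ forces $y\neq0$, we have $1\le r\le k+1$, and the hypothesis $n\ge k+3$ gives $m:=n-r\ge n-(k+1)\ge 2$. First I would fix an orthonormal basis of $\Real^n$ whose first $r$ vectors span $W$, so that every symmetric matrix takes the block form
\begin{align*}
 B=\begin{pmatrix} B_{11} & B_{12}\\ B_{12}^\top & B_{22}\end{pmatrix},
\end{align*}
with $B_{11}$ of size $r\times r$, $B_{22}$ of size $m\times m$, and $B_{12}$ the $r\times m$ coupling block. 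The key observation is that for $w\in W$, i.e.\ $w=(\omega,0)$, one has $Bw=(B_{11}\omega,\,B_{12}^\top\omega)$, so the blocks $B_{11}$ and $B_{12}$ are completely determined by the restriction $B|_W$; in particular, any two symmetric matrices agreeing on $W$ share the blocks $B_{11},B_{12}$ and differ only in $B_{22}$. Hence it suffices to produce a family $\{B_i\}$ with one fixed, $i$-independent choice of $B_{11},B_{12}$, whose common value on $y$ equals $s$, and with $B_{22}$ varying: the first condition is then automatic, the common value on each $y_m=(\omega_m,0)$ being $(B_{11}\omega_m,B_{12}^\top\omega_m)$, and it does not matter what that value is.

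Next I would pin down the common blocks. Writing $y=(\eta,0)$ and $s=(\sigma,\tau)$ in the adapted coordinates, note $\eta\neq0$ and $\sigma^\top\eta=s^\top y>0$. I would take
\begin{align*}
 T_1=\frac{\sigma\sigma^\top}{\sigma^\top\eta}+\Bigl(I_r-\frac{\eta\eta^\top}{\eta^\top\eta}\Bigr),
\end{align*}
which is symmetric, satisfies $T_1\eta=\sigma$, and is positive definite (if $x^\top T_1x=0$ then $x^\top\sigma=0$ and $x\parallel\eta$, which forces $x=0$ since $\sigma^\top\eta>0$), and take $T_2=\tau\eta^\top/\|\eta\|^2$, the $m\times r$ map with $T_2\eta=\tau$. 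Put $B_{11}=T_1$, $B_{12}=T_2^\top$. Then for any symmetric $m\times m$ block $G$ the matrix $B=\begin{pmatrix}T_1 & T_2^\top\\ T_2 & G\end{pmatrix}$ has $By=(T_1\eta,T_2\eta)=(\sigma,\tau)=s$, as required.

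Finally I would choose the free block so as to fix the determinant while letting the norm diverge. By the Schur complement identity, $\det B=\det(T_1)\,\det\!\bigl(G-T_2T_1^{-1}T_2^\top\bigr)$, and $B\in\mathrm{PD}(n)$ iff $T_1$ is positive definite and $G-T_2T_1^{-1}T_2^\top$ is positive definite. Setting $K=T_2T_1^{-1}T_2^\top$ (a fixed positive semidefinite $m\times m$ matrix) and $d'=d/\det T_1>0$, define $G_i=K+R_i$ with
\begin{align*}
 R_i=\mathrm{diag}\bigl(i,\ d'/i,\ 1,\dots,1\bigr)\in\Real^{m\times m},
\end{align*}
which is legitimate precisely because $m\ge2$. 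Then $R_i$ is positive definite, so $B_i:=\begin{pmatrix}T_1 & T_2^\top\\ T_2 & G_i\end{pmatrix}\in\mathrm{PD}(n)$; moreover $\det B_i=\det(T_1)\det(R_i)=\det(T_1)\cdot d'=d$; and, since $B_{11},B_{12}$ do not depend on $i$,
\begin{align*}
 \|B_i\|_F^2=\|T_1\|_F^2+2\|T_2\|_F^2+\|G_i\|_F^2,\qquad \|G_i\|_F\ge\|R_i\|_F-\|K\|_F\to\infty\ \ (i\to\infty),
\end{align*}
so $\|B_i\|_F\to\infty$. This establishes the three properties.

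The construction is elementary once the block picture is in place, so I do not expect a serious obstacle; the only steps needing a (short) argument are the existence of the positive definite $T_1$ with $T_1\eta=\sigma$ — handled by the explicit BFGS-type formula above — and the bookkeeping that the action on $W$ genuinely fixes $B_{11},B_{12}$, so that the prescribed relations hold for the whole family without further effort. The hypothesis $n\ge k+3$ enters in exactly one place: it guarantees $m=n-\dim W\ge2$, i.e.\ that $W^\perp$ has two free directions, one to carry a diagonal entry to infinity and a second to absorb the resulting change of determinant.
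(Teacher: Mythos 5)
Your proof is correct, and every step checks out: $T_1=B^{BFGS}[I_r;\eta,\sigma]$ is indeed positive definite with $T_1\eta=\sigma$, the block structure makes condition 1 automatic, and the Schur-complement bookkeeping for positive definiteness, determinant and Frobenius norm is all right. The underlying idea is the same as the paper's --- the hypothesis $n\ge k+3$ buys two directions unconstrained by the prescribed images, one used to send a parameter to infinity and one to compensate the determinant --- but the implementation differs. The paper starts from a single base matrix $\bar{B}=B^{BFGS}[I;s,y]$ and perturbs it multiplicatively by congruence, $B(a)=\bar{B}^{1/2}(I+ap_1p_1^\top+bp_2p_2^\top)\bar{B}^{1/2}$ with $p_1,p_2$ orthogonal to $\bar{B}^{1/2}s$ and the $\bar{B}^{1/2}y_m$, choosing $b=\frac{d/\det\bar{B}}{1+a}-1$ so that $(1+a)(1+b)$ is constant; positive definiteness then needs a separate small argument ($b>-1$ plus the determinant sign), and the invariance of $B(a)y_m$ needs the explicit orthogonality conditions. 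Your additive block construction relative to $W\oplus W^\perp$ makes the invariance of $B_iy_m$ structural (it only sees the fixed blocks $B_{11},B_{12}$) and delegates positive definiteness and the determinant entirely to the free block via the Schur complement, at the mild cost of setting up the adapted basis. Both proofs use the dimension count in exactly the same place; yours is arguably a bit more transparent about \emph{why} two free directions suffice, while the paper's congruence trick keeps everything in terms of the BFGS base matrix. One incidental point in your favour: you prove the normalization $B_iy=s$ exactly as stated in the lemma, whereas the paper's own proof writes the condition as $B(a)s=y$ (harmless, since the roles of $s$ and $y$ are symmetric under $s^\top y>0$, but worth noting).
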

\begin{proof}
 For any $s,y\in\Real^n$ such that $s^\top y>0$ there exists
 $\bar{B}\in\mathrm{PD}(n)$ satisfying $\bar{B}s=y$. 
 Indeed, for the $n$ by $n$ identity matrix $I$, 
 the matrix $\bar{B}=B^{BFGS}[I;s,y]\in\mathrm{PD}(n)$ is well-defined and satisfies 
 $\bar{B}s=y$. 
 When $n\geq k+3$ holds, there exist two unit vectors $p_1, p_2\in\Real^n$ satisfying 
 $p_1^\top p_2=0$ and 
  \begin{align*}
  &p_1^\top (\bar{B}^{1/2} s)=0,\quad
  p_1^\top (\bar{B}^{1/2} y_m)=0,  \quad m=1,\ldots,k,\\
  &p_2^\top (\bar{B}^{1/2} s)=0,\quad 
  p_2^\top (\bar{B}^{1/2} y_m)=0,  \quad m=1,\ldots,k. 
 \end{align*}
 We will show that the matrix 
\begin{align*}
 B(a)=\bar{B}^{1/2}(I+a p_1p_1^\top+bp_2p_2^\top)\bar{B}^{1/2}
\end{align*}
 with 
 \begin{align}
  \label{eqn:lemma-ab-condition}
  a>0,\qquad b=\frac{d/\det(\bar{B})}{1+a}-1
%  c=\frac{1}{p^\top B^{-1}p}\big(\frac{d}{\det{B}}-1\big)
 \end{align}
 satisfies four conditions: $B(a)s=y,\,B(a)y_m=\bar{B}y_m,\,\det{B(a)}=d$ and 
 $B(a)\in\mathrm{PD}(n)$ for all $a>0$. 
 The first two equalities are clear from the definition of $p_1,\,p_2$ and $\bar{B}$. 
 The determinant of $B(a)$ is equal to
 \begin{align*}
  \det(B(a))=\det(\bar{B})\det(I+ap_1p_1^\top+bp_2p_2^\top)=\det(\bar{B})(1+a)(1+b)=d. 
 \end{align*}
 For any unit vector $x\in\Real^n$ we have
 \begin{align*}
  x^\top (I+ap_1p_1^\top+bp_2p_2^\top) x
  &=
  1+a(p_1^\top x)^2+b(p_2^\top x)^2\\
  &\geq 
  1+b(p_2^\top x)^2\qquad(\because a>0)\\
  &\geq 
  1-(p_2^\top x)^2\qquad\ (\because b>-1)\\
  &\geq 0\qquad (\text{Schwarz inequality})
 \end{align*}
 and in addition the determinant of $(I+ap_1p_1^\top+bp_2p_2^\top)$ is equal to
 $d/\det(\bar{B})>0$.   
 Thus $B(a)$ is positive definite. 
 Let $\lambda_1(a)$ be the maximum eigenvalue of $B(a)$, and 
 $x$ be a unit vector defined by $x=\bar{B}^{-1/2}p_1/\|\bar{B}^{-1/2}p_1\|$. 
% such that 
% $p_1^\top(\bar{B}^{1/2}x)\neq 0,\,p_2^\top(\bar{B}^{1/2}x)=0$. 
 Then in terms of the maximum eigenvalue of $B(a)$ we have
 \begin{align*}
  \|B(a)\|_F\geq \lambda_1(a)\geq 
  x^\top \bar{B}x+\frac{a}{p_1^\top \bar{B}^{-1}p_1}. % (p_1^\top \bar{B}^{1/2}x)^2. 
 \end{align*}
 Then $\|B(a)\|_F$ tends to infinity when $a$ tends to infinity. 
 Thus the sequence defined by 
 \begin{align}
  B_i=B(i),\quad i=1,2,3,\ldots
 \end{align}
 satisfies the conditions of the lemma. 
\end{proof}

\subsection{Proof of Theorem \ref{theorem:BFGS-B-robustness}}
\label{appendix:proof-V-BFGS-B}
Let $B(\varepsilon)$ be the optimal solution of \eqref{eqn:perturbed-V-BFGS}. 
Under the inexact line search, the influence function $\dot{B}(0)$ for $V$-BFGS-B is equal
to $\Delta[B_k;s,s,y,\bar{y}]$ which is defined in Lemma \ref{lemma:asympto-Beps}. 
%, since $s=\bar{s}$ holds for the inexact line search. 
Thus we have
\begin{align}
 \dot{B}(0) =
\frac{(\bar{y}-y)^\top s}{s^\top y}
 \frac{\beta(\det{B(0)})}{1-(n-1)\beta(\det{B(0)})}
 \bigg[B(0)-\frac{yy^\top}{s^\top y}\bigg]
 +\frac{y\bar{y}^\top+\bar{y}y^\top}{s^\top y}
 -\frac{(y+\bar{y})^\top s}{(s^\top y)^2}yy^\top. 
 \label{eqn:BFGS-B-perturb-xx}
\end{align}
 If $(\bar{y}-y)^\top s=0$ holds for any $\bar{y}\in{\mathcal{Y}}$, 
 the potential does not affect the norm of the influence function, 
 because the first term of the above expression vanishes. 
 Thus, clearly $V(z)=-\log(z)$ is an optimal potential. 
 Below we assume $(\bar{y}-y)^\top s\neq 0$ for a vector $\bar{y}\in{\mathcal{Y}}$. 
 Suppose that $B_k$ satisfies $B_ks=y$. Then $B(0)=B_k$  holds, and the triangle
 inequality yields that 
 \begin{align*}
  \|\dot{B}(0)\|_F&=\|\Delta[B_k;s,s,y,\bar{y}]\|_F\\
  &\geq 
  \bigg|\frac{(\bar{y}-y)^\top s}{s^\top y}\bigg|
  \bigg|\frac{\beta(\det{B_k})}{1-(n-1)\beta(\det{B_k})}\bigg|
  \bigg(
  \|B_k\|_F-\big\|\frac{yy^\top}{s^\top y}\big\|_F  
  \bigg)  \\
  &\phantom{=}
  -\bigg\|
  \frac{y\bar{y}^\top+\bar{y}y^\top}{s^\top y}
  -\frac{(\bar{y}+y)^\top s}{(s^\top y)^2}yy^\top
  \bigg\|_F. 
 \end{align*}
 If $\beta(z)$ is not the null function, there exists $d>0$ 
 such that $\beta(d)\neq 0$. Lemma \ref{lemma:B_i-existence-lemma} with $k=0$ implies that
 for $n\geq 3$ there exists a sequence $\{\bar{B}_i\}\subset\mathrm{PD}(n)$ satisfying 
 $\bar{B}_is=y,\,\det{\bar{B}_i}=d$ for all $i$ and
 $\lim_{i\rightarrow\infty}\|\bar{B}_i\|_F=\infty$. 
 Hence 
 \begin{align*}
  \lim_{i\rightarrow\infty}\|\Delta[\bar{B}_i;s,s,y,\bar{y}]\|_F=\infty
 \end{align*}
 holds, and then we obtain
\begin{align*}
  \sup\{~\|\Delta[B_k;s,s,y,\bar{y}]\|_F
  ~|~B_k\in\mathrm{PD}(n),\,\bar{y}\in{\mathcal{Y}}~\} 
 =\infty.
\end{align*}
On the other hand, if $\beta(z)=0$ for all $z>0$, we obtain
 \begin{align*}
  \max_{B_k,\bar{y}}\|\Delta[B_k;s,s,y,\bar{y}]\|_F
  =
  \max_{\bar{y}\in {\mathcal{Y}}}
  \bigg\|
  \frac{y\bar{y}^\top+\bar{y}y^\top}{s^\top y}
  -\frac{(\bar{y}+y)^\top s}{(s^\top y)^2}yy^\top
  \bigg\|_F<\infty, 
 \end{align*}
 since ${\mathcal{Y}}$ is bounded. As the result, the potential $V$ such that
 $\beta_V=0$ minimizes the gross error sensitivity. 
 The condition $\beta_V=0$ leads to $V(z)=-\log(z)$ up to a constant factor.

\subsection{Proof of Theorem \ref{theorem:DFP-B-robustness}}
\label{appendix:proof-V-DFP-B}
Let $B(\varepsilon)$ be the optimal solution of \eqref{eqn:perturbed-V-DFP}. 
Under the inexact line search, 
the influence function $\dot{B}(0)$ for $V$-DFP-B is equal to $\Gamma[B_k;s,s,y,\bar{y}]$
which is defined in Lemma \ref{lemma:asympto-Heps}. 
%since $s=\bar{s}$ holds for the inexact line search. 

First, we study the case that $\beta(z)$ is not the null function. 
For the matrix $B_k$ such that $B_ks=y$, we have $B(0)=B_k$. 
Using Lemma \ref{lemma:asympto-Heps} for $B(0)=B_k$, we have 
\begin{align*}
 \phantom{=}\dot{B}(0)
 &=-B_k\Delta[B_k^{-1};y,\bar{y},s,s]B_k\\
 &=% -\frac{(y-\bar{y})^\top s}{s^\top y}\cdot
 \frac{(\bar{y}-y)^\top s}{s^\top y}\cdot
 \frac{\beta(\det(B_k)^{-1})}{1-(n-1)\beta(\det(B_k)^{-1})}
 \bigg[B_k-\frac{yy^\top}{s^\top y}\bigg]
 +\frac{y\bar{y}^\top+\bar{y}y^\top}{s^\top y}
 -\frac{(y+\bar{y})^\top s}{(s^\top y)^2}yy^\top, 
\end{align*}
in which the equality $B_ks=y$ is used. 
The above expression is almost same as \eqref{eqn:BFGS-B-perturb-xx} with $B(0)=B_k$, and
thus the same proof works to obtain
\begin{align*}
 \sup\{~\|\dot{B}(0)\|_F~|~B_k\in\mathrm{PD}(n),\,\bar{y}
 \in{\mathcal{Y}}~\}=\infty. 
\end{align*}

Next, we study the case that $\beta$ is the null function, that is, $\beta(z)=0$. 
Then, $V(z)=-\log(z)$ and $\nu(z)=1$ hold. 
Let $B_k$ be a positive definite matrix which does not necessarily satisfy $B_ks=y$. 
Then we obtain 
\begin{align*}
\dot{B}(0)
 &=-B(0)\Delta[B_k^{-1};y,\bar{y},s,s]B(0)\\
 &=
 -\frac{(y-\bar{y})^\top s}{(s^\top y)^2} yy^\top +
 \frac{B(0)B_k^{-1}(y\bar{y}^\top+\bar{y}y^\top) B_k^{-1}B(0)}{y^\top B_k^{-1}y}\\
 &\phantom{=}-\frac{2\bar{y}^\top B_k^{-1}y}{(y^\top B_k^{-1}y)^2}
 B(0)B_k^{-1}yy^\top B_k^{-1}B(0)
\end{align*}
in which we used $B(0)s=y$. For $\beta=0$, the updated matrix $B(0)$ is equal to
$B^{DFP}[B_k;s,y]$ and thus, we have
\begin{align}
 \label{eqn:appendix-B_DFP0_x_invB_k}
 B(0)B_k^{-1}=
 I-\frac{B_ksy^\top B_k^{-1}+ys^\top}{s^\top y}+\frac{s^\top B_k s}{(s^\top y)^2} yy^\top B_k^{-1}
 +\frac{1}{s^\top y}yy^\top B_k^{-1}. 
\end{align}
Let $\bar{B}\in\mathrm{PD}(n)$ and $c$ be a positive real number, and we define
$t=\bar{B}s$, then for $B_k=c\bar{B}$ some calculation yields 
\begin{align*}
\dot{B}(0)
 =-B(0)\Delta[(c\bar{B})^{-1};y,\bar{y},s,s]B(0)
 = -\frac{c}{s^\top y} Z
 -\frac{(y+\bar{y})^\top s}{(s^\top y)^2}yy^\top
 +\frac{y\bar{y}^\top+\bar{y}y^\top}{s^\top y},
\end{align*}
where $Z$ is defined by 
\begin{align*}
 Z&= \bigg(t-\frac{s^\top t}{s^\top y}y\bigg)
 \bigg(\bar{y}-\frac{s^\top \bar{y}}{s^\top y}y\bigg)^\top
 +\bigg(\bar{y}-\frac{s^\top \bar{y}}{s^\top y}y\bigg)
 \bigg(t-\frac{s^\top t}{s^\top y}y\bigg)^\top. 
\end{align*}
Since ${\mathcal{Y}}$ contains an open subset, 
there exists a vector $\bar{y}\in{\mathcal{Y}}$ which is linearly independent to $y$. 
Clearly there exists $\bar{B}\in\mathrm{PD}(n)$ such that three vectors,
$t=\bar{B}s,\,\bar{y}$ and $y$, are linearly independent.  
For such choice, $Z$ is not the null matrix, and the equality
 \begin{align*}
 \lim_{c\rightarrow\infty}
 \|B(0)\Delta[(c\bar{B})^{-1};y,\bar{y},s,s]B(0)\|_F=\infty
\end{align*}
holds. As the result, even for the standard DFP formula, we have
\begin{align*}
 \sup\big\{\|\dot{B}(0)\|_F~|~ B\in\mathrm{PD}(n),\,
 \bar{y}\in{\mathcal{Y}}\big\}=\infty. 
\end{align*}
In summary, for all $V$-DFP update for the Hessian approximation, the gross error 
sensitivity defined in Theorem \ref{theorem:DFP-B-robustness} is equal to infinity.

\subsection{Proof of Theorem \ref{theorem:BFGS-H-robustness}}
\label{appendix:proof-V-BFGS-H}
Let $H(\varepsilon)$ be the optimal solution of \eqref{eqn:perturbed-V-BFGS-H}. 
Under the inexact line search, the influence function $\dot{H}(0)$ for $V$-BFGS-H is equal
to $\Gamma[H_k;y,\bar{y},s,s]$ which is defined in Lemma \ref{lemma:asympto-Heps}. 
%, since $s=\bar{s}$ holds for the inexact line search. 

First, we study the case that $\beta(z)$ is not the null function. 
Suppose $\beta(d)\neq 0$. If $H_k$ satisfies $H_ky=s$, then we have $H_k=H(0)$. 
Using Lemma \ref{lemma:asympto-Beps} and Lemma \ref{lemma:asympto-Heps} for the matrix
$H_k$ such that $H_ky=s$, we obtain 
\begin{align}
 &\phantom{=}\dot{H}(0) \nonumber\\
 &= -H_k \Delta[H_k^{-1};s,s,y,\bar{y}]H_k\nonumber\\
 &=
 \frac{(y-\bar{y})^\top s}{s^\top y}
 \frac{\beta(\det(H_k)^{-1})}{1-(n-1)\beta(\det(H_k)^{-1})}
 \bigg[H_k-\frac{ss^\top}{s^\top y}\bigg]
 -\frac{H_k\bar{y}s^\top +s\bar{y}^\top H_k}{s^\top y}
 +\frac{(y+\bar{y})^\top s}{(s^\top y)^2}ss^\top. 
 \label{eqn:perturb-BFGS-H-beta_nonzero}
\end{align}
Lemma \ref{lemma:B_i-existence-lemma} with $k=1$ implies that
for $n\geq 4$ there exists a sequence $\{\bar{H}_i\}\subset\mathrm{PD}(n)$ satisfying the
following conditions:
$\bar{H}_iy=s$ and $(\det{\bar{H}_i})^{-1}=d$ for all $i\geq 1$; 
$\bar{H}_i\bar{y}=\bar{H}_j\bar{y}$ for all $i,j\geq 1$;
$\lim_{i\rightarrow\infty}\|\bar{H}_i\|_F=\infty$. 
We define $\bar{t}=\bar{H}_i\bar{y}$ which does not depend on $i$. 
Then for $H_k=\bar{H}_i$ we have 
\begin{align*}
 &\phantom{=}\|\dot{H}(0)\|_F\\
 &=\|\bar{H}_i \Delta[\bar{H}_i^{-1},s,s,y,\bar{y}]\bar{H}_i\|_F\\
 &\geq
 \left|\frac{(y-\bar{y})^\top s}{s^\top y} \right| 
 \left| \frac{\beta(d)}{1-(n-1)\beta(d)}\right| 
 \bigg(\|\bar{H}_i\|-\big\|\frac{ss^\top}{s^\top y}\big\|\bigg)
 -
 \bigg\| \frac{\bar{t}s^\top +s\bar{t}^\top}{s^\top y}
 -\frac{(y+\bar{y})^\top s}{(s^\top y)^2}ss^\top \bigg\|. 
\end{align*}
 Hence the equality 
 \begin{align*}
  \lim_{i\rightarrow\infty}
  \|\bar{H}_i \Delta[\bar{H}_i^{-1},s,s,y,\bar{y}]\bar{H}_i\|_F=\infty
 \end{align*}
 holds, and thus we obtain
\begin{align*}
\sup\big\{~ \|\dot{H}(0)\|_F ~|~H_k\in\mathrm{PD}(n),\,\bar{y}\in{\mathcal{Y}}~\big\}
 =\infty. 
\end{align*}

Next, we study the case that $\beta$ is the null function, that is, $\beta(z)=0$. 
Then, $V(z)=-\log(z)$ and $\nu(z)=1$ holds. For $H_k$ such that $H_ky=s$, we have
\begin{align}
 \label{eqn:standard-BFGS-H-influence-func}
 \dot{H}(0)=
 -\frac{H_k\bar{y}s^\top +s\bar{y}^\top H_k}{s^\top y}
 +\frac{(y+\bar{y})^\top s}{(s^\top y)^2}ss^\top. 
\end{align}
Let $\bar{H}_0\in\mathrm{PD}(n)$ be a matrix satisfying $\bar{H}_0y=s$. 
Let $p_1\in\Real^n$ and $\bar{y}\in{\mathcal{Y}}$ be vectors satisfying 
$p_1^\top \bar{H}_0^{1/2}y=0$ and $p_1^\top \bar{H}_0^{1/2}\bar{y}\neq 0$. 
For $n\geq 4$, the existence of $p_1$ and $\bar{y}$ is guaranteed by the assumption on
${\mathcal{Y}}$. Indeed, there exists $\bar{y}\in{\mathcal{Y}}$ such that $\bar{y}$ and
$y$ are linearly independent. We now define the matrix $\bar{H}_i\in\mathrm{PD}(n)$ by
\begin{align*}
 \bar{H}_i=\bar{H}_0^{1/2}(I+i\cdot p_1p_1^\top)\bar{H}_0^{1/2},\quad i=0,1,2,\ldots
\end{align*}
Then we have
\begin{align*}
 &\bar{H}_iy=s, \quad 
 \bar{H}_i\bar{y}= z+i\cdot u, 
\end{align*}
where 
$z=\bar{H}_0\bar{y}$ and 
$u= (p_1^\top\bar{H}_0^{1/2}\bar{y})\bar{H}_0^{1/2}p_1\neq 0$. 
Substituting $H_k=\bar{H}_i$ into \eqref{eqn:standard-BFGS-H-influence-func}, we obtain 
\begin{align*}
 \dot{H}(0)
 &=
 -i\cdot\frac{us^\top +su^\top}{s^\top y}
 +\frac{(y+\bar{y})^\top s}{s^\top y}ss^\top
 -\frac{zs^\top +sz^\top}{s^\top y}. 
\end{align*}
This implies that 
\begin{align*}
 \lim_{i\rightarrow\infty}\|\bar{H}_i\Delta[\bar{H}_i^{-1};s,s,y,\bar{y}]\bar{H}_i\|=\infty. 
 % \|\dot{H}(0)\|_F
\end{align*}
for $\beta=0$. Hence we obtain
\begin{align*}
  \sup\big\{\|\dot{H}(0)\|_F ~|~H_k\in\mathrm{PD}(n),\,\bar{y}\in{\mathcal{Y}}\big\} 
 =\infty
\end{align*}
even for the standard BFGS update of the inverse Hessian approximation.

\subsection {Proof of Theorem \ref{theorem:DFP-H-robustness}}
\label{appendix:proof-V-DFP-H}
Let $H(\varepsilon)$ be the optimal solution of \eqref{eqn:perturbed-V-DFP-H}. 
Under the inexact line search, the influence function $\dot{H}(0)$ for $V$-DFP-H is equal
to $\Delta[H_k;y,\bar{y},s,s]$ which is defined in Lemma \ref{lemma:asympto-Beps}. 
%, since $s=\bar{s}$ holds for the inexact line search. 

First, we study the case that $\beta(z)$ is not the null function. 
Suppose $\beta(d)\neq 0$ for $d>0$. If $H_k$ satisfies $H_ky=s$, we have $H_k=H(0)$. Using
Lemma \ref{lemma:asympto-Beps} for the matrix $H_k$ such that $H_ky=s$, we obtain
\begin{align*}
 &\phantom{=}\dot{H}(0)\\
 &=
 \Delta[H_k;y,\bar{y},s,s]\\
 &=
 \frac{(y-\bar{y})^\top s}{s^\top y}
 \frac{\beta(\det{H_k})}{1-(n-1)\beta(\det{H_k})}
 \bigg[H_k-\frac{ss^\top}{s^\top y}\bigg]
 -\frac{H_k\bar{y}s^\top +s\bar{y}^\top H_k}{s^\top y}
 +\frac{(y+\bar{y})^\top s}{(s^\top y)^2} ss^\top. 
\end{align*}
The above expression is almost same as \eqref{eqn:perturb-BFGS-H-beta_nonzero}, and thus
the same proof remains valid to obtain
\begin{align*}
 \sup\big\{
 \|\dot{H}(0)\|_F~|~H_k\in\mathrm{PD}(n),\,\bar{y}\in{\mathcal{Y}}
 \big\}=\infty. 
\end{align*}
%when $\beta$ is not null function. 

Next, we consider the case that $\beta$ is the null function. 
Then $V(z)=-\log(z)$ and $\nu(z)=1$ hold. 
For $H_k$ such that $H_ky=s$, we have
\begin{align*}
 \dot{H}(0)
 =\Delta[H_k;y,\bar{y},s,s]
 =-\frac{H_k\bar{y}s^\top +s\bar{y}^\top H_k}{s^\top y}
 +\frac{(y+\bar{y})^\top s}{(s^\top y)^2}ss^\top. 
\end{align*}
This is the same as the influence function of
\eqref{eqn:standard-BFGS-H-influence-func}, and thus, we obtain
\begin{align*}
 \sup\big\{\|\dot{H}(0)\|_F~|~H_k\in\mathrm{PD}(n),\,
 \bar{y}\in{\mathcal{Y}}\big\}=\infty.
\end{align*}

\bibliographystyle{plain}
%\bibliography{optimization}

\end{document}